\newcounter{tablerow}
\newcommand{\RowNumber}{\refstepcounter{tablerow}(\arabic{tablerow})}
\newcommand{\vmark}{\ding{51}}%
\newcommand{\xmark}{\ding{55}}%
\newtheorem{GlobalTheorem}{Theorem}
\newtheorem{guarantee}{Guarantee}
\newtheorem{theorem}{Theorem}[section]
\newtheorem{lemma}[theorem]{Lemma}
\newtheorem{observation}[theorem]{Observation}
\newtheorem{corollary}[theorem]{Corollary}
\newtheorem{proposition}[theorem]{Proposition}
\newcommand{\Integers}{\mathbb{Z}}
\newcommand{\Reals}{\mathbb{R}}
\providecommand{\Pr}{}\renewcommand{\Pr}{\mathbb{P}}
\providecommand{\Ex}{}\renewcommand{\Ex}{\mathbb{E}}
\providecommand{\argmax}{}\renewcommand{\argmax}{\operatorname{argmax}}
\newcommand{\MaxTwo}{\operatorname{max2}}
\newcommand{\MaxOne}{\operatorname{max}}
\newcommand{\Busy}{\mathcal{B}}
\newcommand{\Channel}{\operatorname{channel}}
\newcommand{\Peer}{\operatorname{peer}}
\newcommand{\Channels}{\mathcal{C}}
\newcommand{\OperationalNodes}{\mathcal{OP}}
\newcommand{\OperationalChannels}{\mathcal{OC}}
\newcommand{\Next}{\Phi}
\newcommand{\AS}{\operatorname{AS}}
\providecommand{\A}{}\renewcommand{\A}{\mathcal{A}}
\providecommand{\T}{}\renewcommand{\T}{\mathcal{T}}
\providecommand{\M}{}\renewcommand{\M}{\mathcal{M}}
\newcommand{\Alg}{\ensuremath{\mathtt{Alg}}}
\newcommand{\AlgTF}{\ensuremath{\Alg_{\mathrm{TF}}}}
\newcommand{\ProcCreateChannel}{\mathtt{CreateChannel}}
\newcommand{\Adv}{\ensuremath{\mathtt{Adv}}}
\newcommand{\ProcCheckEndPhase}{\mathtt{CheckEndPhase}}
\newcommand{\ProcFormTeams}{\mathtt{FormTeams}}
\newcommand{\ProcBeginNewPhase}{\mathtt{BeginNewPhase}}
\newcommand{\Object}{\mathtt{obj}}
\newcommand{\varTokens}{\mathtt{tok}}
\newcommand{\varBusyTokens}{\mathtt{busy}\text{-}\mathtt{toks}}
\newcommand{\varMediators}{\mathtt{meds}}
\newcommand{\varChannel}{\mathtt{chan}}
\newcommand{\varBusyAcked}{\mathtt{busy}\text{-}\mathtt{acked}}
\newcommand{\varDiff}{\mathtt{diff}}
\newcommand{\varPhaseType}{\mathtt{phase}\text{-}\mathtt{type}}
\newcommand{\varAwaitingResponse}{\mathtt{awaiting}\text{-}\mathtt{resp}}
\newcommand{\varDelayingResponse}{\mathtt{delaying}\text{-}\mathtt{resp}}
\newcommand{\msgTokensPlease}{\mathsf{TokensPlease}}
\newcommand{\msgWaiting}{\mathsf{Waiting}}
\newcommand{\msgTransport}{\mathsf{Transport}}
\newcommand{\msgNoTransport}{\mathsf{NoTransport}}
\newcommand{\msgGoOn}{\mathsf{GoOn}}
\newcommand{\msgBusy}{\mathsf{Busy}}
\newcommand{\msgBusyAck}{\mathsf{BusyAck}}
\newcommand{\msgTokensUpdate}{\mathsf{TokensUpdate}}
\newcommand{\msgNotBusy}{\mathsf{NotBusy}}
\newcommand{\msgChannel}{\mathsf{Channel}}
\newcommand{\msgNoChannel}{\mathsf{NoChannel}}
\newcommand{\msgChannelAck}{\mathsf{ChannelAck}}
\newcommand{\msgShortB}{\mathsf{B}}
\newcommand{\msgShortBA}{\mathsf{BA}}
\newcommand{\msgShortTU}{\mathsf{TU}}
\newcommand{\msgShortNB}{\mathsf{NB}}
\newcommand{\msgShortC}{\mathsf{C}}
\newcommand{\msgShortNC}{\mathsf{NC}}
\providecommand{\True}{}\renewcommand{\True}{\mathsf{true}}
\providecommand{\False}{}\renewcommand{\False}{\mathsf{false}}
\newcommand{\Center}{\mathsf{center}}
\newcommand{\Arm}{\mathsf{arm}}
\newcommand{\Paragraph}[1]{\vskip3mm\par\noindent\textbf{#1} }
\newcommand{\Subparagraph}[1]{\vskip1mm\par\textit{#1} }
\newcommand{\Sect}{Sec.}
\newcommand{\Thm}{Thm.}
\newcommand{\Prop}{Prop.}
\newcommand{\Lem}{Lem.}
\newcommand{\Obs}{Obs.}
\newcommand{\Grnt}{Grnt.}
\title{Team Formation and Applications}
\author[1]{\fnm{Yuval} \sur{Emek} \orcid{https://orcid.org/0000-0002-3123-3451}}\email{yemek@technion.ac.il}
\author[1]{\fnm{Shay} \sur{Kutten} \orcid{https://orcid.org/0000-0003-2062-6855}}\email{kutten@technion.ac.il}
\author*[1]{\fnm{Ido} \sur{Rafael} \orcid{https://orcid.org/0000-0003-4923-4660}}\email{ido.rafael@campus.technion.ac.il}
\author[2]{\fnm{Gadi} \sur{Taubenfeld} \orcid{https://orcid.org/0000-0003-3070-5370}}\email{tgadi@runi.ac.il}
\affil[1]{\orgdiv{Faculty of Data and Decision Sciences}, \orgname{Technion --- Israel Institute of Technology}, \orgaddress{\street{Technion City}, \city{Haifa}, \postcode{3200003}, \country{Israel}}}
\affil[2]{\orgdiv{School of Computer Science}, \orgname{Reichman University}, \orgaddress{\street{8 University Street}, \city{Herzliya}, \postcode{4610101}, \country{Israel}}}
\abstract{A novel long-lived distributed problem, called \textit{Team Formation (TF)},
is introduced together with a message- and time-efficient randomized algorithm.
The problem is defined over the asynchronous model with a complete
communication graph, using bounded size messages, where a certain fraction of
the nodes may experience a generalized, strictly stronger, version of initial
failures.
The goal of a TF algorithm is to assemble tokens injected by the environment,
in a distributed manner, into \emph{teams} of size $\sigma$, where $\sigma$ is
a parameter of the problem.

The usefulness of TF is demonstrated by using it to derive efficient algorithms for many distributed problems.
Specifically, we show that various (one-shot as well as long-lived) distributed problems reduce to TF.
This includes well-known (and extensively studied) distributed problems such as several versions of leader election and threshold detection.
For example, we are the first to break the linear message complexity bound for asynchronous implicit leader election.
We also improve the time complexity of message-optimal algorithms for asynchronous explicit leader election.
Other distributed problems that reduce to TF are new ones, including matching players in online gaming platforms, a generalization of gathering, constructing a perfect matching in an induced
subgraph of the complete graph, and more.
To complement our positive contribution, we establish a tight lower bound on the message complexity of TF algorithms.}
\keywords{asynchronous message-passing,
complete communication graph,
initial failures,
leader election,
matching}
\begin{document}

\maketitle

\section*{Funding}

\begin{description}
\item[Yuval Emek:] Partially supported by an Israel Science Foundation (ISF) grant 730/24 and by the Grand Technion Energy Program (GTEP).

\item[Shay Kutten:] Partially supported by an Israel Science Foundation (ISF) grant 1346/22,
by the Grand Technion Energy Program (GTEP),
and by the Technion Gordon Center.
\end{description}

\section*{Acknowledgments}
We would like to thank Eviatar Procaccia for his help with the proof of Lemma~\ref{lem:average-exponentially-decaying-rvs}.

\setcounter{tocdepth}{3}
\tableofcontents

% \linenumbers

%%%%%%%%%%%%%%%%%%%%%%%%%%%%%%%%%%%%%%%%%%%%%%%%%%%%%%%%%%%%%%%%%%%%%%%%%%%%%%
\section{Introduction}
\label{sec:intro}
%%%%%%%%%%%%%%%%%%%%%%%%%%%%%%%%%%%%%%%%%%%%%%%%%%%%%%%%%%%%%%%%%%%%%%%%%%%%%%
Consider the following three problems, defined over an asynchronous
message-passing system with a complete communication graph, where a constant
fraction of the nodes may be faulty:
(P1)
electing a leader among the non-faulty nodes;
(P2)
constructing a perfect matching in the subgraph induced by a subset of the
non-faulty nodes, specified (distributively) by the input;
(P3)
assembling Dungeons \& Dragons parties from characters that arrive over time
at the non-faulty nodes so that each party includes
$3$ wizards,
$2$ paladins,
$2$ rogues,
and
$1$ monk.
On the face of it, problems (P1)--(P3) have little in common.
In particular, (P1) and (P2) are one-shot problems, whereas (P3) is long-lived;
(P1) requires global symmetry breaking, whereas in (P2) and (P3) the
symmetry breaking is local in essence.
Therefore, it may come as a surprise that all three problems reduce to a
single new problem.

In this paper, we introduce the aforementioned new problem, called \emph{team
formation}, and develop an efficient algorithm for it.
Following that, we show how our algorithm leads to algorithmic solutions for
various other problems (including (P1)--(P3)), improving their
state-of-the-art.
For example, we are the first to break the linear message complexity bound for asynchronous implicit leader election, see \Sect{}~\ref{sec:implicit}. 
We also improve the time complexity of message-optimal algorithms for asynchronous explicit leader election, see \Sect{}~\ref{sec:explicit}.
%
% For example, this paper is the first to break the linear message bound for any version of asynchronous leader election (and even manages to push this further to an algorithm tolerating some kinds of faults). See Section \ref{sec:implicit}. Another example is an improvement in the time complexity of message-optimal explicit leader election. See Section \ref{sec:explicit}.
On the negative side, we establish lower bounds on the communication demands
of TF algorithms, see \Sect{}~\ref{sec:lower-bound}.

%%%%%%%%%%%%%%%%%%%%%%%%%%%%%%%%%%%%%%%
\subsection{The Basic Setting and Problem Definition}
\label{sec:intro:problem}
%%%%%%%%%%%%%%%%%%%%%%%%%%%%%%%%%%%%%%%
Consider an \emph{asynchronous} message-passing system that consists of $n$
nodes (a.k.a.\ processors).
The communication structure is assumed to be a complete undirected graph over
the node set $V$ so that every two nodes may exchange messages with each
other.
These messages carry
$O (\log n)$
bits of information (cf.\ the CONGEST model \cite{Peleg2000book}) and are
delivered with a finite delay, where the delay of each message is determined
(individually) by an adversary.
Unless stated otherwise, it is assumed that the nodes are anonymous and that
each node distinguishes between its
$n - 1$
neighbors by means of (locally) unique \emph{port numbers}
\cite{DBLP:conf/stoc/Angluin80,
DBLP:journals/dc/HellaJKLLLSV15,
DBLP:conf/podc/YamashitaK88}.
The adversary may cause some nodes to become faulty, preventing them from
participating in the execution in accordance with a generalization of the
\emph{initial failures} model
\cite{DBLP:journals/cacm/Dijkstra65,
DBLP:journals/jacm/FischerLP85,
DBLP:journals/ijpp/TaubenfeldKM89},
as long as a sufficiently large fraction of the nodes remain
non-faulty throughout the execution.\footnote{%
The failure model considered in this paper is identical to the classic model of initial failures when restricted to one-shot problems.
As explained in the sequel, the generalization comes into play only in the context of long-lived problems.}
A precise definition of the computational model is presented in
\Sect{}~\ref{sec:preliminaries:model}.

%%%%%%%%%%%%%%%%%%%%%%%%%%%%%%%%%%%%%%%
\Paragraph{Team Formation.}
%%%%%%%%%%%%%%%%%%%%%%%%%%%%%%%%%%%%%%%
We introduce a long-lived distributed problem called \emph{team formation
(TF)}, defined over an integral \emph{team size} parameter
$2 \leq \sigma \leq n$.\footnote{%
The TF problem is well defined also for
$\sigma > n$,
however, the restriction to
$\sigma \leq n$
simplifies the discussion and is consistent with the applications presented in
\Sect{}~\ref{sec:intro:applications}.}
In a TF instance, abstract \emph{tokens} are injected into the nodes over
time, where the adversary determines the timing and location of these token
injections.
Tokens held by a node
$v \in V$
can be transported to a node
$v' \in V$
over a message sent from $v$ to $v'$.
Each token remains in the system until it is deleted, as explained below.

The correctness criterion of the TF problem is captured by the following two
conditions:
\begin{description}

\item[Safety:]
a token can be deleted from the system only as part of a \emph{team} that
consists of exactly $\sigma$ tokens, all of which are held by the same node
and deleted simultaneously;
the operation of deleting a team of tokens is referred to as \emph{forming a
team}.

\item[Liveness:]
if the system contains at least $\sigma$ tokens, then a team must be formed in
a finite time.

\end{description}
% \\
% \textbf{Safety:}
% a token can be deleted from the system only as part of a \emph{team} that
% consists of exactly $\sigma$ tokens, all of which are held by the same node
% and deleted simultaneously;
% the operation of deleting a team of tokens is referred to as \emph{forming a
% team}.
% \\
% \textbf{Liveness:}
% if the system contains at least $\sigma$ tokens, then a team must be formed in
% a finite time.
% \\
Assume without loss of generality that the system contains at most
$n^{O (1)}$
tokens at any given time.\footnote{%
Conditioned on the assumption that
$\sigma \leq n$,
the nodes themselves should hold (all together) less than
$n^{2}$
tokens since a node that holds at least $\sigma$ tokens can perform team
formation operations.
Regarding the tokens in transit, the algorithm developed in the current paper
is designed so that there are less than $\sigma$ tokens in transit over any
edge at any given time, so the total number of tokens in transit is
$O (n^{3})$.}
The tokens are assumed to be indistinguishable and any number of tokens can be
transported over a single message that simply encodes their number.

%%%%%%%%%%%%%%%%%%%%%%%%%%%%%%%%%%%%%%%
\subsection{Our Contribution}
\label{sec:intro:contribution}
%%%%%%%%%%%%%%%%%%%%%%%%%%%%%%%%%%%%%%%
Our main contribution regarding the TF problem is cast in
\Thm{}~\ref{thm:main-informal} (refer to \Thm{}\
\ref{thm:analysis:safety-and-liveness}, \ref{thm:analysis:message-load}, and
\ref{thm:reaction-time} for the formal statement).

\begin{GlobalTheorem}[slightly informal]
\label{thm:main-informal}
For every constant
$\epsilon > 0$,
there exists a randomized algorithm that solves any $n$-node TF instance whp
if at least
$\epsilon n$
nodes remain non-faulty indefinitely.\footnote{%
An event $A$ occurs \emph{with high probability (whp)} if
$\Pr (A) \geq 1 - n^{-c}$
for any (arbitrarily large) constant $c$.}
The algorithm sends
$O (\sqrt{n \log n})$
messages per token in expectation and
$O (\sqrt{n \log n} \cdot \log n)$
messages per token whp.
Moreover, if the system contains at least $\sigma$ tokens, then the algorithm
is guaranteed to form a team in
$O (\sigma + \log n)$
(asynchronous) time whp.
\end{GlobalTheorem}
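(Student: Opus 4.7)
The plan is to design the algorithm around a birthday-paradox style sampling scheme and then prove the three claims in the theorem in sequence. Each node holding at least one token would periodically select a random set of roughly $\sqrt{n \log n}$ \emph{mediator} nodes and report its token count to them via $O(\log n)$-bit messages. The key combinatorial fact driving the bound is that two independent samples of this size, drawn from $n$ nodes, share a common mediator with high probability, and a constant fraction of that intersection is non-faulty by a Chernoff bound over the $\epsilon n$ non-faulty nodes. Hence whp some non-faulty mediator learns of enough token-holding sources to orchestrate transport and merge their tokens.

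Safety I would address first since it is a local invariant: the only code path that deletes tokens is the team-formation operation at a single node, guarded by a local counter that increments and decrements solely on receipt and transmission of tokens. A straightforward induction on execution steps shows that every deletion removes exactly $\sigma$ collocated tokens, yielding safety for free.

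Liveness together with the $O(\sigma + \log n)$ time bound would be the analytical core. I would introduce an amortized potential tracking the number of tokens held at non-faulty nodes that have not yet reached a successful mediator, and show that in each asynchronous round this potential drops by a constant factor, except with a failure probability that is absorbed by a union bound. Once a non-faulty mediator discovers at least $\sigma$ tokens among its sampled sources, pulling them into a single location takes $O(\sigma)$ additional time, since by the footnote in \Sect{}~\ref{sec:intro:problem} each transport message carries a bounded number of tokens. The additive $\log n$ reflects the number of rounds needed for the whp concentration of the birthday-collision event and of the Chernoff estimate of non-faulty mediators in the sample.

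For the message complexity I would account per token. A single mediator-contact phase costs $O(\sqrt{n\log n})$ messages, and an expected constant number of phases suffices before the token is absorbed into a team, giving the expected $O(\sqrt{n\log n})$ messages per token. The whp bound $O(\sqrt{n\log n}\cdot \log n)$ would follow by observing that the number of phases a token endures is dominated by a geometric random variable with constant success probability, so $O(\log n)$ phases suffice whp, after which a Chernoff bound concentrates each phase individually and a union bound over the $n^{O(1)}$ tokens closes the argument. The main obstacle I anticipate lies in the interaction between asynchrony and the generalized initial-failure model: a slow but non-faulty mediator may be mistakenly treated as unresponsive, so I would design a reconciliation subroutine that prevents a token from being promised to two mediators simultaneously, and then prove that such transient mis-identifications do not cascade — neither inflating the per-token message budget nor delaying liveness beyond the claimed additive $O(\log n)$ term.
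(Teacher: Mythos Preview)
Your high-level sampling idea (birthday-paradox mediators of size $O(\sqrt{n\log n})$) matches the paper's primary--utility graph, and your treatment of safety is fine. But the proposal has two genuine gaps that the paper identifies as the core technical challenges.

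\textbf{Symmetry breaking under asynchrony.} You write that a non-faulty mediator, once it ``learns of enough token-holding sources,'' will ``orchestrate transport and merge their tokens.'' This is the step that does not work as stated. Two busy primary nodes $p,p'$ may share many non-faulty mediators, each of which independently tries to orchestrate; and a single node $p$ may appear in the samples of many other nodes, so several mediators may simultaneously instruct $p$ to ship its tokens in different directions. Without a mechanism, you get the long token-transportation chains the paper warns about in \Sect{}~\ref{sec:technical-challenges}: $p$ sends to $p'$ while $p'$ sends to $p''$, etc., and nothing guarantees constant expected phases per token. The paper's actual fix is a nontrivial \emph{center/arm} randomized phase structure with a selective-waiting rule (a center that receives a $\msgWaiting$ delays its response until its own next phase type is drawn); the key lemma (\Lem{}~\ref{lem:analysis:operational-channel}) shows that any operational channel retires within three phases of either endpoint with probability $\geq 1/16$, and this is what makes the per-channel message count exponentially decaying. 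Your ``reconciliation subroutine'' gesture does not substitute for this: the difficulty is precisely that under asynchrony you cannot synchronize phases across nodes, so a naive lock or promise scheme either deadlocks or blows up the message count.

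\textbf{The $O(\sigma)$ time and whp message bounds.} Your justification for the additive $O(\sigma)$ is a misreading: the footnote in \Sect{}~\ref{sec:intro:problem} says any number of tokens fits in one message (encode the count), not that transport is bounded per message. The paper's $O(\sigma)$ comes from a potential $\hat\psi(t)=M_1^t+M_2^t$ (sum of the two largest token piles) that is bounded by $2\sigma$ and is shown to increase by $1$ every $O(1)$ time with constant probability. For the whp message bound, you plan to treat the number of phases a token endures as geometric and union-bound over tokens. The paper explicitly notes that the per-channel message counts $Y_\chi$ are \emph{not} independent across channels, so a Chernoff step is unavailable; it instead proves a deterministic bound on the number of channels (\Lem{}~\ref{lem:analysis:message-load:blaming-tokens}) and a general lemma that the average of arbitrarily dependent, exponentially-tailed random variables is itself exponentially tailed (\Lem{}~\ref{lem:average-exponentially-decaying-rvs}). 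Your outline does not supply a replacement for either ingredient.
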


We emphasize that the whp guarantees promised in
\Thm{}~\ref{thm:main-informal} apply for any TF instance, regardless of the
total number of tokens injected into the system that may be arbitrarily large
(and not necessarily bounded as a function of $n$);
in particular, the correctness probability remains (arbitrarily) close to $1$
and does not deteriorate as the number of tokens increases.
Moreover, the bounds on the number of messages per token hold already for
``the first tokens'', that is, even if the adversary decides to inject a few
tokens in total, regardless of $\sigma$.
It is also interesting to point out that our TF algorithm withstands a
$(1 - \epsilon)$-fraction
of faulty nodes for an arbitrarily small constant
$\epsilon > 0$;\footnote{%
Refer to \Sect{}~\ref{sec:preliminaries:model} for a precise definition of the failure model adopted in the current paper, including the assumptions made on the faulty nodes.}
this is in contrast to many other problems (in message-passing with initial
failures), including leader election and consensus, where a majority of faulty
nodes leads to impossibility results.
On the negative side, we establish the following theorem (refer to
Corollary~\ref{cor:lowerbound-max} for the formal statement).

\begin{GlobalTheorem}[slightly informal]\label{thm:lowerbound-informal}
For
$2 \leq \sigma \leq n / 2$,
consider the simplified (one-shot) version of the TF problem, where the
schedule is synchronous and it is guaranteed that exactly
$\sigma$ tokens are
injected into the system, all at the beginning of the execution.
Any algorithm that solves this problem whp must send a total of
$\Omega ( \max \{ \sqrt{n \log n}, \sqrt{n \sigma} \} )$
messages in expectation.
This holds even if the messages are of unbounded size.
%Yuval: keep?
\end{GlobalTheorem}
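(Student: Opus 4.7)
The plan is to apply Yao's minimax principle. I would fix an input distribution $\mathcal{D}$ that places the $\sigma$ token-holders on a uniformly random $\sigma$-subset of $V$ and equips each node with a uniformly random port-numbering permutation. Any randomized algorithm succeeding whp on every input must, after fixing its coins to the best setting (by averaging), succeed whp over $\mathcal{D}$, so it suffices to lower-bound the expected message count of deterministic algorithms that succeed whp under $\mathcal{D}$. I then handle the two terms in the maximum separately.

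\textbf{The $\Omega(\sqrt{n\log n})$ term.} This term is driven purely by the whp correctness requirement and already appears in the $\sigma=2$ case; for $\sigma > 2$, the same bound can be extracted by focusing on any two designated tokens, which must in particular co-locate at the team node. For each token-holder $i$, I would define the trail $T_i \subseteq V$ to be the set of nodes whose local state is causally influenced by $i$'s initial actions. The uniform random port labelling at each node, combined with a coupling/symmetry argument, makes the conditional distribution of each $T_i$ (given its size) close to that of a uniformly random subset of $V$ of the same size. Team formation requires $T_1 \cap T_2 \neq \emptyset$; a standard birthday-paradox calculation then shows that two such subsets of total size $m$ intersect with probability at least $1 - n^{-c}$ only when $m = \Omega(\sqrt{n\log n})$.

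\textbf{The $\Omega(\sqrt{n\sigma})$ term.} Here I would use a meeting-counting argument. Along any execution of a deterministic algorithm sending $m$ messages in total, the trails satisfy $\sum_i |T_i| \leq O(m)$, so the average trail has size $O(m/\sigma)$. A necessary condition for team formation is that every token's trail intersects at least one other token's trail during the execution --- otherwise that token remains causally disconnected from the team. Under the symmetric random placement of $\mathcal{D}$, the expected number of pairs $(i,j)$ with $T_i \cap T_j \neq \emptyset$ can be bounded, via a pairwise second-moment argument, by $\sum_{i<j} |T_i|\,|T_j|/n = O(m^2/n)$. Since $\Omega(\sigma)$ such meeting-pairs must occur so that the $\sigma$ tokens can be globally connected, Markov's inequality yields $m^2/n = \Omega(\sigma)$, hence $m = \Omega(\sqrt{n\sigma})$.

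\textbf{Main obstacle.} The principal technical difficulty will be formalizing the trails $T_i$ in the anonymous, asynchronous setting, where each node's behavior depends on the message-delivery schedule and on the random port labels. I plan to restrict the adversary to a synchronous schedule (permitted by the simplified one-shot version stated in the theorem) and to integrate over the random port numberings, which will let me treat each trail as an approximately uniform random subset of $V$ of a given size. Correlations between different $T_i$'s will be handled by conditioning on an appropriate prefix of the execution. The assumption of unbounded message size is not an obstacle, since the entire argument counts only messages, not bits, and exploits no information-theoretic bottleneck.
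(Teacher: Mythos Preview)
Your high-level plan---birthday paradox for the $\sqrt{n\log n}$ term and a meeting-count for the $\sqrt{n\sigma}$ term---is in the same spirit as the paper, but the $\sqrt{n\sigma}$ argument as written has a genuine gap.

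The claim $\sum_i |T_i|\le O(m)$ is false. Once two trails merge, every subsequent message sent from the merged region enlarges \emph{both} trails, so the sum double-counts. Concretely: let each of the $\sigma$ token-holders send one message to a common node $w$, and then let $w$ send $m-\sigma$ further messages. Every $T_i$ now contains $w$ and all $m-\sigma$ recipients, so $|T_i|\approx m-\sigma$ for every $i$ and $\sum_i |T_i|\approx \sigma m$, not $O(m)$. With this, the estimate $\sum_{i<j}|T_i||T_j|/n=O(m^2/n)$ collapses, and the conclusion $m=\Omega(\sqrt{n\sigma})$ no longer follows. There is also a softer issue: even if the size bound held, the step $\Pr[T_i\cap T_j\neq\emptyset]\lesssim |T_i||T_j|/n$ treats the trails as independent uniform subsets, whereas they are produced by the \emph{same} random port permutations and by an adaptive algorithm; ``conditioning on a prefix'' does not by itself restore independence.

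The paper sidesteps both problems by tracking \emph{influence components} (connected components of the graph of used edges) rather than per-token causal trails, and by granting the algorithm a central entity with global view. Each message sent from a non-trivial component through a still-unexposed random port hits some other non-trivial component with probability at most $p\approx (f(n)+\sigma)/n$; a small normalization (merging in a dummy trivial singleton on every success so that $N_r$ grows by exactly one per round) makes the $2f(n)$ hit-indicators genuinely independent $\mathrm{Bernoulli}(p)$. The $\sqrt{n\log n}$ bound then comes from $(1-p)^{2f(n)}\gg n^{-c}$ when $f(n)=o(\sqrt{n\log n})$ and $\sigma\le O(\sqrt{n\log n})$ (with non-negligible probability there are \emph{zero} hits, so no two token-components ever merge); the $\sqrt{n\sigma}$ bound comes from a Chernoff tail showing that if $f(n)=o(\sqrt{n\sigma})$ then the expected number of hits is $o(\sigma)$, so with probability bounded away from $0$ there are fewer than $\sigma-1$ hits---too few to connect $\sigma$ initial singleton components into one.

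A smaller point: your Yao step does not directly produce a deterministic algorithm that still succeeds \emph{whp} over $\mathcal{D}$; fixing coins by averaging gives only constant success probability together with the message bound, which is not enough for the $\sqrt{n\log n}$ argument. The paper avoids Yao entirely: it keeps the algorithm randomized, conditions via Markov on the event that at most $2f(n)$ messages are sent, and bounds the failure probability of the original algorithm directly.
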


%%%%%%%%%%%%%%%%%%%%%%%%%%%%%%%%%%%%%%%
\subsection{Extra Features}
\label{sec:intro:extra-features}
%%%%%%%%%%%%%%%%%%%%%%%%%%%%%%%%%%%%%%%
Beyond the safety and liveness conditions presented in
\Sect{}~\ref{sec:intro:problem}, it may be advantageous for TF algorithms to
satisfy additional features.
In this section, we introduce three such desirable features that turn out to
be very useful (see \Sect{}~\ref{sec:intro:applications}), all
three of them are readily satisfied by the TF algorithm promised in
\Thm{}~\ref{thm:main-informal}.
Refer to \Sect{}~\ref{sec:discussion} for further discussions of the qualities
of these features.

%%%%%%%%%%%%%%%%%%%%%%%%%%%%%%%%%%%%%%%
\Paragraph{The Forgetful Feature.}
%%%%%%%%%%%%%%%%%%%%%%%%%%%%%%%%%%%%%%%
In the scope of this feature, we distinguish between two types of coin tosses of a node
$v \in V$:
(1)
the finitely many ``factory coin tosses'' generated by $v$ upon the first activation event of $v$, before any other action of $v$ and
hence, could have been generated when $v$ was ``manufactured'';
(2)
all other coin tosses of $v$, generated along the execution.
In contrast to the latter, the ``factory coin tosses'' are considered to be
hardwired into $v$'s memory and, in particular, constitute part of $v$'s
initial state.

We say that a randomized algorithm satisfies the \emph{forgetful feature} if
it is guaranteed that every node resides in its initial state at any quiescent
time (formally defined in \Sect{}~\ref{sec:preliminaries:model}).
Thus, at quiescent times, the nodes may not record any information about their
past events and actions.
We view the notion of forgetful algorithms as a natural extension of the
important notion of \emph{memoryless} algorithms
\cite{DBLP:conf/podc/StyerP89,
gadi-book}
from deterministic to randomized algorithms and believe that it should be
studied further, regardless of the TF problem.

%%%%%%%%%%%%%%%%%%%%%%%%%%%%%%%%%%%%%%%
\Paragraph{The Trace-Tree Feature.}
%%%%%%%%%%%%%%%%%%%%%%%%%%%%%%%%%%%%%%%
A token $\tau$ injected into node
$v \in V$
follows a tour in the communication graph from $v$ to the node $r$ at which
$\tau$ is deleted (if at all) as part of a team formation event $F$.
Somewhat informally, this tour forms a simple path $\pi$ in the temporal graph
that reflects the communication graph along the execution's time axis (a node
$v' \in V$
may appear in $\pi$ several times, but at different times).
The collection of the paths $\pi$ of all the tokens $\tau$ that participate in
$F$ form a temporal tree $T$ rooted at $r$.
A TF algorithm satisfies the \emph{trace-tree feature} if the algorithm
maintains a distributed data structure that supports broadcast and echo
processes over $T$ (it is the responsibility of ``the user'' to delete this
data structure once it is no longer used).
Since the basic description of our TF algorithm, as presented in
\Sect{}\ \ref{sec:algorithm} and \ref{sec:channel-layer-algorithm}, does not
address this trace-tree feature, we provide the implementation details in
\Sect{}~\ref{sec:trace-tree-mechanism-details}.

%%%%%%%%%%%%%%%%%%%%%%%%%%%%%%%%%%%%%%%
\Paragraph{The Accumulation Feature.}
%%%%%%%%%%%%%%%%%%%%%%%%%%%%%%%%%%%%%%%
Consider a TF instance and assume that the total number $\ell$ of tokens
injected into the system satisfies
$\ell \bmod \sigma = k > 0$.
A TF algorithm satisfies the \emph{accumulation feature} if it is guaranteed
that the $k$ tokens that remain in the system forever are eventually held by
the same node.

%%%%%%%%%%%%%%%%%%%%%%%%%%%%%%%%%%%%%%%
\subsection{Applications}
\label{sec:intro:applications}
%%%%%%%%%%%%%%%%%%%%%%%%%%%%%%%%%%%%%%%
We present several interesting problems that are reducible to TF.
These include the classic leader election problem, for which we improve the
state-of-the-art, as well as various problems that received less attention
from the community so far (if at all).
For each problem, we provide a sketchy description of the reduction to TF;
further technical details as well as precise statements of the results
obtained through these reductions are deferred to
\Sect{}~\ref{sec:applications-additional-information}.

%Yuval: todo: add precise statements in
%\Sect{}~\ref{sec:applications-additional-information}

%%%%%%%%%%%%%%%%%%%%%%%%%%%%%%%%%%%%%%%
\Paragraph{Leader Election (LE).}
%%%%%%%%%%%%%%%%%%%%%%%%%%%%%%%%%%%%%%%
This is the fundamental one-shot problem of designating a single non-faulty
node of a communication network as a leader
\cite{DBLP:books/mk/Lynch96,
DBLP:books/daglib/0017536,
Peleg2000book}.
We improve upon the previous results for both
(1)
the \textit{implicit} version, where each node is required to know whether it
is the elected leader;
and
(2)
the \emph{explicit} version, where, in addition to the above requirement, each
non-leader knows which of its (internal) ports leads to the leader.
Moreover, our algorithms are fault-tolerant (see \Sect{}~\ref{sec:preliminaries:model}), while the improved upon prior art
\cite{DBLP:conf/wdag/KuttenMPP21,
kutten2020singularly}
assumes fault freedom.

Intuitively, the reduction works as follows:
A logarithmic number of nodes are chosen probabilistically to serve as
candidates, injecting a token into each one of them.
The team size parameter is adjusted so that the number of injected tokens
suffices for the formation of exactly one team whp.
The node at which the team is formed is elected.
In the explicit LE version, on top of the above, the elected leader notifies
all other nodes.

We assume that there are at most
$n (1 / 2 - \epsilon)$
faulty nodes for any constant
$\epsilon > 0$.
The run-time of our LE algorithms is
$O (\log n)$
whp. 
Our \emph{explicit} LE algorithm sends
$O (n)$
messages in expectation and whp.
This improves upon the time complexity of the best previous algorithms with $O(n)$ messages for asynchronous explicit LE
\cite{DBLP:conf/wdag/KuttenMPP21,
kutten2020singularly};
their time complexity was $(\log^2 n)$ even though they assume fault-freedom.
Our \emph{implicit} LE algorithm sends
$O (\sqrt{n \log n} \cdot \log n)$
messages in expectation and
$O (\sqrt{n \log n} \cdot \log ^2 n)$
messages whp.
This is the first algorithm with message complexity $o(n)$ for asynchronous LE (even for fault-free algorithms).

%%%%%%%%%%%%%%%%%%%%%%%%%%%%%%%%%%%%%%%
\Paragraph{Vector Team Formation.}
%%%%%%%%%%%%%%%%%%%%%%%%%%%%%%%%%%%%%%%
Consider a (long-lived) generalization of the TF problem, where each token
comes with a \emph{color} from a certain color palette and a team should
include a pre-specified number of tokens from each color.
This problem, called \emph{vector team formation (vTF)}, is formulated as
follows:
Instead of a scalar team size parameter
$\sigma \in \Integers_{> 0}$,
the vTF problem is defined over a \emph{team size vector}
$\vec{\sigma} \in \Integers_{> 0}^{m}$
whose dimension $m$ corresponds to the size of the color palette;
a team now consists of (exactly)
$\vec{\sigma}(i)$
tokens of color $i$ for each
$i \in [m]$.
As in the TF problem, the safety condition states that tokens can be deleted
only as part of a team, whereas the liveness condition states that if the
system contains at least
$\vec{\sigma}(i)$
tokens of color $i$ for each
$i \in [m]$,
then a team must be formed in a finite time.
(The aforementioned Dungeons \& Dragons example is a special case of vTF.)

To solve vTF, we shall invoke a separate copy of a TF algorithm for each color
$i \in [m]$,
setting $\sigma = \vec{\sigma}(i)$,
and generate a ``super-token'' of color $i$ whenever the $i$-th copy forms a
team.
The super-tokens can then be collected into a vTF team, exploiting the fact
that the symmetry is now broken due to the different colors.

%%%%%%%%%%%%%%%%%%%%%%%%%%%%%%%%%%%%%%%
\Paragraph{Agreement with Failures.}
%%%%%%%%%%%%%%%%%%%%%%%%%%%%%%%%%%%%%%%
The problem of (explicit) agreement with initial failures was defined in the
seminal paper of \cite{DBLP:journals/jacm/FischerLP85} as a contrast to the setting
of asynchronous crash failures proved to be impossible.
A weaker version of the problem, known as \emph{implicit} agreement, is
defined in \cite{DBLP:conf/podc/AugustineMP18}.
Informally, the latter problem requires that
(1)
at least one node must decide;
(2)
the decided value must be the same for all deciding nodes;
and
(3)
the decided value must be the input of some node.
This version of the agreement problem was addressed so far only in
\emph{synchronous} networks, considering both fault-free environments
\cite{DBLP:conf/podc/AugustineMP18} and crash faults \cite{DBLP:conf/icdcn/KumarM23}.
As implicit agreement is directly reducible to the aforementioned
implicit leader election problem, we obtain an \emph{asynchronous} implicit
agreement algorithm that withstands
$n (1 / 2 - \epsilon)$
faults in the initial failures model.

%%%%%%%%%%%%%%%%%%%%%%%%%%%%%%%%%%%%%%%
\Paragraph{Online Gaming Platforms.}
%%%%%%%%%%%%%%%%%%%%%%%%%%%%%%%%%%%%%%%
This long-lived problem addresses an online gaming platform (see, e.g.,
\cite{DBLP:conf/stoc/EmekKW16}) in a ``one-versus-one'' game mode (two players compete
against each other directly like in chess), where players arrive over time and
should be matched with opponents.
The reduction to TF works as follows:
Each player is represented by a single token, which is injected into a
relevant node, e.g., a nearby server, and the TF algorithm is invoked with
$\sigma = 2$.
When a team is formed, the trace tree is used to establish a connection
between the servers of the matched players.

%%%%%%%%%%%%%%%%%%%%%%%%%%%%%%%%%%%%%%%
\Paragraph{Distributed Matching with Failures.}
%%%%%%%%%%%%%%%%%%%%%%%%%%%%%%%%%%%%%%%
In this one-shot problem, the nodes are equipped with unique IDs and an (even
size) subset of the (non-faulty) nodes is marked as part of the input.
The goal is to form a perfect matching in the subgraph induced by the marked
nodes so that each node knows the ID of the node it is matched to.
To reduce this problem to TF, intuitively, each marked node generates a token
and the TF algorithm is invoked with
$\sigma = 2$;
the trace tree is employed to allow the matched nodes to exchange IDs.

%%%%%%%%%%%%%%%%%%%%%%%%%%%%%%%%%%%%%%%
\Paragraph{Robot Team Gathering.}
%%%%%%%%%%%%%%%%%%%%%%%%%%%%%%%%%%%%%%%
The one-shot \emph{gathering} problem requires robots, initially positioned
arbitrarily, to meet at a single point within a finite time, where the meeting
point is not fixed initially
\cite{DBLP:journals/siamcomp/AgmonP06,
DBLP:conf/wdag/LunaUVY20,
DBLP:journals/eatcs/MhamdiGMT19,
DBLP:journals/tcs/FlocchiniPSW05,
DBLP:journals/siamcomp/SuzukiY99}.
We propose and solve a new problem called \textit{robot team gathering} which
is a generalization of the gathering problem.
Robots are required to form teams of a given (known) size, and each team must
meet at a single point within a finite time, where the meeting points are not
fixed initially.
In the reduction of robot team gathering (in a complete graph) to TF, each
robot is represented as a token, and the TF algorithm is used to form
teams.\footnote{%
It is not difficult to have the robots (each playing a token) to simulate the
algorithms of the nodes, using whiteboards
\cite{DBLP:conf/ciac/DobrevKSS06,
DBLP:journals/networks/FraigniaudGKP06}
at the nodes.}

%%%%%%%%%%%%%%%%%%%%%%%%%%%%%%%%%%%%%%%
\Paragraph{Distributed Trigger Counting.}
%%%%%%%%%%%%%%%%%%%%%%%%%%%%%%%%%%%%%%%
Consider a network of interconnected devices where the devices count triggers
from an external source.
In the one-shot \emph{distributed trigger counting (DTC)} problem, the
algorithm is required to raise an alarm when the total number of triggers
counted by all the devices reaches a predefined threshold
\cite{DBLP:journals/wicomm/ChangT16,
DBLP:conf/podc/EmekK10,
DBLP:journals/ict-express/KimFP24}.
Intuitively, to solve DTC, each trigger generates a token, and the TF
algorithm is invoked after setting the team size parameter to the threshold
value.

%%%%%%%%%%%%%%%%%%%%%%%%%%%%%%%%%%%%%%%
\Paragraph{Team Formation with Associated Information.}
%%%%%%%%%%%%%%%%%%%%%%%%%%%%%%%%%%%%%%%
The application is for the case where a token $\tau$ arrives with some
associated information $\AS(\tau)$.
It is required that $\AS(\tau)$ be available at the node where the team, that
includes $\tau$, is formed at the time of the formation.
For example, $\AS(\tau)$ may be a piece of a secret, and $\sigma$ pieces are
needed and are enough to recover the secret.
We further assume that the information associated with any two tokens that
exist in the network simultaneously is different.
This means that the size of $\AS(\tau)$ may be large, so only a constant number
of such pieces may fit in one message (otherwise, the implementation may be
straightforward even if the message size is $O(\log n)$).
The reduction to (the standard version of) TF is slightly more technical, see
\Sect{}~\ref{sec:applications-additional-information}.

%%%%%%%%%%%%%%%%%%%%%%%%%%%%%%%%%%%%%%%%%%%%%%%%%%%%%%%%%%%%%%%%%%%%%%%%%%%%%%
\subsection{Additional Related Work}
\label{sec:related-work}
%%%%%%%%%%%%%%%%%%%%%%%%%%%%%%%%%%%%%%%%%%%%%%%%%%%%%%%%%%%%%%%%%%%%%%%%%%%%%%
The problem of reaching consensus in a complete graph with initial faults was introduced in the seminal paper of \cite{DBLP:journals/jacm/FischerLP85} since they showed that reaching consensus was impossible in complete asynchronous networks when faults could occur at any time. The same algorithm could be used for leader election. Its message complexity in the CONGEST model was $O(n^3)$ if $O(n)$ nodes could fail. This was improved to $O(n^2)$ in 
\cite{DBLP:conf/stacs/Bar-YehudaKWZ87} that also addressed general graphs and \cite{DBLP:journals/tse/ItaiKWZ90} that solved leader election (and thus also consensus) in a complete graph where the algorithm is given $k < n/2$ that is the maximum possible number of initial faults. The message complexity was $O(n \log n + n k)$) which is $O(n^2)$ in complete networks. For general graphs, a further improvement was obtained 
in \cite{DBLP:conf/podc/AfekS87}
by combining their counting algorithm with an algorithm for election (with no termination detection) in the presence of initial faults, such as the algorithm of 
 \cite{DBLP:journals/ipl/Kutten88}. Their message complexity was $O(|E|+n \log^5{n})$. The above mentioned counting was improved in \cite{DBLP:journals/jacm/AfekAPS96}, implying a message complexity of 
 $O( |E| + n \log^3{n} )$ for deterministic leader election and consensus under initial faults in general graphs. This means $O(n^2)$ in complete graphs.

Randomization was used to enable consensus in a complete graph in case nodes could fail at any time. Various algorithms were introduced, with various promises on resilience (e.g., also addressing Byzantine faults, varying the number of faults), time complexity, probabilistic assumptions, etc. 
\cite{DBLP:conf/podc/Ben-Or83,DBLP:conf/podc/BrachaT83,DBLP:conf/podc/Bracha84,DBLP:conf/focs/Rabin83}. They send $\Omega(n^2)$ messages.  

Election and consensus in \emph{non-faulty} complete graphs have been studied extensively; see, e.g.,  \cite{DBLP:conf/podc/KorachMZ84,DBLP:journals/siamcomp/AfekG91,humblet1984selecting,DBLP:journals/iandc/AfekM94,DBLP:conf/wdag/KuttenMPP21,DBLP:conf/podc/Kutten0T023}. The message complexity in most of them is $\Omega(n \log n)$, with some being $O(n)$, e.g., in the randomized synchronous case. Surprisingly, $o(n)$ (specifically, $O(\log n \sqrt{n}$) was eventually shown for synchronous fault-free networks for the implicit version of the problem (that is, not every node needs to know the output, but no two outputs contradict) \cite{DBLP:journals/tcs/KuttenPP0T15}.
This has been generalized to algorithms that can withstand crash faults \cite{DBLP:conf/icdcn/KumarM23} or even byzantine faults (assuming authentication) \cite{DBLP:conf/icdcn/KumarM24} 
and sublinear message complexity of agreement was also studied \cite{DBLP:conf/podc/AugustineMP18}
but all these was still done in \emph{synchronous} networks.
%(with $\tilde{O}(\sqrt{n})$ messages).
 The birthday paradox \cite{mckinney1966generalized}
 % \shay{Original paper, according to Wikipedia, but seems not very accessible. Maybe find some textbook mentioning the birthday paradox.} 
 is used in the above papers;
 it is used here too, for building a more general structure that supports more general results (such as asynchrony and initial failures).
 
   The \emph{explicit} synchronous case of consensus with crash faults was addressed in \cite{DBLP:conf/soda/GilbertK10}, using $\Theta(n)$ messages and $\Theta( \log{n} )$ time. 
   %\shay{KEEP for NEXT PAPER Maybe check whether we can improve that to constant too. If yes-  another paper we improve. If not, mention in the discussion section.}
 
For the \emph{asynchronous} fault-free case, too, $\Omega(n)$ was considered to be optimal (see, e.g. \cite{DBLP:conf/wdag/KuttenMPP21}). It may indeed be for the explicit case, but recall that one of the results in the current paper is an $o(n)$ message result for the \emph{asynchronous} case, even for networks with (generalized) initial faults. The time complexity in
\cite{kutten2020singularly} for the \emph{non-faulty} synchronous case (and in
\cite{DBLP:conf/wdag/KuttenMPP21} for the \emph{non-faulty} asynchronous case) is $O(\log^2 n)$, improved here to $O(\log n)$ with at least as good as those of 
\cite{DBLP:conf/wdag/KuttenMPP21,kutten2020singularly} (the message complexity as good as that of the previous synchronous solution and potentially better than the previous asynchronous one).\footnote{The previous result for the asynchronous case was for general graphs.}

Various papers addressed 
problems that are variations of the application we call here distributed trigger counting (DTC).
(Some such variations are called threshold detection, threshold sensing, and controller.)
The one-shot case is the one where the number of triggers (called ``events'' in some other papers) passes some threshold \cite{DBLP:journals/jacm/AfekAPS96,DBLP:conf/podc/KormanK07,DBLP:conf/ics/GargGS06,DBLP:journals/mst/ChakaravarthyCGS12,DBLP:conf/icdcs/HuangGJT07,DBLP:journals/dc/EmekK11}.
The message complexity obtained was not always analyzed (in the more practical papers), but when analyzed, it was at least $O(\log n \log \sigma)$ even in the case of complete graphs \cite{DBLP:conf/ics/GargGS06}.
In the latter case a message could be sent directly from a site (node) to a leader (a coordinator node) that is known in advance. (That model allows sending messages between sites, but the algorithms do not use that.) 

The ongoing case is treated, e.g., in
\cite{DBLP:conf/sigmod/KeralapuraCR06}. 
The message complexity per counted event is about the same as in the one-shot case, and some errors in the counting are allowed.

The notion of forgetful algorithms extends the notion of 
\textit{memoryless} algorithms first defined for deterministic shared-memory mutual exclusion algorithms in \cite{DBLP:conf/podc/StyerP89}. A memoryless mutual exclusion algorithm is an algorithm in which, when all the processes are in their remainder sections, the registers' values are the same as their initial values. 
 This means that a process that tried to enter its critical section did not use any information about its previous attempts (such as the fact that it has entered its critical section five times so far).
Almost all known mutual exclusion algorithms are memoryless \cite{gadi-book}, including Lamport’s famous Bakery algorithm \cite{DBLP:journals/cacm/Lamport74a}.

\subsection{Paper's Outline}
%%%%%%%%%%%%%%%%%%%%%%%%%%%%%%%%%%%%%%%

\Sect{}~\ref{sec:technical-challenges} presents the main technical challenges arising in the study of the TF problem.
\Sect{}~\ref{sec:preliminaries} introduces the model, initial failures generalization, complexity measures, and basic definitions.
\Sect{}~\ref{sec:algorithm} presents our TF algorithm, first through a high-level overview and then through a detailed description of its two-layer structure and the interface between the layers.
\Sect{}~\ref{sec:analysis} contains the main analysis of the algorithm, establishing safety and liveness, and analyzing the message load.
\Sect{}~\ref{sec:channel-layer-algorithm} provides the technical details of the algorithm's lower layer and its analysis.
\Sect{}~\ref{sec:reaction-time} presents an analysis of the algorithm's reaction time.

\Sect{}~\ref{sec:lower-bound} establishes our lower bound, stated in \Thm{}~\ref{thm:lowerbound-informal}.
\Sect{}~\ref{sec:trace-tree-mechanism-details} provides further details on the trace-tree mechanism mentioned in \Sect{}~\ref{sec:intro:applications}.
\Sect{}~\ref{sec:applications-additional-information} presents additional technical details, omitted from \Sect{}~\ref{sec:intro:applications}, concerning LE as an application of our TF algorithm, as well as additional applications (beyond LE) of the TF problem.

\Sect{}~\ref{sec:discussion} concludes with a discussion of the usefulness of the TF problem, the results presented in this paper, and possible future directions.
\Sect{}~\ref{sec:pseudocode} contains the pseudocode of the algorithm, together with accompanying tables of variables and messages.

%%%%%%%%%%%%%%%%%%%%%%%%%%%%%%%%%%%%%%%%%%%%%%%%%%%%%%%%%%%%%%%%%%%%%%%%%%%%%%
\section{Technical Challenges}
\label{sec:technical-challenges}
%%%%%%%%%%%%%%%%%%%%%%%%%%%%%%%%%%%%%%%%%%%%%%%%%%%%%%%%%%%%%%%%%%%%%%%%%%%%%%
Before we move on to presenting our solution for the TF problem, the reader
may wonder about the following simple scheme:
Elect a leader $v$ (once) and following that, transport all injected tokens to
$v$ so that $v$ handles all team formation operations.
We argue that this scheme suffers from various severe shortcomings:
(1)
under our failure model (formally presented in
\Sect{}~\ref{sec:preliminaries:model}), the leader $v$ may become faulty at
any quiescent time, causing all subsequently injected tokens to get lost;
(2)
since our model does not include ``spontaneous wake-ups'', it is not clear
which nodes participate in the leader election stage;
(3)
leader election is impossible if we allow a
$(1 - \epsilon)$-fraction
of the nodes to be faulty;
and
(4)
the scheme does not satisfy the forgetful feature (as defined in
\Sect{}~\ref{sec:intro:extra-features}).

The aforementioned shortcomings highlight some of the technical challenges we
had to overcome when devising our TF algorithm.
To discuss those, assume for now that the team size parameter $\sigma$ is a
large constant and consider a token $\tau$ injected into some node $v$.
The first task faced by $v$ is to look for other (nodes that hold) tokens ---
if there are at least
$\sigma - 1$
of those, then a team should be formed.
Doing so in a communication efficient manner is a challenge, especially under
an asynchronous scheduler when a large fraction of the nodes may be faulty (and
thus, never respond to incoming messages).
We tackle this challenge by constructing a ``probabilistic quorum system''
with quorums of size
$O (\sqrt{n \log n})$
(using the ``factory coin tosses''), thus ensuring that any two token holding
nodes have at least one non-faulty node in the intersection of their quorums.

Once node $v$ has identified other token holding nodes, the next task is to
determine who transports the tokens to whom, raising a symmetry breaking
challenge:
a naive approach may cause $v$ to transport its token to $v'$ while $v'$
transports its own token to $v''$ and so on.
To prevent long token transportation chains, we adopt a ``star-shaped''
transportation pattern:
the token holding nodes run consecutive phases and in each phase, assume a
center or an arm role at random;
each arm node $v_{a}$ then tries to transport its token to a center node
$v_{c}$ that is ``ready to accept'' $v_{a}$'s token.
For this idea to succeed, one has to lower-bound the probability for an arm
node to identify an appropriate center node in each phase.
However, as we cannot synchronize between the phases of different nodes,
ensuring such a probability lower bound turns out to be eminently challenging.
We resolve this challenge by incorporating a ``selective waiting mechanism''
inspired by a technique introduced in \cite{DBLP:conf/stoc/AwerbuchCS94} (for the
entirely different task of constructing a maximal independent set in a general
graph);
refer to \Sect{}~\ref{sec:algorithm} for details.

%%%%%%%%%%%%%%%%%%%%%%%%%%%%%%%%%%%%%%%%%%%%%%%%%%%%%%%%%%%%%%%%%%%%%%%%%%%%%%
\section{Preliminaries}
\label{sec:preliminaries}
%%%%%%%%%%%%%%%%%%%%%%%%%%%%%%%%%%%%%%%%%%%%%%%%%%%%%%%%%%%%%%%%%%%%%%%%%%%%%%

%%%%%%%%%%%%%%%%%%%%%%%%%%%%%%%%%%%%%%%
\subsection{The Computational Model}
\label{sec:preliminaries:model}
%%%%%%%%%%%%%%%%%%%%%%%%%%%%%%%%%%%%%%%
Fix some randomized TF algorithm \Alg{}.
Recall our assumption that the adversary assigns a finite delay to each
message of \Alg{}.
The only constraint we impose on the adversary in this regard is that the
messages sent from a node
$v \in V$
to a node
$v' \in V$
are delivered in FIFO order.
For the sake of the runtime analysis (more on that later), we scale the time
axis so that the message delays are up-bounded by $1$ time unit;
we emphasize that the nodes themselves have no notion of time.

%%%%%%%%%%%%%%%%%%%%%%%%%%%%%%%%%%%%%%%
\Paragraph{Event Driven Executions.}
%%%%%%%%%%%%%%%%%%%%%%%%%%%%%%%%%%%%%%%
The execution of \Alg{} advances through the continuous time axis
$\Reals_{> 0}$
in an event driven fashion so that a node
$v \in V$
is activated at time
$t > 0$
if (and only if) one of the following two \emph{activation events}
occurs at time $t$:
(1)
a message is delivered to $v$;
or
(2)
one or more tokens are injected into $v$.
When activated, the actions of $v$ under \Alg{} include
(I)
reading the new incoming message (if any);
(II)
local computation that may involve (private) random coin tosses;
(III)
any number of team formation operations;
and
(IV)
sending messages to (a subset of) $v$'s neighbors.\footnote{%
We allow $v$ to send multiple messages to the same neighbor (in practice, those
can be piggybacked).}
It is assumed that actions (I)--(IV) are performed atomically (i.e., they
take zero time).
We emphasize that the nodes never act unless activation events (1) or (2)
occur (in particular, in the scope of the TF problem, there are no
``spontaneous wake-ups'').

Assume without loss of generality that the activation events are isolated so
that no two of them occur at the same time
(indeed, simultaneous activations can be handled locally by processing them sequentially in an arbitrary order).
The times at which these activation events occur are referred to as
\emph{activation times}.
Time
$t > 0$
is regarded as \emph{quiescent} if
(1)
$t$ is not an activation time;
(2)
there are no messages in transit at time $t$;
and
(3)
the system contains $0$ tokens at time $t$.

For a time
$t > 0$,
let
$t- = t - d t$
and
$t+ = t + d t$
be the times immediately before and immediately after $t$, respectively, where
$d t$
is chosen so that the time interval
$[t-, t+]$
contains at most one activation event.
Given an object $\Object$, let $\Object^{t}$ denote the state of $\Object$ at
time $t$, adhering to the convention that $\Object^{t} = \Object^{t-}$
for all
$t > 0$;\footnote{%
Objects in this regard include local variables as well as global objects
defined for the analysis.}
in particular, if $t$ is an activation time and the state of $\Object$ changes
at time $t$, then $\Object^{t}$ is taken to be the state before the change.
Notice that our convention implies that instantaneous changes to $\Object$,
that are done as part of the local computation and do not persist beyond time
$t$, are not reflected in $\Object^{t}$.

%%%%%%%%%%%%%%%%%%%%%%%%%%%%%%%%%%%%%%%
\Paragraph{Fault Tolerance and Adversarial Policies.}
%%%%%%%%%%%%%%%%%%%%%%%%%%%%%%%%%%%%%%%
We adopt a generalization of the \emph{initial failures} model
\cite{DBLP:journals/cacm/Dijkstra65,
DBLP:journals/jacm/FischerLP85,
DBLP:journals/ijpp/TaubenfeldKM89}.
Under the initial failures model, the adversary selects a node subset
$F \subset V$
at time $0$ (i.e., right before the execution commences) and turns the status
of all the nodes in $F$ to \emph{faulty}, thus preventing them from
participating in the execution.
In particular, faulty nodes do not send messages and in the context of the TF
problem, no tokens are injected into faulty nodes.
A message $\mu$ delivered to a faulty node is lost (i.e., it is ignored and
does not trigger an activation event).
If tokens are transported over a lost message $\mu$, then these tokens are
assumed to remain in ``limbo'' forever, which may prevent the algorithm from
satisfying the liveness condition;
it is the responsibility of the algorithm designer to avoid such scenarios.

In the (new) generalized version of the initial failures model, considered in
the current paper, the adversary may \emph{toggle} the status of a node
$v \in F$
from faulty to non-faulty and vice versa, so that $v$ may participate in the
execution in certain time intervals and not participate in others.
The decisions of the adversary in this regard are subject to the following
constraint:
the faulty/non-faulty status of $v$ may be toggled only at quiescent times.
For a precise description of the failure model, if $v$ becomes faulty at time
$t \geq 0$
and remains faulty until time
$t' > t$,
at which it becomes non-faulty, then the memory image of $v$ at time $t'$ is
assumed to be identical to that of time $t$.
To avoid confusion, we subsequently refer to the nodes in $F$ as
\emph{fragile} (regardless of their faulty/non-faulty status at a specific
time of the execution), emphasizing that the non-fragile nodes remain non-faulty throughout the
execution.

Notice that due to the asynchronous nature of the system, it may be impossible
for a non-faulty node to distinguish its faulty neighbors from the ``slow
responding'' ones.
Moreover, the fragile nodes do not know that they are fragile and when a
fragile node becomes non-faulty, it runs the same algorithm as the non-fragile
nodes.
The generalization of the classic initial failures model is different from the crash failures model, where a node may fail at any time.
As pointed out in \cite{DBLP:journals/jacm/FischerLP85}, considering initial failures is known to open the gate for asynchronous algorithms that are impossible under crash failures.
Note that the TF problem is impossible under crash failures, since
safety may already be violated by a single such failure if a node holding
(or about to receive)
tokens crashes.

We emphasize that quiescent-time toggles do not, by themselves, imply a restart from an initial configuration.
In general, an algorithm may reach a quiescent time with arbitrary node states.
In our case, quiescent times resemble ``fresh starts'' only because our algorithm satisfies the forgetful feature from \Sect{}~\ref{sec:intro:extra-features};
this is a property of our algorithm, not of the model itself.

With this failure model, the adversary has the following roles:
(1)
At time $0$, the adversary determines the subset
$F \subset V$
of fragile nodes and (without loss of generality) turns their status to
faulty.
(2)
For time
$t = 0$
and for every activation time
$t > 0$,
at time $t+$, the adversary determines the next activation event (if any)
including its type (message delivery or token injection), location, and time
$t' > t$.
(3)
For every quiescent time
$t > 0$
and for every node
$v \in F$,
the adversary determines (at time $t$) if the faulty/non-faulty status of $v$
is toggled at time $t$ (notice that this is \emph{not} an activation event).
We emphasize that tokens can be injected into node $v$ at time $t$ only if $v$ is non-faulty at time $t$;
tokens are never injected into faulty nodes.

The adversary is \emph{adaptive}, namely, its decisions at time $t$ may be based on the nodes' actions before time $t$
(however, without knowing the nodes' future coin tosses;
in the terminology of \cite{DBLP:conf/atal/DatarNA23}, our adversary is \emph{weakly adaptive}, in contrast to their \emph{strongly adaptive} adversary, which can exploit for each node the random bits used in its next event).
Consequently, the execution of \Alg{} can be represented as an extensive form
game with incomplete information \cite{Hart1992games}, where the adversary is
the only strategic player;
that is, a rooted tree whose (internal) vertices are partitioned into a set of
adversarial moves, in which the adversary makes its decisions, and a set of
chance moves, each corresponding to the coin tosses of an activated node in
some activation event.
Terminology-wise, a mechanism that makes the decision in each adversarial move
of the aforementioned extensive form game is referred to as an
\emph{adversarial policy}.
Notice that by fixing an adversarial policy, \Alg{}'s execution becomes a
random variable, fully determined by the nodes' coin tosses.

%%%%%%%%%%%%%%%%%%%%%%%%%%%%%%%%%%%%%%%
\Paragraph{Performance Measures.}
%%%%%%%%%%%%%%%%%%%%%%%%%%%%%%%%%%%%%%%
The quality of \Alg{} is evaluated by means of two performance measures.
First and foremost, we wish to bound the number of messages that \Alg{} sends
per token in expectation and whp.
Care is needed in this regard since the adaptive nature of the adversary
implies that the number of tokens injected into the system may depend, by
itself, on the nodes' coin tosses.
Thus, the number of messages per token is, in general, the ratio of two random
variables that may exhibit complex dependencies, making it difficult to reason
about.
To resolve this difficulty, we formulate our main performance measure through
the notion of \emph{message load}, defined as follows.

For
$\ell \in \Integers_{> 0}$,
let $\A_{\ell}$ be the family of adversarial policies that inject at most
$\ell$ tokens into the system (throughout the execution).
We say that \Alg{}'s \emph{message load} is $M$ in expectation (resp.,
whp) if $M$ is the smallest real such that
for every
$\ell \in \Integers_{> 0}$
and for every adversarial policy in $\A_{\ell}$, it is guaranteed that \Alg{}
sends at most
$M \cdot \ell$
messages in expectation (resp., whp).\footnote{%
We note that the message load provides a bound on the number of messages sent
(per token) in ``finite prefixes'' of the execution under \emph{any}
adversarial policy, including those that do not belong to
$\bigcup_{\ell > 0} \A_{\ell}$.}

We are also interested in the time it takes for \Alg{} to form a team once
sufficiently many tokens are present.
To this end, we say that \Alg{}'s \emph{reaction time} is $R$ if $R$ is the
smallest real such that for every time
$t > 0$,
if the system contains at least $\sigma$ tokens at time $t$, then a team is
formed by time
$t + R$
whp.

%%%%%%%%%%%%%%%%%%%%%%%%%%%%%%%%%%%%%%%
\subsection{The Primary-Utility Graph}
\label{sec:preliminaries:primary-utility}
%%%%%%%%%%%%%%%%%%%%%%%%%%%%%%%%%%%%%%%
The algorithms developed in this paper are designed so that each node
$v \in V$
simulates a virtual \emph{primary node} $p(v)$ and a virtual \emph{utility
node} $u(v)$;
let
$P = \{ p(v) : v \in V \}$
and
$U = \{ u(v) : v \in V \}$
denote the sets of primary and utility nodes, respectively.
Although the primary node $p(v)$ and utility node $u(v)$ are simulated by the
same ``physical'' node
$v \in V$,
it is convenient to address them as standalone computational entities that
participate in the execution by receiving and sending messages independently
of $v$.
To this end, tokens injected into $v$ are regarded as injected into the
primary node $p(v)$.

In terms of the failure model, the primary node $p(v)$ and utility node $u(v)$
are regarded as \emph{fragile} if the corresponding
``physical'' node
$v \in V$
is fragile.
Let
$F_{P} = \{ p(v) \in P : v \in F \}$
and
$F_{U} = \{ u(v) \in U : v \in F \}$
denote the sets of fragile primary and utility nodes, respectively.
The faulty/non-faulty status of the fragile primary node
$p(v) \in F_{P}$
and fragile utility node
$u(v) \in F_{U}$
is assumed to be inherited from that of the corresponding fragile ``physical''
node
$v \in F$.\footnote{%
This assumption is made only for the sake of simplicity;
as far as our algorithms and analyses go, the faulty/non-faulty status of
$p(v)$ and $u(v)$ can be decoupled.}

A key design feature of our algorithms is that the entire message exchange
is confined to an overlay bipartite graph
$G_{P U} = (P, U, E_{P U})$,
referred to as the \emph{primary-utility graph}.
The edge set $E_{P U}$ of the primary-utility graph is constructed randomly by
selecting each node pair
$(p, u) \in P \times U$
to be included in $E_{P U}$ independently with probability
$c \cdot \sqrt{\frac{\log n}{n}}$,
where
$c > 0$
is a constant to be derived from the analysis;\footnote{%
With a slight abuse of notation, we often use $p$ (resp., $u$) as a
placeholder for a general primary (resp., utility) node.}
this random selection is made by each primary node $p$ upon its first
activation event and remains fixed throughout the execution.
Notation-wise, for a primary node
$p \in P$
and a utility node
$u \in U$,
let
$U(p) = \{ u' \in U : (p, u') \in E_{P U} \}$
and
$P(u) = \{ p' \in P : (p', u) \in E_{P U} \}$.
The key probabilistic properties of the primary-utility graph are cast in the
following lemma;
in the remainder of this paper, we condition on the event that these
properties are satisfied.

\begin{lemma}
\label{lem:pu-graph}
The primary-utility graph
$G_{P U} = (P, U, E_{P U})$
satisfies the following two properties whp:
\\
(1)
$( U(p) \cap U(p') ) - F_{U} \neq \emptyset$
for every
$p, p' \in P$;
and
\\
(2)
$|U(p)|, |P(u)| \leq O (\sqrt{n \log n})$
for every
$p \in P$
and
$u \in U$.
\end{lemma}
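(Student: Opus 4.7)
The plan is to establish both properties via standard concentration inequalities for sums of independent Bernoulli random variables, followed by union bounds, with the constant $c > 0$ (in the edge inclusion probability $q = c\sqrt{\log n / n}$) chosen large enough to dominate the resulting polynomial factors.

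For property (2), I would first observe that for each primary node $p \in P$, the random variable $|U(p)|$ is a sum of $n$ independent $\mathrm{Bernoulli}(q)$ random variables, with expectation $n q = c\sqrt{n\log n}$. A multiplicative Chernoff bound then gives $|U(p)| \leq 2 c \sqrt{n \log n}$ with probability at least $1 - \exp(-\Omega(c\sqrt{n\log n}))$, which is $1 - n^{-\omega(1)}$. The symmetric argument bounds $|P(u)|$ for each utility node $u$. A union bound over the $2n$ events (one per primary and one per utility node) preserves whp, yielding the $O(\sqrt{n\log n})$ degree bound for every node in $P \cup U$.

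For property (1), the main step is the case $p \neq p'$. For each utility node $u \in U$, the events $u \in U(p)$ and $u \in U(p')$ are independent (they depend on disjoint edge inclusion trials), so $u$ lies in $U(p) \cap U(p')$ with probability $q^2 = c^2 \log n / n$. Restricting attention to utility nodes in the non-fragile set $U \setminus F_U$, whose size is at least $\epsilon n$ under the standing assumption that at most a $(1-\epsilon)$-fraction of the nodes are fragile, the events remain mutually independent across different $u$. Hence
\[
\Pr\bigl((U(p) \cap U(p')) \setminus F_U = \emptyset\bigr) \;\leq\; (1-q^2)^{\epsilon n} \;\leq\; e^{-\epsilon c^2 \log n} \;=\; n^{-\epsilon c^2}.
\]
A union bound over the at most $n^2$ pairs $(p, p')$ yields failure probability $n^{2-\epsilon c^2}$; choosing $c$ with $\epsilon c^2 \geq 2 + c_0$ for any desired whp constant $c_0$ gives the claim. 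The degenerate case $p = p'$ reduces to showing $U(p) \setminus F_U \neq \emptyset$, which is dominated by the same argument with $q^2$ replaced by $q$ (an even stronger bound).

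The only subtle point, which I would flag rather than grind through, is ensuring the constant $c$ simultaneously satisfies both the property (2) Chernoff requirement and the property (1) union bound requirement; since the property (2) bound is already $1 - n^{-\omega(1)}$ regardless of $c \geq 1$, the binding constraint is property (1), and picking $c$ depending on $\epsilon$ and the target whp exponent suffices. No single step is a genuine obstacle: the whole lemma is a routine second-moment/Chernoff computation, and the conditioning on its conclusion in the rest of the paper absorbs the $o(1)$ failure probability into the overall whp guarantees.
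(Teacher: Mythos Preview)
Your proposal is correct and follows essentially the same approach as the paper: Chernoff's bound on a sum of independent Bernoulli$(q)$ variables for property~(2), and independence of the events $u \in U(p) \cap U(p')$ over the at least $\epsilon n$ non-fragile utility nodes together with a union bound over pairs for property~(1). Your explicit handling of the degenerate case $p = p'$ and your remark on the binding constraint for choosing $c$ are minor elaborations, not a different route.
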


\begin{proof}
To see that (1) holds, consider two distinct primary nodes
$p, p' \in P$
and let $A_{u}$ be the event that the non-fragile utility node
$u \in U - F_{U}$
is included in
$U(p) \cap U(p')$.
Since
$\Pr (A_{u}) = \frac{c^{2} \log n}{n}$
for every
$u \in U - F_{U}$
and since the events $A_{u}$,
$u \in U - F_{U}$,
are mutually independent, it follows that
\begin{align*}
\Pr \left( \bigvee_{u \in U - F_{U}} A_{u} \right)
\, = \, &
1 - \left( 1 - \frac{c^{2} \log n}{n} \right)^{|U - F_{U}|}
\\
\geq \, &
1 - \left( 1 - \frac{c^{2} \log n}{n} \right)^{\epsilon n}
\\
\geq \, &
1 - e^{-\epsilon c^{2} \log n}
\, = \,
1 - n^{-c'}
\end{align*}
for a constant $c'$ that can be made arbitrarily large by increasing the
constant
$c = c(\epsilon)$,
where the second transition holds as there are at least
$\epsilon n$
non-fragile utility nodes.
The assertion follows by applying the union bound over all
$\binom{n}{2}$
choices of the
primary nodes
$p, p' \in P$.

To see that (2) holds, notice that
$\Ex ( |U(p)| ) = c \cdot \sqrt{n \log n}$
(resp.,
$\Ex ( |P(u)| ) = c \cdot \sqrt{n \log n}$)
for every primary node
$p \in P$
(resp., utility node
$u \in U$).
By a standard application of Chernoff's bound, we conclude that
$|U(p)| \leq O (\sqrt{n \log n})$
(resp.,
$|P(u)| \leq O (\sqrt{n \log n})$)
whp, thus completing the proof by applying the union bound over all $n$
choices of the primary node
$p \in P$
(resp., utility node
$u \in U$).
\end{proof}

%%%%%%%%%%%%%%%%%%%%%%%%%%%%%%%%%%%%%%%%%%%%%%%%%%%%%%%%%%%%%%%%%%%%%%%%%%%%%%
\section{The Algorithm}
\label{sec:algorithm}
%%%%%%%%%%%%%%%%%%%%%%%%%%%%%%%%%%%%%%%%%%%%%%%%%%%%%%%%%%%%%%%%%%%%%%%%%%%%%%
In this section, we present our TF algorithm, referred to as \AlgTF{}.
We begin in \Sect{}~\ref{sec:high-level-overview} with a high-level overview that explains the main ideas behind \AlgTF{}, and
then proceed to a more detailed and formal presentation in
\Sect{}~\ref{sec:detailed-description-algorithm}.

\subsection{High-Level Overview of the Algorithm}
\label{sec:high-level-overview}

Recall that each physical node simulates a primary node and a utility node.
A primary node is called \emph{busy} if it currently holds one or more tokens.
Algorithm \AlgTF{} repeatedly tries to merge the token bundles of busy primary
nodes so that eventually some busy primary node accumulates at least
$\sigma$
tokens and can form team(s).
(This is in contrast to the fixed-coordinator approach discussed in
\Sect{}~\ref{sec:technical-challenges}, and avoids its shortcomings.)

To support such local mergers, the algorithm relies on the random bipartite
primary-utility graph introduced in
\Sect{}~\ref{sec:preliminaries:primary-utility}.
This graph is constructed by having each primary node independently and
uniformly choose
$O(\sqrt{n \log n})$
utility nodes to be its neighbors.
The key property of this graph, guaranteed by
\Lem{}~\ref{lem:pu-graph} via a birthday-paradox type effect, is that, whp,
every two primary nodes share at least one non-fragile utility neighbor.
Thus, whenever two primary nodes are simultaneously busy, they are likely to
have a shared utility neighbor through which they can interact, while each primary node has only a small number of utility neighbors.

Conceptually, \AlgTF{} is organized into two layers.
The lower, \emph{channel layer}, maintained by the utility nodes, is
responsible only for communication among busy primary nodes.
Its job is to notice when two primary nodes are simultaneously busy and,
through a shared utility node, offer them a short-lived communication link,
called a \emph{channel}; when such a link exists, we say that the two primary
nodes are \emph{paired}.
One may think of a channel as a temporary conversation opened between two busy
primary nodes and automatically closed once at least one of them is no longer busy.
The channel layer is used only to provide these temporary pairwise contacts in a reliable way;
it does not decide where tokens should move.

The actual token-gathering logic is implemented in the upper,
\emph{principal layer}, maintained by the primary nodes and using the channels
provided by the channel layer.
There, each busy primary node proceeds through consecutive \emph{phases}.
At the beginning of each phase, the primary node flips a coin and acts either
as a \emph{center} or as an \emph{arm}.
Informally, a center tries to collect tokens from arms paired with it,
whereas an arm gives up all of its currently held tokens when requested to do
so by a center.
This induces a star-shaped transport pattern: tokens are meant to move from
arms into centers, rather than through long chains of intermediate primary
nodes.
Consequently, once a node gives up its tokens in a phase, it becomes non-busy
until new tokens are injected into it.

Each phase follows a simple \emph{request-response} discipline:
at the beginning of the phase, a busy primary node sends a request to each primary node paired with it, over their current channel, 
and then waits for a response on each of these channels before moving to the next phase;
the request sent by a center is a request for tokens, whereas the request sent by an arm is only a synchronization request.
(The one exception to this discipline is when an arm receives a token request from a
center: the arm then sends all of its tokens, becomes non-busy, and its current
phase terminates abruptly rather than waiting for the remaining responses.)
The request-response mechanism is needed because, without it, under an asynchronous scheduler,
a node could race through many phases while an earlier request of its peer is still in transit, thereby gaining an ``unfair advantage'' in the local symmetry breaking.

The request-response discipline by itself is still not enough.
Even if every phase waits for all its responses, an adversary may still force many futile local interactions by causing one primary node to remain stuck waiting,
while a primary node paired with it repeatedly completes phases.
To prevent this, the principal layer uses a \emph{selective waiting
mechanism}.
Roughly speaking, when a center receives a synchronization request from an arm, it delays its response until its next phase.
This effectively ``freezes'' the arm temporarily: the arm cannot race far ahead, and centers get a fair chance to claim its tokens.
At the same time, the delay lasts for only one phase transition, and the
waiting is asymmetric: arms may wait for centers in this manner, but centers do
not wait for arms.
As a result, this mechanism does not create long waits or deadlocks.

The next section formalizes this scheme and specifies the exact interface
between the two layers.

\subsection{Detailed Description of the Algorithm}
\label{sec:detailed-description-algorithm}

The algorithm \AlgTF{} attempts to (locally) gather tokens, that may be distributed
over multiple primary nodes, into a single primary node and perform team
formation(s) if the number of gathered tokens is large enough.

Under \AlgTF{}, each primary node
$p \in P$
maintains the local variable
$p.\varTokens \in \Integers_{\geq 0}$
that counts the number of tokens held by $p$.
Recalling that $p.\varTokens^{t}$ is the value of $p.\varTokens$ at time
$t > 0$,
let
$\Busy^{t} = \{ p \in P : p.\varTokens^{t} > 0\}$,
referring to the nodes in $\Busy^{t}$ as \emph{busy}.
The algorithm is designed so that a utility node
$u \in U$
may hold tokens only instantaneously, when the tokens are transported, through
$u$, between two primary nodes in $P(u)$.

Fix some
$p \in P$
and
$t > 0$
and suppose that
$p \in \Busy^{t}$
and that $t$ is an activation time of $p$.
If
$p.\varTokens^{t} < \sigma$,
then $p$ may decide, at time $t$, to transport the tokens it holds by sending
a message $\mu$, that carries these tokens, to another primary node
$p' \in P$,
through a utility node
$u \in U(p) \cap U(p')$;
the algorithm is designed so that this may happen only if $p'$ is busy at
time $t$ and remains busy (at least) until message $\mu$ is delivered to $p'$.
If
$p.\varTokens^{t} \geq \sigma$,
then $p$ may decide, at time $t$, to eliminate some of the tokens it holds, in
which case, $p$ performs
$\lfloor p.\varTokens^{t} / \sigma \rfloor$
team formations (i.e., as many team formations as possible).
We emphasize that $p$ will never transport tokens as long as
$p.\varTokens \geq \sigma$
and that if $p$ does transport the
($p.\varTokens < \sigma$)
tokens it holds, then all $p.\varTokens$ tokens are transported together and
$p$ becomes non-busy.

If $p$ performs team formation(s) at time $t$ and
$p.\varTokens^{t} \bmod \sigma = k > 0$,
then $p$ is left with $k$ ``remainder tokens''.
To simplify the presentation, we treat these $k$ ``remainder tokens'' as if
they are injected into $p$, as fresh tokens, soon after time $t$ (and strictly
before the next ``original'' activation event);
refer to the injections of such ``remainder tokens'' as \emph{fake
injections}.
In particular, we assume that if $p$ performs team formation(s) at time $t$,
then
$p.\varTokens^{t+} = 0$,
hence
$p \notin \Busy^{t+}$.
Put differently, as long as
$p \in \Busy$,
the variable $p.\varTokens$ is a non-decreasing function of $t$.\footnote{%
Throughout, we omit the superscript $t$ from $\Object^{t}$ when we wish to
address the dynamic nature of the object $\Object$ whose state may vary over
time.}

%%%%%%%%%%%%%%%%%%%%%%%%%%%%%%%%%%%%%%%
\Paragraph{Two-Layer Structure.}
%%%%%%%%%%%%%%%%%%%%%%%%%%%%%%%%%%%%%%%
To simplify the algorithm's presentation, we divide it, logically, into two
layers:
a lower \emph{channel layer} and an upper \emph{principal layer}.
Rather than sending messages directly, the principal layer in (busy) primary
nodes uses the service of the channel layer for communication.
This service, called \emph{channels}, bears similarities to virtual circuits
such as TCP connections.
As in the case of TCP connections, a channel $\chi$ between two primary nodes
$p, p' \in P$
can be released, and sometimes later, a new channel $\chi'$ between $p$ and
$p'$ may be created.
An important feature of the channels is that a message sent from $p$ to $p'$
as part of $\chi$ does not arrive as part of $\chi'$;
this is formalized, together with several other important assurances of the
channel layer, in \Grnt{}\
\ref{grnt:algorithm:interface:operational-implies-busy}--%
\ref{grnt:algorithm:interface:message-complexity} 
below.

A reader who is familiar with the nuts and bolts of virtual circuits knows
that the task of ensuring such properties is messy, sometimes non-trivial, but
nonetheless possible.
This is the reason we defer the description of the implementation of the
channel layer to \Sect{}~\ref{sec:channel-layer-algorithm}, whereas the
principal layer's implementation is presented in the current section.
For now, let us just say that in the channel layer, each primary node
$p \in P$
updates all the utility nodes in $U(p)$ whenever its status changes from busy
to non-busy or vice versa.
A (non-faulty) utility node
$u \in U$
chooses two primary nodes in $P(u)$ that reported they are busy, and creates a
channel between them.
This channel is released by $u$ when (and only when) $u$ hears that one of the
channel's primary nodes is no longer busy.
We now provide a more formal description of the channels and the interface
between the two layers.

%%%%%%%%%%%%%%%%%%%%%%%%%%%%%%%%%%%%%%%
\Paragraph{The Channels.}
%%%%%%%%%%%%%%%%%%%%%%%%%%%%%%%%%%%%%%%
The role of the channels is to enable (duplex) communication among (unordered)
pairs of busy primary nodes.
For
$p, p' \in P$,
each
$\{ p, p' \}$-channel
$\chi$ is associated with a utility node
$u \in U(p) \cap U(p')$,
referred to as the channel's \emph{mediator} that relays the messages that $p$
and $p'$ exchange with each other as part of $\chi$, referred to hereafter as
\emph{relayed messages}.
The algorithm is designed so that for every
$u \in U$
and
$p, p' \in P$,
at any given time, the system includes at most one channel mediated by $u$ and
zero or more
$\{ p, p' \}$-channels,
each mediated by its own utility node in
$U(p) \cap U(p')$.

To keep track of the channels, a primary node
$p \in P$
maintains the local variable
$p.\varMediators \subseteq U(p)$
that stores the mediators of the
$\{ p, \cdot \}$-channels;
a utility node
$u \in U$
maintains the local variable
$u.\varChannel \subseteq P(u)$
defined so that
$u.\varChannel = \{ p, p' \}$
if $u$ mediates a
$\{ p, p' \}$-channel,
and
$u.\varChannel = \emptyset$
otherwise.
For a
$\{ p, p' \}$-channel
$\chi$ mediated by $u$, we refer to the time
$t > 0$
at which $u$ sets
$u.\varChannel \gets \{ p, p' \}$
as the \emph{creation time} of $\chi$;
we refer to the earliest time
$\bar{t} > t$
such that
$u.\varChannel^{\bar{t}+} \neq \{ p, p' \}$
as the \emph{release time} of $\chi$.

We emphasize that channel $\chi$ is created (resp., released) once and if the
mediator $u$ creates (resp., releases) a
$\{ p, p' \}$-channel
$\chi'$
at time
$t' \neq t$
(resp.,
$\bar{t}' \neq \bar{t}$),
then $\chi'$ and $\chi$ are considered to be two different channels.
Notice that the placeholders $\chi$ and $\chi'$ are introduced for the sake of
the discussion and we do not assume that the primary nodes $p$ and $p'$ agree
on ``common names'' for the
$\{ p, p' \}$-channels.

Consider some primary node
$p \in P$
and utility node
$u \in U(p)$.
A key feature of the algorithm is that
if
$I \subset \Reals_{> 0}$
is a maximal time interval such that
$u \in p.\varMediators^{t}$
for all
$t \in I$,
then all relayed messages that $p$ sends to or receives from $u$ during $I$
belong to the same
$\{ p, p' \}$-channel
$\chi$ (mediated by $u$) for some primary
node
$p' \in P(u) - \{ p \}$;
we refer to $\chi$ and $p'$ as the \emph{$u$-channel} and \emph{$u$-peer},
respectively, of $p$ during $I$.
Notation-wise, let
$\Channel^{t}(p, u)$
and
$\Peer^{t}(p, u)$
be the operators that return the $u$-channel and $u$-peer, respectively, of
$p$ at time $t$ if
$u \in p.\varMediators^{t}$,
and $\bot$ otherwise.

The algorithm is designed so that if $\chi$ is a
$\{ p, p' \}$-channel
mediated by $u$, then the set
$\{ t \in \Reals_{> 0} : \Channel^{t}(p, u) = \chi \}$
is either empty or forms a single interval of the time axis.
This is in contrast to the set
$\{ t \in \Reals_{> 0} : \Peer^{t}(p, u) = p' \}$
that may form multiple intervals, each corresponding to a different
$\{ p, p' \}$-channel
mediated by $u$.

Notice that
$\Channel^{t}(p, u) = \chi \neq \bot$
does not imply that channel $\chi$ still exists at time $t$ as $\chi$ may
have been released by $u$ at time
$t - 1 \leq t' < t$
(which will be observed by $p$ at time
$t' < t'' \leq t' + 1$).
Moreover, while one may hope that the formula
$\Peer^{t}(p, u) = p'
\Longleftrightarrow
\Peer^{t}(p', u) = p$
is satisfied ``most of the time'', it cannot be satisfied all the time;
indeed, given a
$\{ p, p' \}$-channel
$\chi$,
since the variables $p.\varMediators$ and $p'.\varMediators$ are updated
asynchronously, we cannot expect the aforementioned formula to be satisfied
``shortly after'' (resp., ``shortly before'') $\chi$ is created (resp.,
released).
\Grnt{}~\ref{grnt:algorithm:interface:relayed-messages} ensures that
these inconsistencies do not introduce ``misunderstandings''.

%%%%%%%%%%%%%%%%%%%%%%%%%%%%%%%%%%%%%%%
\Paragraph{The Interface between the Layers.}
%%%%%%%%%%%%%%%%%%%%%%%%%%%%%%%%%%%%%%%
The channel layer is responsible for maintaining the channels by updating the
$\varMediators$ and $\varChannel$ variables and for handling the delivery of
relayed messages among peers.
The principal layer governs the token gathering process among the busy primary
nodes that communicate with their peers over the channels.
For a primary node
$p \in P$,
the main component in the interface between the two layers is the set
$\Channels^{t}(p) = \{ \Channel^{t}(p, u) : u \in p.\varMediators^{t} \}$
that captures $p$'s channels at time
$t > 0$.
It is assumed that node $p$'s channel layer maintains the set $\Channels(p)$
while hiding its implementation details;
node $p$'s principal layer then uses the channels in $\Channels(p)$ to
exchange relayed messages with the principal layer of $p$'s peers.

A primary node
$p \in P$
is regarded as \emph{operational} at time
$t > 0$
if
$\Channels^{t}(p) \neq \emptyset$;
let
$\OperationalNodes^{t}$
be the set of operational primary nodes at time $t$.
A
$\{ p, p' \}$-channel
$\chi$ is regarded as \emph{operational} at time
$t > 0$
if
$\chi \in \Channels^{t}(p) \cap \Channels^{t}(p')$;
let
$\OperationalChannels^{t}$
be the set of operational channels at time $t$.
These notions facilitate the formulation of the key assurances that the
channel layer provides to the principal layer, stated in \Grnt{}\
\ref{grnt:algorithm:interface:operational-implies-busy}--%
\ref{grnt:algorithm:interface:message-complexity}
(established in \Sect{}~\ref{sec:channel-layer-algorithm}).

\begin{guarantee}
\label{grnt:algorithm:interface:operational-implies-busy}
$\OperationalNodes^{t} \subseteq \Busy^{t}$
for every
$t > 0$.
\end{guarantee}

\begin{guarantee}
\label{grnt:algorithm:interface:operatioanl-channel-generation}
For every
$t_{0} > 0$,
if
$| \{ p \in P : p \in \Busy^{t} \text{ for all } t_{0} < t \leq t_{0} + 4 \} |
\geq
2$,
then there exists
$t_{0} < t \leq t_{0} + 4$
such that
$\OperationalChannels^{t+} \neq \emptyset$.
\end{guarantee}

\begin{guarantee}
\label{grnt:algorithm:interface:operational-channel-continues}
Consider a
$\{ p, p' \}$-channel
$\chi$ and time
$t > 0$,
and assume that
$\chi \in \OperationalChannels^{t}$.
Then,
$\chi \in \OperationalChannels^{t+}$
if and only if
$p ,p' \in \Busy^{t+}$.
Moreover, if
$p \notin \Busy^{t+}$,
then
$\chi \notin \Channels^{(t + 2)+}(p')$.
\end{guarantee}

\begin{guarantee}
\label{grnt:algorithm:interface:relayed-messages}
The following conditions are satisfied for every primary node
$p \in P$,
time
$t > 0$,
and
$\{ p, p' \}$-channel
$\chi \in \Channels^{t}(p)$:
(I)
If the principal layer of $p$ receives a relayed message $\mu$ over
$\chi$ at time $t$, then $\mu$ was sent over $\chi$ by the principal layer of
$p'$ at time
$t - 2 \leq t' < t$
(in other words, $\mu$ could not have originated from some past/future channel
of $p$ with the same mediator).
(II)
If the principal layer of $p$ sends a relayed message $\mu$ over $\chi$ at
time $t$, then either
(II.a)
$\mu$ is received over $\chi$ by the principal layer of $p'$ at time
$t < t' \leq t + 2$;
or
(II.b)
$\mu$ becomes irrelevant because $p'$ turned from busy to non-busy, leading to
the removal of $\chi$ from $\Channels(p')$.
\end{guarantee}

\begin{guarantee}
\label{grnt:algorithm:interface:degree-bound}
$| \Channels^{t}(p) | \leq O (\sqrt{n \log n})$
for every
$p \in P$
and
$t > 0$.
\end{guarantee}

\begin{guarantee}
\label{grnt:algorithm:interface:message-complexity}
The channel layer sends
$O (\sqrt{n \log n})$
messages per token.
\end{guarantee}

%%%%%%%%%%%%%%%%%%%%%%%%%%%%%%%%%%%%%%%
\subsubsection{The Principal Layer}
%%%%%%%%%%%%%%%%%%%%%%%%%%%%%%%%%%%%%%%
We now turn to describe the implementation of the principal layer, building on
the interface assured by \Grnt{}\
\ref{grnt:algorithm:interface:operational-implies-busy}--%
\ref{grnt:algorithm:interface:message-complexity}.
Pseudocode is provided in \Sect{}~\ref{sec:pseudocode-principal}, consisting of Algorithms \ref{alg:principal-layer}--\ref{alg:procedure-begin-new-phase}, along with Tables \ref{tab:messages-principal} and \ref{tab:variables-principal}.
Recall that this layer is implemented over the (busy) primary nodes
$p \in P$
that exchange relayed messages with their peers over the channels in
$\Channels(p)$ (maintained by the channel layer);
the utility nodes do not participate in this layer and are abstracted away
from its description.
The policy of the principal layer for a non-operational node
$p \in P$
is straightforward:
$p$ performs (as many as possible) team formation operations if
$p.\varTokens \geq \sigma$,
and does nothing otherwise.
The interesting part of the principal layer algorithm addresses the
operational nodes that ``compete'' with their peers over the right to serve as
the local gathering point for the tokens.
To this end, if
$p.\varTokens, p'.\varTokens < \sigma$
for some
$p, p' \in \OperationalNodes$
and
$\Channels(p) \cap \Channels(p') \neq \emptyset$,
then (local) symmetry breaking is needed to determine whether $p$ transports
its tokens to $p'$ or $p'$ transports its tokens to $p$ (or neither).

The principal layer resolves this symmetry breaking challenge by following
a ``star-shaped'' token transportation pattern.
Specifically, an operational node
$p \in \OperationalNodes$
runs consecutive \emph{phases} so that in each phase, $p$ assumes a
\emph{phase type} that can be either \emph{center} or \emph{arm}.
This phase type is determined by $p$ via an (unbiased) coin toss when the phase
begins and is stored in the local variable
$p.\varPhaseType \in \{ \Center, \Arm \}$
that $p$ maintains.
The crux of the (dynamic) classification into phase types is that a center
node attempts to collect the tokens from its arm peers so that a phase of a
center node $p$ is successful if one or more of $p$'s arm peers transport
their tokens to $p$;
a phase of an arm node $p$ is successful if $p$ transports its tokens to
exactly one of its center peers.

Each phase $\phi$ of node $p$ proceeds according to the following
\emph{request-response mechanism}:
When $\phi$ begins, node $p$ sends a \emph{request message} over each channel
in
$\Channels(p)$;
phase $\phi$ ends once $p$ has received a \emph{response message} for each
request message sent at the beginning of $\phi$.
(As discussed soon, $\phi$ may end abruptly if $p$ decides to transport
its tokens to one of its peers, in which case, $p$ becomes non-busy and hence,
also non-operational.)
To implement this request-response mechanism, $p$ maintains the local variable
$p.\varAwaitingResponse(\chi) \in \{ \True, \False \}$
for each
$\chi \in \Channels(p)$,
setting
$p.\varAwaitingResponse(\chi) \gets \True$
when a request message is sent over $\chi$ and resetting
$p.\varAwaitingResponse(\chi) \gets \False$
when a response message is received over $\chi$.

When phase $\phi$ ends, node $p$ checks if the inequality
$p.\varTokens \geq \sigma$
holds and if it does, performs (as many as possible) team formation
operations.
To simplify the presentation, we treat any tokens injected into $p$ during
phase $\phi$ as if they are injected when $\phi$ ends;
we emphasize that the injected tokens are accounted for when checking if the
inequality
$p.\varTokens \geq \sigma$
holds.\footnote{%
To adhere to the formal event driven model defined in
\Sect{}~\ref{sec:preliminaries:model}, that forbids concurrent activation
events, one can move the token injection event ``shortly after'' the end of
phase $\phi$ so that this event triggers the team formation operations (if
any) or the beginning of the subsequent phase (if $p$ is still operational).}

The type of the request messages depends on the phase type of their sender:
the requests of the center nodes are $\msgTokensPlease$ messages, whereas the
requests of the arm nodes are $\msgWaiting$ messages.
The reaction of the operational nodes to an incoming request, including the type
of the corresponding response message, depends on the phase type of the
receiver as well as on the type of the incoming request.
To this end, consider an operational node
$p \in \OperationalNodes$
that receives a request $\mu$ over channel
$\chi \in \Channels(p)$
during phase $\phi$.

Assume first that 
$\mu = \msgTokensPlease$.
If $\phi$ is a center phase, then $p$ reacts by sending a $\msgNoTransport$
response.
If $\phi$ is an arm phase, then $p$ reacts by sending a
$\msgTransport(p.\varTokens)$
response that transports the tokens held by $p$ (excluding those injected
during $\phi$) to the peer of $p$ at the other end of $\chi$.
Consequently, $p$ becomes non-busy and the channels in $\Channels(p)$ are
released (the channel layer takes care of that), making $p$ non-operational;
this is the one exception to the request-response mechanism, where phase
$\phi$ ends abruptly without waiting for all the responses (since
$\Channels(p)$ is emptied,
\Grnt{}~\ref{grnt:algorithm:interface:relayed-messages} ensures that
the ``missing responses'' are dropped and do not interfere with future
channels of $p$).

Now, assume that
$\mu = \msgWaiting$.
If $\phi$ is an arm phase, then $p$ reacts by sending a $\msgGoOn$ response.
If $\phi$ is a center phase, then $p$ delays its response until the current
phase $\phi$ ends and the phase type of the next phase $\phi'$ of $p$ is
revealed.
At this stage, if $\phi'$ is a center phase, then $p$ ``forgets about'' $\mu$
and continues as usual with $\phi'$;
in particular, $p$ sends $\msgTokensPlease$ requests over all channels in
$\Channels(p)$, including $\chi$ --- this $\msgTokensPlease$ request over
$\chi$ plays a key role because it is guaranteed to prompt the peer $p'$ of
$p$ on the other end of $\chi$ to transport its tokens if $p'$ did not do so
already.
If $\phi'$ is an arm phase, then $p$ first sends a $\msgGoOn$ message over
$\chi$ and then, continues as usual with $\phi'$ (sending $\msgWaiting$
requests over all channels in $\Channels(p)$).
To implement this policy, $p$ maintains the local variable
$p.\varDelayingResponse(\chi) \in \{ \True, \False \}$
for each channel
$\chi \in \Channels(p)$,
setting
$p.\varDelayingResponse(\chi) \gets \True$
when $p$ receives a $\msgWaiting$ message over $\chi$ as part of a center
phase $\phi$ and resetting
$p.\varDelayingResponse(\chi) \gets \False$
at the beginning of the next phase $\phi'$, after a $\msgGoOn$ response was
sent over $\chi$ if needed (i.e., if $\phi'$ is an arm phase).
Table~\ref{tab:principal-request-behavior} provides a summary of the
aforementioned reactions.

\begin{table}[ht]
\centering
\begin{tabular}{@{}c|c|c@{}}
\toprule
& \textbf{$p.\varPhaseType = \Arm$} 
& \textbf{$p.\varPhaseType = \Center$} \\
\hline
$\mu = \msgWaiting$ 
& \multicolumn{1}{l|}{send $\msgGoOn$ over $\chi$}
& \multicolumn{1}{l}{set $p.\varDelayingResponse(\chi) \gets \True$} 
\\
\hline
& \multicolumn{1}{l|}{send $\msgTransport(p.\varTokens)$ over $\chi$}
& 
\\
$\mu = \msgTokensPlease$
& \multicolumn{1}{l|}{set $p.\varTokens \gets 0$ (notify channel layer)} 
& \multicolumn{1}{l}{send $\msgNoTransport$ over $\chi$} 
\\
& \multicolumn{1}{l|}{end current phase (abruptly)}
& 
\\
\bottomrule
\end{tabular}
\caption{\label{tab:principal-request-behavior}%
The reaction of a primary node
$p \in P$
upon receiving a request message $\mu$ over channel
$\chi \in \Channels(p)$
depending on
$p.\varPhaseType$.}
\end{table}

We emphasize that the
$p.\varAwaitingResponse(\chi)$
and
$p.\varDelayingResponse(\chi)$
variables of $p$ are maintained only for channels
$\chi \in \Channels(p)$.
In particular, if channel $\chi$ is added to (resp., removed from)
$\Channels(p)$ in the midst of phase $\phi$, then the variables
$p.\varAwaitingResponse(\chi)$
and
$p.\varDelayingResponse(\chi)$
are created (resp., deleted) with it, where both of them are
initialized to $\False$.

%%%%%%%%%%%%%%%%%%%%%%%%%%%%%%%%%%%%%%%%%%%%%%%%%%%%%%%%%%%%%%%%%%%%%%%%%%%%%%
\section{Analysis of the Algorithm --- Correctness and Message Load}
\label{sec:analysis}
%%%%%%%%%%%%%%%%%%%%%%%%%%%%%%%%%%%%%%%%%%%%%%%%%%%%%%%%%%%%%%%%%%%%%%%%%%%%%%
Throughout this section, we fix some adversarial policy $\Adv$ and analyze the
execution of algorithm \AlgTF{} under $\Adv$.
We start with some useful definitions, followed by \Obs{}\
\ref{obs:analysis:phase-length} and \ref{obs:analysis:transport-in-transit},
that capture basic features of \AlgTF{}, and
\Lem{}~\ref{lem:analysis:operational-channel}, that serves as the cornerstone
of the entire analysis.
\Sect{}~\ref{sec:analysis:safety-liveness} is then dedicated to
establishing the algorithm's correctness while the message load analysis is
presented in \Sect{}~\ref{sec:analysis:message-load}.
The reaction time analysis is deferred to \Sect{}~\ref{sec:reaction-time}
as it builds on a certain implementation feature of the channel layer,
presented in \Sect{}~\ref{sec:channel-layer-algorithm}.
Throughout the analysis, we condition on the event that the assertion of
\Lem{}~\ref{lem:pu-graph} holds.

We say that a primary node
$p \in P$
\emph{retires} at time
$t > 0$
if
$p \in \OperationalNodes^{t} - \Busy^{t+}$,
observing that this holds if and only if a phase $\phi$ of $p$ ends at time
$t$ and either
(1)
$\phi$ is an arm phase that ends abruptly and $p$ sends a $\msgTransport$
message at time $t$;
or
(2)
$p$ performs team formation(s) at time $t$.
We say that a
$\{ p_{1}, p_{2} \}$-channel
$\chi$ \emph{retires} at time
$t > 0$
if
$\chi \in \OperationalChannels^{t} - \OperationalChannels^{t+}$,
observing that by
\Grnt{}~\ref{grnt:algorithm:interface:operational-channel-continues},
this holds if and only if
$\chi \in \OperationalChannels^{t}$
and $p_{i}$ retires at time
$t$ for some (exactly one)
$i \in \{ 1, 2 \}$.
Notice that if a channel $\chi$ becomes operational (resp., retires) at time
$t > 0$
(resp., at time
$t' > t$),
then
$\chi$ is created (resp., released) by its mediator strictly before time $t$
(resp., strictly after time $t'$).

For
$p \in P$
and
$t > 0$,
let
$\Next_{p}(t)$
be the operator that returns the earliest time
$t' > t$
such that a phase of $p$ ends at time $t'$ if
$p \in \OperationalNodes^{t+}$,
and $t$ otherwise (\Obs{}~\ref{obs:analysis:phase-length}
ensures that this operator is well defined).
Notice that if we fix time
$t > 0$,
including all coin tosses up to (excluding) time $t$, then
$\Next_{p}(t)$
is a random variable that depends on the coin tosses from time $t$ onward.
Notice further that while the condition
$p \in \OperationalNodes^{t}
\land
\Next_{p}(t) = t$
does not imply that $p$ retires at time $t$ (it may be the case that $p$ is no
longer operational at time $t+$ although it is still busy), this condition
does imply that every channel
$\chi \in \Channels^{t}(p)$
retires at or before time $t$.

The
$\Next_{p}(\cdot)$
operator is extended inductively as follows:
let
$\Next_{p}^{0}(t) = t$;
for
$i > 0$,
let
$\Next_{p}^{i}(t) = \Next_{p}( \Next_{p}^{i - 1}(t) )$.
This extension gives us a convenient handle for reasoning about events that
occur ``within the next $i$ phases'' of $p$.

\begin{observation}
\label{obs:analysis:phase-length}
A phase that begins at time
$t > 0$
is guaranteed to end at or before time
$t + 8$.
\end{observation}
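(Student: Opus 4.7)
The plan is to handle the two phase types in sequence, proving a tight bound on center phases first so that the arm-phase argument (whose upper bound depends on what peers do during their own phases) is not circular.

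\textbf{Step 1 — Center phases end within 4 time units.} Let $p$'s center phase begin at time $t$, so $p$ sends a $\msgTokensPlease$ request at time $t$ over every channel $\chi \in \Channels^{t+}(p)$. I would trace each such channel: by Guarantee~\ref{grnt:algorithm:interface:relayed-messages}(II), either (a) the request reaches the peer $p'$ by time $t+2$, or (b) $p'$ turned non-busy, in which case Guarantee~\ref{grnt:algorithm:interface:operational-channel-continues} removes $\chi$ from $\Channels(p)$ by time $(t+2)+$, and the corresponding $p.\varAwaitingResponse(\chi)$ flag is deleted. In case (a), the reaction prescribed by the principal-layer pseudocode is immediate (a $\msgNoTransport$ if $p'$ is in a center phase, or a $\msgTransport$ if $p'$ is in an arm phase), so by the same guarantee the response reaches $p$ by time $t+4$ (again, unless the channel becomes irrelevant in the meantime, which also clears the await). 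Either way every awaiting flag of $p$ is cleared by time $t+4$, so the phase ends at or before $t+4$. Crucially, this argument does not invoke any bound on arm phases, breaking the potential circularity.

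\textbf{Step 2 — Arm phases end within 8 time units.} Let $q$'s arm phase begin at time $t$; $q$ sends $\msgWaiting$ over every $\chi \in \Channels^{t+}(q)$. For each such channel with peer $p$, the message is delivered at some time $\tau \le t+2$ (or the channel is dropped, clearing the await). I split on $p$'s phase type at the delivery moment:
\begin{itemize}[nosep,leftmargin=*]
\item If $p$ is in an arm phase at time $\tau$, then $p$ immediately replies with $\msgGoOn$, which reaches $q$ by time $\tau+2 \le t+4$.
\item If $p$ is in a center phase at time $\tau$, then $p$ delays. Let $T_p$ be when this center phase of $p$ ends. By Step 1 and the fact that the phase started no later than $\tau$, we have $T_p \le \tau + 4 \le t+6$. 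At time $T_p$, $p$ starts its next phase: if it is an arm phase, $p$ first sends a $\msgGoOn$ over $\chi$; if it is a center phase, $p$ sends a $\msgTokensPlease$ over $\chi$. In either case a message is dispatched over $\chi$ at time $T_p$ and reaches $q$ by time $T_p+2 \le t+8$. In the $\msgGoOn$ subcase, the await on $\chi$ is cleared directly. In the $\msgTokensPlease$ subcase, $q$ is still in its arm phase, so it reacts by sending $\msgTransport$, becoming non-busy; its remaining channels are released and the phase ends abruptly, in line with the ``one exception'' described in the principal layer.
\end{itemize}
Thus every await of $q$ is discharged by time $t+8$, and the phase ends at or before $t+8$.

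\textbf{Where the difficulty lies.} The main obstacle is to avoid circular reasoning between peers whose phases wait on one another. The trick is to notice that center phases never delay any responses — they send $\msgTokensPlease$ and immediately react to all requests — so their length is a pure function of message delays and can be bounded first. Once that is pinned at 4, the arm-phase analysis becomes linear rather than self-referential. A secondary subtlety that I would be careful about is the handling of channels that get created or released mid-phase: Guarantees~\ref{grnt:algorithm:interface:operational-channel-continues} and~\ref{grnt:algorithm:interface:relayed-messages}, together with the rule that $p.\varAwaitingResponse(\chi)$ is deleted along with $\chi$, ensure that ``missing'' responses cannot stall the phase.
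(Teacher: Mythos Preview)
Your proposal is correct and follows essentially the same two-step approach as the paper: first bound center phases by $4$ time units (since $\msgTokensPlease$ is answered immediately), then use that bound to show arm phases end within $8$ time units. Your write-up is more explicit than the paper's terse proof about the channel-drop cases and about the subcase where the delaying center peer's next phase is again a center phase (so that a $\msgTokensPlease$, rather than a $\msgGoOn$, is what terminates $q$'s wait), but the underlying argument is the same.
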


\begin{proof}
Consider a primary node
$p \in P$
and a phase $\phi$ of $p$ that begins at time $t$.
If $\phi$ is a center phase, then the $\msgTokensPlease$ requests that $p$
sends at time $t$ over the channels in
$\Channels^{t+}(p)$
are replied, with $\msgTransport$ or $\msgNoTransport$ responses, as soon as
they arrive;
by \Grnt{}~\ref{grnt:algorithm:interface:relayed-messages}, these
replies arrive to $p$ at or before time
$t + 4$.
If $\phi$ is an arm phase, then the $\msgWaiting$ request $\mu$ that $p$ sends
over a
$\{ p, p' \}$-channel
$\chi$ is not replied immediately if (and only if) $p'$ is a center node;
rather, $\mu$ is replied (or $\chi$ retires) at the beginning of the
subsequent phase of $p'$.
The assertion follows by
\Grnt{}~\ref{grnt:algorithm:interface:relayed-messages} as we have
already proved that the current phase of $p'$ ends within $4$ time units.
\end{proof}

\begin{observation}
\label{obs:analysis:transport-in-transit}
If a $\msgTransport$ response is in transit from a primary node $p$ to a
primary node $p'$ over a
$\{ p, p' \}$-channel
$\chi$ at time
$t > 0$,
then
$p' \in \Busy^{t}$
and
$\chi \in \Channels^{t}(p')$.
\end{observation}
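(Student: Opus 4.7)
My plan is to trace the life cycle of the $\msgTransport$ message back to the $\msgTokensPlease$ request that triggered it, and then invoke \Grnt{}~\ref{grnt:algorithm:interface:relayed-messages} together with the single-interval property of channel membership stated in \Sect{}~\ref{sec:algorithm}.

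First I would identify the trigger. By the design of the principal layer, an operational node $p$ sends a $\msgTransport$ response over a channel $\chi$ only as a reaction to receiving a $\msgTokensPlease$ over the same $\chi$ during an arm phase. Let $t_s < t$ be the time at which $p$ sent this $\msgTransport$. By \Grnt{}~\ref{grnt:algorithm:interface:relayed-messages}(I), the triggering $\msgTokensPlease$ was sent by the principal layer of $p'$ over $\chi$ at some time $t_0 \in [t_s - 2, t_s)$; since the principal layer sends relayed messages over $\chi$ only while $\chi \in \Channels(p')$, we obtain $\chi \in \Channels^{t_0}(p')$.

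Next, I would apply \Grnt{}~\ref{grnt:algorithm:interface:relayed-messages}(II) to the $\msgTransport$ sent by $p$ at time $t_s$. Since the message is in transit at time $t$, it is neither yet delivered nor discarded, which rules out case (II.b) and forces case (II.a): the principal layer of $p'$ receives the $\msgTransport$ over $\chi$ at some time $t'' \in (t_s, t_s + 2]$ with $t'' \geq t$. In particular, $\chi \in \Channels^{t''}(p')$. By the single-interval property stated in \Sect{}~\ref{sec:algorithm}, the set $\{t' > 0 : \chi \in \Channels^{t'}(p')\}$ forms one time interval; since it contains both $t_0$ and $t''$ with $t_0 < t \leq t''$, it also contains $t$, i.e., $\chi \in \Channels^t(p')$. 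Finally, $\Channels^t(p') \neq \emptyset$ yields $p' \in \OperationalNodes^t$, and by \Grnt{}~\ref{grnt:algorithm:interface:operational-implies-busy}, $p' \in \Busy^t$, completing the proof.

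I expect the main subtlety to be the interpretive point of justifying that ``in transit'' forbids case (II.b) of \Grnt{}~\ref{grnt:algorithm:interface:relayed-messages}: if the $\msgTransport$ were to become irrelevant due to $p'$ turning non-busy, one must argue that by time $t$ the message has already ceased to be in transit, so the hypothesis of the observation would fail. Once this is settled, the rest of the argument is essentially bookkeeping over the two guarantees and the single-interval property, with no delicate probabilistic or timing reasoning required.
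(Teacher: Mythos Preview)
Your overall structure is sensible, but the route you take through \Grnt{}~\ref{grnt:algorithm:interface:relayed-messages}(II) has a genuine gap at exactly the point you flag as a ``subtlety''. The dichotomy (II.a)/(II.b) describes what \emph{eventually} happens to the message; at time $t$, while the $\msgTransport$ is physically in transit, neither alternative has been decided. In particular, nothing prevents $p'$ from turning non-busy at some $t' \in (t, t_s + 2]$, in which case (II.b) is the branch that fires and the message is screened on arrival---yet it was still in transit at time $t$. Your proposed fix (``the message has already ceased to be in transit'') is not what (II.b) says: a message that will be screened does not retroactively stop being in transit.

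The paper bypasses this entirely by arguing at the principal-layer level rather than through the channel-layer guarantees. The $\msgTokensPlease$ request was sent by $p'$ during some center phase $\phi$, and the request--response mechanism keeps $p'.\varAwaitingResponse(\chi) = \True$ until a response arrives over $\chi$; hence $\phi$ cannot end before the $\msgTransport$ is delivered. While $\phi$ is ongoing, $p'$ is operational (so $p' \in \Busy^t$) and $\chi$ remains in $\Channels(p')$. This is precisely the fact you would need to rule out (II.b), so your detour through \Grnt{}~\ref{grnt:algorithm:interface:relayed-messages} and the single-interval property is both unnecessary and, as written, incomplete: closing your gap amounts to reproducing the paper's direct argument.
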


\begin{proof}
The primary node $p$ sends a $\msgTransport$ response over $\chi$ at time
$\bar{t} > 0$
only if a $\msgTokensPlease$ request $\mu$ is received by $p$ over $\chi$ at
time
$\bar{t}$.
Taking $\phi$ to be the phase during which node $p'$ sent the request $\mu$,
the assertion is satisfied since $\phi$ does not end before a reply for $\mu$
is received over $\chi$.
\end{proof}

\begin{lemma}
\label{lem:analysis:operational-channel}
Fix time
$t_{0} > 0$
and consider a
$\{ p_{1}, p_{2} \}$-channel
$\chi \in \OperationalChannels^{t_{0}}$.
With probability at least
$1 / 16$,
channel $\chi$ retires during the time interval
$[ t_{0}, \min_{i \in \{ 1, 2 \}} \Next_{p_{i}}^{3}(t_{0}) ]$.
\end{lemma}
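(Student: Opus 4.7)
The plan is to identify four fresh, independent phase-type coin tosses made within the window and exhibit one joint outcome (out of $2^4=16$) that provably forces $\chi$ to retire. First I would set $\tau = \min_{i\in\{1,2\}}\Next_{p_i}^3(t_0)$ and, by symmetry, assume WLOG that $\tau = \Next_{p_1}^3(t_0)$. For each $i \in \{1,2\}$, the window $[t_0,\tau]$ contains the three consecutive phases $\phi_i^{(1)},\phi_i^{(2)},\phi_i^{(3)}$ of $p_i$ ending at $\Next_{p_i}^{j}(t_0)$ for $j=1,2,3$, where $\phi_i^{(1)}$ is the phase in progress at $t_0$ (with a ``stale'' coin toss that could be correlated with $\chi\in\OperationalChannels^{t_0}$), while $\phi_i^{(2)}$ and $\phi_i^{(3)}$ begin strictly after $t_0$ and therefore have coin tosses that are jointly independent and independent of the conditioning event.

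Next I would zero in on the good scenario. Throughout, I assume that neither $p_1$ nor $p_2$ retires in $[t_0,\tau]$ (otherwise $\chi$ retires and we are done) and derive a contradiction on a prescribed four-coin outcome. Using \Obs{}~\ref{obs:analysis:transport-in-transit} and \Grnt{}\ \ref{grnt:algorithm:interface:operational-channel-continues}--\ref{grnt:algorithm:interface:relayed-messages}, the assumption forces $\chi$ to remain operational throughout $[t_0,\tau]$, so every request sent by one endpoint over $\chi$ is answered by a response from the other within at most $4$ time units (two hops of $\leq 2$), and the $\varAwaitingResponse/\varDelayingResponse$ variables behave as specified by the principal layer. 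I then take the targeted outcome to be: $p_1.\varPhaseType = \Center$ in both $\phi_1^{(2)}$ and $\phi_1^{(3)}$, and $p_2.\varPhaseType = \Arm$ in both $\phi_2^{(2)}$ and $\phi_2^{(3)}$. Using the $\msgWaiting$/$\msgGoOn$ selective waiting mechanism together with Table~\ref{tab:principal-request-behavior}, I would trace the handshake along $\chi$: the $\msgWaiting$ that arm $p_2$ sends at the start of $\phi_2^{(k)}$ arrives at $p_1$ either during a center phase $\phi_1^{(j)}$ (forcing $p_1$ to set $\varDelayingResponse(\chi)=\True$ and, at the start of the next center phase $\phi_1^{(j+1)}$, to issue a $\msgTokensPlease$ over $\chi$) or during no phase of $p_1$ before $\tau$, which is impossible since phases last at most $8$ time units by \Obs{}~\ref{obs:analysis:phase-length}. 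That $\msgTokensPlease$ then reaches $p_2$ within $2$ time units, while $p_2$ is still in an arm phase, and by the arm-receives-$\msgTokensPlease$ rule $p_2$ responds $\msgTransport(p_2.\varTokens)$ and becomes non-busy, contradicting our no-retirement assumption.

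The hard part will be the timing bookkeeping: aligning the two fresh phases of each node with the propagation delays so that the $\msgTokensPlease$ sent by center $p_1$ in its second fresh center phase $\phi_1^{(3)}$ is guaranteed to arrive at $p_2$ before $p_2$'s third phase ends. Here one has to budget at most $8$ units per phase and at most $2$ per relayed message, and to handle the corner case where $\chi$ was added to $\Channels(p_1)$ or $\Channels(p_2)$ in the midst of $\phi_i^{(1)}$ (so that no request was issued over $\chi$ in the current phase, and the handshake only starts at $\phi_i^{(2)}$). Having two fresh phases per node—rather than one—is exactly what absorbs this ``lost first phase'' and ensures the chain of events completes by $\tau$.

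Finally, the probability calculation is immediate: the four coin tosses corresponding to the phase types of $\phi_1^{(2)}, \phi_1^{(3)}, \phi_2^{(2)}, \phi_2^{(3)}$ are mutually independent unbiased coin tosses, each generated strictly after $t_0$ and hence independent of $\chi \in \OperationalChannels^{t_0}$ (which is determined by events up to $t_0$). Therefore the prescribed good outcome occurs with probability exactly $(1/2)^4 = 1/16$, yielding the claimed bound.
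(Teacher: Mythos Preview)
Your plan has a genuine gap: the prescribed four-coin outcome does not force $\chi$ to retire by $\tau$. Consider the following scenario, which is fully consistent with your setup. Take $\phi_1^{(1)}$ and $\phi_2^{(1)}$ both $\Center$; give $p_1$ no channel besides $\chi$, and give $p_2$ an additional channel to a peer that is slow to respond. All relayed messages over $\chi$ are fast. Then every $\msgTokensPlease$ that $p_1$ sends (at the start of $\phi_1^{(2)}$ and of $\phi_1^{(3)}$, both $\Center$ by your prescription) reaches $p_2$ while $p_2$ is still in its \emph{center} phase $\phi_2^{(1)}$, so $p_2$ answers $\msgNoTransport$ each time and $p_1$ cycles through all three phases in a short interval. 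Thus $\tau = \Next_{p_1}^{3}(t_0)$ is small and occurs before $p_2$ even tosses the coin for $\phi_2^{(2)}$; neither node has retired, so $\chi$ has not retired by $\tau$. Your claim that ``that $\msgTokensPlease$ then reaches $p_2$ \ldots\ while $p_2$ is still in an arm phase'' is exactly what fails here.

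There is also a circularity in your ``WLOG $\tau = \Next_{p_1}^{3}(t_0)$'': which endpoint realises the minimum is itself a function of the four coin tosses you are about to prescribe, so you cannot name the endpoints first and then fix an asymmetric outcome for them. The paper sidesteps both issues by sequentialising the conditioning. It first waits until the \emph{earlier} of the two current phases ends (a time determined by the past), conditions on \emph{that} node's fresh coin, and only then looks at the other node's state; this is packaged as \Lem{}~\ref{lem:analysis:operational-channel-with-center-peers}, where the crucial choice is that the node whose phase ends first becomes $\Arm$ (so its $\msgWaiting$ is held by the still-$\Center$ peer and it is stuck), and the peer's next phase is $\Center$ (so its $\msgTokensPlease$ reaches the stuck arm). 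Your prescription essentially inverts these roles, and that is why the timing argument collapses.
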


We first prove Lemma~\ref{lem:analysis:operational-channel-with-center-peers}, which forms the main component of the proof of Lemma~\ref{lem:analysis:operational-channel}, and then proceed to prove the latter.

\begin{lemma}
\label{lem:analysis:operational-channel-with-center-peers}
Fix time
$t_{0} > 0$
and consider a
$\{ p_{1}, p_{2} \}$-channel
$\chi \in \OperationalChannels^{t_{0}}$.
If
$p_{1}.\varPhaseType^{t_{0}} = p_{2}.\varPhaseType^{t_{0}} = \Center$,
then with probability at least
$1 / 4$,
channel $\chi$ retires during the time interval
$[ t_{0}, \min_{i \in \{ 1, 2 \}} \Next_{p_{i}}^{2}(t_{0}) ]$.
\end{lemma}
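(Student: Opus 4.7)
The plan is to isolate a single combination of phase-type coin tosses of $p_1$ and $p_2$ that, on its own, occurs with probability $1/4$ and forces $\chi$ to retire within the claimed interval. Set $s_i = \Next_{p_i}(t_0)$ for $i \in \{1,2\}$. If either $p_i$ retires at $s_i$ (necessarily via team formation at the end of its current center phase), then $\chi$ retires at $s_i \leq \Next^2_{p_i}(t_0)$ and we are done. Otherwise both $p_1$ and $p_2$ launch fresh phases at $s_1$ and $s_2$, each choosing its type via an independent unbiased coin toss; it therefore suffices to show that, conditioned on neither retiring at $s_i$, the event $E_{\mathrm{AC}}$ that $p_1$'s new phase is an arm phase and $p_2$'s new phase is a center phase --- which has conditional probability $1/4$ --- deterministically retires $\chi$ in $[t_0, \min_i \Next^2_{p_i}(t_0)]$.

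Assume $E_{\mathrm{AC}}$. By Guarantee~\ref{grnt:algorithm:interface:operational-channel-continues}, $\chi$ stays in $\Channels(p_1) \cap \Channels(p_2)$ as long as both endpoints are busy. At time $s_1$, node $p_1$ begins its arm phase and sends $\msgWaiting$ over $\chi$, marking $p_1.\varAwaitingResponse(\chi)$ as awaiting a response. At time $s_2$, node $p_2$ begins its center phase and sends $\msgTokensPlease$ over $\chi$, which by Guarantee~\ref{grnt:algorithm:interface:relayed-messages} reaches $p_1$ at some time $t^{*} \in (s_2, s_2 + 2]$. The argument then hinges on showing that $p_1$ is still in its arm phase at $t^{*}$, at which point the principal-layer rules force $p_1$ to answer $\msgTokensPlease$ with $\msgTransport$, turn non-busy, and thereby retire $\chi$ exactly at $t^{*}$.

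The most delicate step, and the one I expect to be the main obstacle, is to rule out that $p_1$'s arm phase ends prematurely before $t^{*}$. This can happen only if $p_1.\varAwaitingResponse(\chi)$ has already been reset, which in turn requires a $\msgGoOn$ to arrive from $p_2$ over $\chi$. Whenever $p_1$'s $\msgWaiting$ reaches $p_2$, however, $p_2$ is in a center phase (either the one that was active at $t_0$ or the one starting at $s_2$, both center by $E_{\mathrm{AC}}$), so $p_2$ merely delays its response. If the reception happens before $s_2$, then at $s_2$ the principal-layer ``forget and continue'' rule kicks in (because the next phase is again center) and replaces the would-be $\msgGoOn$ by a $\msgTokensPlease$. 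If the reception happens after $s_2$, then the delayed response is due only at $\Next^2_{p_2}(t_0)$; but that time exceeds $t^{*}$, because $p_2$'s center phase cannot end before it receives $p_1$'s reply to the $\msgTokensPlease$ sent at $s_2$, and $p_1$ cannot send such a reply earlier than $t^{*}$. In both cases, no $\msgGoOn$ reaches $p_1$ before $t^{*}$, so $p_1$ is indeed in its arm phase at $t^{*}$ (or has already retired, which itself retires $\chi$ inside the interval).

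It remains to verify the time bound. Since $p_1$'s arm phase ends abruptly at $t^{*}$, we have $\Next^2_{p_1}(t_0) = t^{*}$; the argument above also gives $\Next^2_{p_2}(t_0) > t^{*}$. Hence $t^{*} = \min_i \Next^2_{p_i}(t_0)$ and the retirement of $\chi$ lies in $[t_0, \min_i \Next^2_{p_i}(t_0)]$, as claimed.
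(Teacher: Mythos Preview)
Your argument has a real gap: you commit to the specific event $E_{\mathrm{AC}}$ (``$p_1$ arm, $p_2$ center'') without first ordering $s_1$ and $s_2$, and the deterministic claim fails when $s_2 < s_1$. In that case $p_2$ begins its second center phase at $s_2$ and sends $\msgTokensPlease$ over $\chi$; the adversary may deliver it to $p_1$ at some $t^{*}$ with $s_2 < t^{*} < s_1$, while $p_1$ is still in its \emph{original} center phase (the arm phase only begins at $s_1$). Node $p_1$ then replies $\msgNoTransport$, and $p_2$'s second center phase can end at $\Next^{2}_{p_2}(t_0) < s_1$ (the adversary simply delays some other response to $p_1$ over a different channel to push $s_1$ late). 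Neither endpoint has retired by that time, so $\chi$ has not retired by $\min_i \Next^{2}_{p_i}(t_0)$. In short, your ``delicate step'' carefully rules out that $p_1$'s arm phase ends \emph{too early}, but never rules out that it has not \emph{started yet} when $t^{*}$ occurs.

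The paper's proof avoids this by first relabelling so that $t_1 < t_2$ (i.e., $p_1$ is the node whose current center phase ends first) and only then conditioning on the coin at $t_1$ being arm; in the sub-case $t_1 < t_2 < t'_1$ it further conditions on the coin at $t_2$ being center. Processing the two coin tosses in chronological order is what guarantees that $p_1$'s arm phase is already running whenever $p_2$ acts, and it also sidesteps the issue of whether the conditioning event ``neither retires at $s_i$'' is independent of the earlier coin toss. Once you insert this WLOG ordering step and argue sequentially, your reasoning essentially coincides with the paper's.
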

\begin{proof}%
[Proof of \Lem{}~\ref{lem:analysis:operational-channel-with-center-peers}]
For
$i \in \{ 1, 2 \}$,
let
$t_{i} = \Next_{p_{i}}(t_{0})$
and let
$t'_{i} = \Next_{p_{i}}^{2}(t_{0})$;
we need to prove that with probability at least
$1 / 4$,
channel $\chi$ retires at or before time
$\min \{ t'_{1}, t'_{2} \}$.
If
$t_{i} = t_{0}$
for some
$i \in \{ 1, 2 \}$,
then $\chi$ retires at time $t_{0}$ and the assertion is satisfied, so assume
without loss of generality that
$t_{0} < t_{1} < t_{2}$,
implying that
$\chi \in \OperationalChannels^{t}$
for all
$t_{0} \leq t \leq t_{1}$.
If $\chi$ retires at time $t_{1}$, then $p_{1}$ retires at time $t_{1}$ and
$t'_{1} = t_{1} < t_{2} \leq t'_{2}$,
hence
$t_{1} = \min \{ t'_{1}, t'_{2} \}$
and the assertion is satisfied.
So assume that $\chi$ does not retire at time $t_{1}$, implying that a new
phase $\phi_{1}$ of $p_{1}$ starts at time $t_{1}$ and ends at time
$t'_{1} > t_{1}$.

With probability
$1 / 2$,
phase $\phi_{1}$ is an arm phase --- condition hereafter on this event.
Thus, the following claim holds:
(C1)
at time $t_{1}$, node $p_{1}$ sends a $\msgWaiting$ request $\mu_{1}$ over
$\chi$ and unless $\chi$ retires, phase $\phi_{1}$ cannot end as long as
$p_{1}$ does not receive a ($\msgGoOn$) response for $\mu_{1}$ over $\chi$.
If
$t_{2} > t'_{1}$,
then as
$t'_{2} \geq t_{2} > t'_{1}$,
it suffices to show that $\chi$ retires at time
$t'_{1} = \min \{ t'_{1}, t'_{2} \}$.
Recalling that $p_{2}$ is a center node at least until time $t_{2}$, we
conclude that $p_{2}$ does not respond to $\mu_{1}$ before time
$t_{2} > t'_{1}$,
hence $p_{1}$ does not receive a response for $\mu_{1}$ at or before time
$t'_{1}$.
Claim (C1) implies that $\chi$ retires at time $t'_{1}$, hence the
assertion holds.

In the remainder of this proof, assume that
$t_{1} < t_{2} < t'_{1}$.
If $\chi$ retires at time $t_{2}$, then $p_{2}$ retires at time $t_{2}$ and
$t'_{2} = t_{2} < t'_{1}$,
hence
$t_{2} = \min \{ t'_{1}, t'_{2} \}$
and the assertion is satisfied.
So assume that $\chi$ does not retire at time $t_{2}$, implying that a new
phase $\phi_{2}$ of $p_{2}$ starts at time $t_{2}$ and ends at time
$t'_{2} > t_{2}$.
With probability
$1 / 2$,
phase $\phi_{2}$ is a center phase --- condition hereafter on this event
(refer to Figure~\ref{fig:retire-lemma-timeline-arrows} for an illustration).
Therefore, the following claim holds:
(C2)
at time $t_{2}$, node $p_{2}$ sends a $\msgTokensPlease$ request $\mu_{2}$
over $\chi$ and unless $\chi$ retires, phase $\phi_{2}$ cannot end as long as
$p_{2}$ does not receive a ($\msgTransport$ or $\msgNoTransport$) response for
$\mu_{2}$ over $\chi$.
Recalling that $p_{2}$ is a center also before time $t_{2}$, we deduce that the
following claim holds as well:
(C3)
node $p_{2}$ does not respond to the $\msgWaiting$ request $\mu_{1}$ at least
until time $t'_{2}$ (if at all).

\begin{figure}[ht]
\centering
\includegraphics[width=0.8\linewidth]{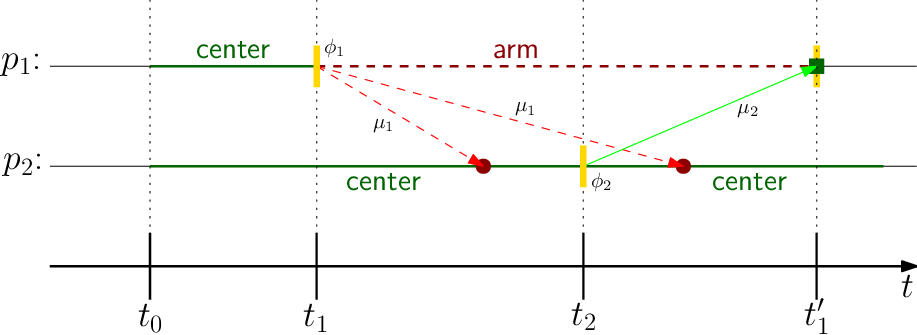}
\caption{\label{fig:retire-lemma-timeline-arrows}%
Timeline for the case where
$t_{1} < t_{2} < t'_{1}$. 
Some of the possible times for $\mu_{1}$ (resp., $\mu_{2}$) to arrive at
$p_{2}$ (resp., $p_{1}$).
Node $p_{2}$ awaits (as a center) a response for $\mu_{2}$ from $p_{1}$ before
$p_{2}$ responds (if at all) to $\mu_{1}$.
Meanwhile, node $p_{1}$ awaits (as an arm) a response for $\mu_{1}$ from
$p_{2}$ or for some node to request $p_{1}$'s tokens.}
\end{figure}

Since $\chi$ retires only when $p_{i}$ retires for some
$i \in \{ 1, 2 \}$
and since $p_{i}$ does not retire before phase $\phi_{i}$ ends, it follows,
by claims (C1) and (C2), that the following claim holds:
(C4)
there exists (at least one choice of)
$i \in \{ 1, 2 \}$
such that $p_{i}$ receives a response to $\mu_{i}$ over $\chi$ during
$\phi_{i}$.
By claim (C3), we deduce that claim (C4) must be satisfied for
$i = 2$,
i.e., node $p_{2}$ receives a response to $\mu_{2}$ over $\chi$ during
$\phi_{2}$;
let
$t_{2} < \bar{t} < t'_{2}$
be the arrival time of $\mu_{2}$ to $p_{1}$.
Node $p_{1}$ does not receive a response for $\mu_{1}$ before time $\bar{t}$
(again, by claim (C3)), thus
$t_{1} < t_{2} < \bar{t} \leq t'_{1}$,
implying that $p_{1}$ receives $\mu_{2}$ during phase $\phi_{1}$.
We conclude that $p_{1}$ receives the $\msgTokensPlease$ request $\mu_{2}$
during the arm phase $\phi_{1}$ and reacts by sending a $\msgTransport$
response over $\chi$ and retiring.
Therefore,
$\bar{t} = t'_{1} < t'_{2}$,
implying that $\chi$ retires at time
$t'_{1} = \min \{ t'_{1}, t'_{2} \}$,
thus completing the proof.
\end{proof}

\begin{proof}%
[Proof of \Lem{}~\ref{lem:analysis:operational-channel}]
For
$i \in \{ 1, 2 \}$,
let
$t_{i} = \Next_{p_{i}}(t_{0})$,
let
$t'_{i} = \Next_{p_{i}}^{2}(t_{0})$,
and let
$t''_{i} = \Next_{p_{i}}^{3}(t_{0})$;
we need to prove that with probability at least
$1 / 16$,
channel $\chi$ retires at or before time
$\min \{ t''_{1}, t''_{2} \}$.
If
$t_{i} = t_{0}$
for some
$i \in \{ 1, 2 \}$,
then $\chi$ retires at time $t_{0}$ and the assertion is satisfied, so assume
without loss of generality that
$t_{0} < t_{1} < t_{2}$,
implying that
$\chi \in \OperationalChannels^{t}$
for all
$t_{0} \leq t \leq t_{1}$.
If $\chi$ retires at time $t_{1}$, then $p_{1}$ retires at time $t_{1}$ and
$t''_{1} = t'_{1} = t_{1} < t_{2} \leq t'_{2}$,
hence
$t_{1} = \min \{ t''_{1}, t''_{2} \}$
and the assertion is satisfied.
So assume that $\chi$ does not retire at time $t_{1}$, implying that a new
phase $\phi_{1}$ of $p_{1}$ starts at time $t_{1}$ and ends at time
$t'_{1} > t_{1}$.

With probability
$1 / 2$,
phase $\phi_{1}$ is a center phase --- condition hereafter on this event.
Therefore, the following claim holds:
(C1)
at time $t_{1}$, node $p_{1}$ sends a $\msgTokensPlease$ request $\mu_{1}$
over $\chi$ and unless $\chi$ retires, phase $\phi_{1}$ cannot end as long as
$p_{1}$ does not receive a ($\msgTransport$ or $\msgNoTransport$) response for
$\mu_{1}$ over $\chi$.
If $p_{2}$ is a center at time $t_{1}$, then we establish the assertion by
applying
\Lem{}~\ref{lem:analysis:operational-channel-with-center-peers}
to time $t_{1}+$, observing that
$\Next_{p_{1}}^{2}(t_{1}+) = \Next_{p_{1}}^{3}(t_{0})$
and
$\Next_{p_{2}}^{2}(t_{1}+) \leq \Next_{p_{2}}^{3}(t_{0})$.
So, assume that $p_{2}$ is an arm at time $t_{1}$.

If the $\msgTokensPlease$ request $\mu_{1}$ arrives to $p_{2}$ (at or) before
time $t_{2}$, then $p_{2}$ receives this request during an arm phase, thus
$p_{2}$ reacts by sending a $\msgTransport$ response over $\chi$ and retiring.
Therefore, $\chi$ retires at time $t_{2}$ and by claim (C1),
$t_{2} < t'_{1} \leq t''_{1}$,
establishing the assertion as
$t_{2} \leq t'_{2} \leq t''_{2}$.
So, assume that $\mu_{1}$ does not arrive to $p_{2}$ (at or) before time
$t_{2}$, implying that
$t_{2} < t'_{1}$.

If $\chi$ retires at time $t_{2}$, then $p_{2}$ retires at time $t_{2}$ and
$t''_{2} = t'_{2} = t_{2} < t'_{1}$,
hence
$t_{2} = \min \{ t''_{1}, t''_{2} \}$
and the assertion is satisfied.
So assume that $\chi$ does not retire at time $t_{2}$, implying that a new
phase $\phi_{2}$ of $p_{2}$ starts at time $t_{2}$ and ends at time
$t'_{2} > t_{2}$.
With probability
$1 / 2$,
phase $\phi_{2}$ is a center phase --- condition hereafter on this event.
The proof is now completed by applying
\Lem{}~\ref{lem:analysis:operational-channel-with-center-peers}
to time $t_{2}+$, observing that
$\Next_{p_{1}}^{2}(t_{2}+) = \Next_{p_{1}}^{3}(t_{0})$
and
$\Next_{p_{2}}^{2}(t_{2}+) = \Next_{p_{2}}^{3}(t_{0})$.
\end{proof}

%%%%%%%%%%%%%%%%%%%%%%%%%%%%%%%%%%%%%%%
\subsection{Safety and Liveness}
\label{sec:analysis:safety-liveness}
%%%%%%%%%%%%%%%%%%%%%%%%%%%%%%%%%%%%%%%
In this section, we establish the correctness of \AlgTF{}, proving that it
satisfies the safety and liveness conditions.
The former condition holds trivially as tokens are deleted only during team
formation operations, so the remainder of this section is dedicated to the
latter.

Consider time
$t_{0} > 0$
and assume that the system contains at least $\sigma$ tokens at time $t_{0}$.
For time
$t \geq t_{0}$,
node
$p \in P$,
and
$\{ p, p' \}$-channel
$\chi \in \Channels^{t}(p)$,
let
$R^{t}(p, \chi)$
be the number of tokens transported over $\chi$ towards $p$ at time $t$.
Notice that
$R^{t}(p, \chi) = k > 0$
if and only if there is a
$\msgTransport(k)$
message in transit from $p'$ to $p$ over $\chi$ at time $t$.
For time
$t \geq t_{0}$
and node
$p \in P$,
let
$R^{t}(p) = \sum_{\chi \in \Channels^{t}(p)} R^{t}(p, \chi)$
if
$p \in \Busy^{t}$,
and
$R^{t}(p) = 0$
otherwise.

By \Obs{}~\ref{obs:analysis:phase-length}, a primary node
that holds at least $\sigma$ tokens is guaranteed to form a team in
$O (1)$
time.
\Grnt{}~\ref{grnt:algorithm:interface:relayed-messages}
ensures that every $\msgTransport$ message reaches its destination in
$O (1)$
time.
Therefore, if
$p.\varTokens^{t} + R^{t}(p) \geq \sigma$,
then $p$ is guaranteed to form a team by time
$t + O (1)$.

Define the potential function
$\psi : \Reals_{> 0} \rightarrow \Integers_{\geq 0}$
as
\[\textstyle
\psi(t)
\, = \,
\sum_{p \in P} \left( p.\varTokens^{t} + R^{t}(p) - 1 \right)
\]
and notice that if
$\psi(t) \geq (\sigma - 1) n$,
then
$p.\varTokens^{t} + R^{t}(p) \geq \sigma$
for some
$p \in P$,
hence a team is guaranteed to be formed by time
$t + O (1)$.
\Obs{}~\ref{obs:analysis:transport-in-transit} ensures
that if no teams are formed during the time interval
$I = [t_{0}, \cdot] \subset \Reals_{> 0}$,
then the function $\psi(t)$ is non-decreasing in $I$.
Moreover, a primary node
$p \in P$
retires at time
$t \geq t_{0}$
without forming any team if and only if $p$ sends a $\msgTransport$ message at
time $t$, implying that
$\psi(t+) = \psi(t) + 1$.
So, it suffices to show that as long as no teams are formed, the number of
node retirements must increase.

By \Obs{}~\ref{obs:analysis:phase-length} and
\Lem{}~\ref{lem:analysis:operational-channel}, we know that if
$\OperationalChannels^{t} \neq \emptyset$,
then at least one primary node is certain to retire in finite time with
probability $1$.
The liveness proof is completed by observing that if the system contains
$k \geq \sigma$
tokens and there are no tokens in transit, then either
(I)
all $k$ tokens are held by a single busy node that forms a team in
$O (1)$
time;
or
(II)
the $k$ tokens are distributed over multiple busy nodes, in which case, an
operational channel is certain to be generated in
$O (1)$
time by
\Grnt{}~\ref{grnt:algorithm:interface:operatioanl-channel-generation}.

\begin{theorem}
\label{thm:analysis:safety-and-liveness}
\AlgTF{} satisfies the safety and liveness conditions for any TF instance.
\end{theorem}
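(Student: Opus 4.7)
Safety is immediate from the algorithm's design: the only operation that deletes tokens is a team formation, which by construction removes exactly $\sigma$ tokens held simultaneously by one primary node. The remainder of the plan addresses liveness via a contradiction argument based on the potential function $\psi$ already set up in the preamble.

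I plan to fix a time $t_0$ at which the system holds at least $\sigma$ tokens and suppose for contradiction that no team is ever formed after $t_0$. Under this assumption $\psi$ is non-decreasing on $[t_0,\infty)$, and my strategy is to show that $\psi$ must grow past the threshold $(\sigma-1)n$ in finite time; at that moment some primary node $p$ satisfies $p.\varTokens^t + R^t(p) \geq \sigma$, and Observation~\ref{obs:analysis:phase-length} together with Guarantee~\ref{grnt:algorithm:interface:relayed-messages} then forces a team formation within $O(1)$ additional time, yielding the desired contradiction. Since every non-team-forming retirement must be of the $\msgTransport$ type and increments $\psi$ by exactly $1$ (as already noted in the preamble), it suffices to produce sufficiently many such retirements.

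To guarantee retirements, I would split on the size of $\Busy^t$. If $|\Busy^t| \leq 1$ for sufficiently long, then all system tokens sit at (or are in transit to) a single busy node $p$; the in-transit tokens are delivered within $O(1)$ time by Guarantee~\ref{grnt:algorithm:interface:relayed-messages}, after which $p$ holds at least $\sigma$ tokens and forms a team, contradicting the hypothesis. Otherwise, $|\Busy^t| \geq 2$ holds for at least $4$ consecutive time units within some bounded-length window, so Guarantee~\ref{grnt:algorithm:interface:operatioanl-channel-generation} produces an operational channel to which Lemma~\ref{lem:analysis:operational-channel} applies, yielding a retirement within a further bounded interval with probability at least $1/16$.

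The hard part I anticipate is handling the interleaving of the two cases: a retirement can shrink $\Busy$ back to size at most $1$, whereupon the adversary may inject further tokens to push $|\Busy|$ above $1$ again. I plan to resolve this by leveraging that, under the no-team-formation assumption, the total token count is non-decreasing, so whenever the ``single busy node'' case fails to immediately force a team, the ``multiple busy nodes'' case must recur, each recurrence granting a constant-probability bounded-time $\psi$-increment via Lemma~\ref{lem:analysis:operational-channel}. A standard Borel--Cantelli-style argument over these recurrences then drives $\psi$ past the $(\sigma-1)n$ threshold almost surely in finite time, closing the contradiction.
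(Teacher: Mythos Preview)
Your proposal is correct and follows essentially the same route as the paper: safety is immediate, and liveness proceeds by showing that under the no-team-formation hypothesis the potential $\psi$ must climb past $(\sigma-1)n$, using Guarantee~\ref{grnt:algorithm:interface:operatioanl-channel-generation} together with Lemma~\ref{lem:analysis:operational-channel} to produce retirements whenever multiple busy nodes persist, and the single-busy-node case to force a team directly. One minor wrinkle worth tightening is that Guarantee~\ref{grnt:algorithm:interface:operatioanl-channel-generation} requires two \emph{fixed} nodes to remain busy for the full four-unit window rather than merely $|\Busy^{t}|\geq 2$ throughout; but under your no-team hypothesis any departure from $\Busy$ is itself a $\psi$-incrementing retirement, so the case split still closes.
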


%%%%%%%%%%%%%%%%%%%%%%%%%%%%%%%%%%%%%%%
\subsection{Message Load}
\label{sec:analysis:message-load}
%%%%%%%%%%%%%%%%%%%%%%%%%%%%%%%%%%%%%%%
We now turn to analyze the message load of \AlgTF{} in expectation and whp.
To this end, fix some
$\ell \in \Integers_{> 0}$
and assume that the adversarial policy $\Adv$ injects at most $\ell$ tokens
throughout the execution
(i.e., $\Adv \in \A_{\ell}$
in the language of \Sect{}~\ref{sec:preliminaries:model}).
The analysis starts with bounding the number of relayed messages sent over
non-operational channels.

\begin{lemma}
\label{lem:analysis:message-load:non-operational-channels}
The total number of relayed messages sent throughout the execution over
channels $\chi$ while
$\chi \notin \OperationalChannels$
is
$O (\ell \cdot \sqrt{n \log n})$.
\end{lemma}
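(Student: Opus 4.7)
The plan is to charge each non-operational relayed message to the channel it is sent over: I aim to show that each channel $\chi$ bears only $O(1)$ such messages, and then to bound the total number of channels ever created via the channel-layer message budget granted by \Grnt{}~\ref{grnt:algorithm:interface:message-complexity}.

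First, fix a $\{ p_{1}, p_{2} \}$-channel $\chi$ and an endpoint $p_{i}$. I decompose the time during which $\chi \in \Channels(p_{i})$ but $\chi \notin \OperationalChannels$ into at most two intervals: a \emph{pre-operational} one, ending as soon as the other endpoint $p_{3 - i}$ also acquires $\chi$ in its $\Channels$; and a \emph{post-operational} one, beginning when $p_{3 - i}$ retires and ending within $2$ time units thanks to \Grnt{}~\ref{grnt:algorithm:interface:operational-channel-continues}. The per-interval bound of $O(1)$ rests on the locking behavior of the request--response mechanism: once $p_{i}$ starts a phase with $\chi \in \Channels^{s}(p_{i})$, it sends a single request over $\chi$ and sets $p_{i}.\varAwaitingResponse(\chi) \gets \True$; the phase then cannot end until this flag is cleared, which happens only when either a response arrives over $\chi$ (which, by \Grnt{}~\ref{grnt:algorithm:interface:relayed-messages} together with the fact that $p_{3 - i}$ issues responses only while $\chi \in \Channels(p_{3 - i})$, cannot happen during pre-operational time, and during post-operational time can only arise from a response dispatched by $p_{3 - i}$ before retiring) or $\chi$ is removed from $\Channels(p_{i})$ (which already ends the non-operational interval). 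A careful case analysis then shows that at most one phase of $p_{i}$ sending a request over $\chi$ can be initiated within each non-operational interval, giving $O(1)$ requests per interval. The responses $p_{i}$ sends over $\chi$ during the interval are similarly $O(1)$: during pre-operational time, $p_{3 - i}$ does not know about $\chi$ and issues no requests; during post-operational time, at most $O(1)$ in-transit requests from $p_{3 - i}$'s final pre-retirement phase may trigger responses from $p_{i}$.

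Having established an $O(1)$ bound per channel per endpoint per interval, I next bound the total number of channels ever created by $O(\ell \sqrt{n \log n})$. This follows from \Grnt{}~\ref{grnt:algorithm:interface:message-complexity}, since each channel creation requires the mediator to dispatch at least one channel-layer message to install $\chi$ in the endpoints' $\varMediators$ variables. Combining the per-channel bound with the total channel count yields the claimed $O(\ell \sqrt{n \log n})$ bound. I expect the main obstacle to be the locking argument above---particularly in ruling out subtle ways for $p_{i}.\varAwaitingResponse(\chi)$ to be cleared within a non-operational interval (for instance via a delayed response, or via a stray delivery coming from a previous, already-released channel with the same mediator); here \Grnt{}~\ref{grnt:algorithm:interface:relayed-messages} is the central tool, as it tightly couples deliveries over $\chi$ to the current channel instance and to the membership of $\chi$ in the peer's $\Channels$ at the time the message was dispatched.
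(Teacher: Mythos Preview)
Your proposal is correct and follows essentially the same approach as the paper: bound the number of non-operational relayed messages per channel by $O(1)$, then charge these to the channel-layer messages spent on creating and releasing channels, invoking \Grnt{}~\ref{grnt:algorithm:interface:message-complexity} to conclude. The paper's version is terser---it simply asserts that at most one request is sent over $\chi$ before it becomes operational and at most two messages (a response and a request) after it ceases to be operational---whereas you spell out the request--response locking argument in detail; but the structure and the key charging idea are identical.
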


\begin{proof}
Consider some channel $\chi$ and observe that at most one relayed message (a
request) is sent over $\chi$ before it is operational (if at all) and at most
two relayed messages (a response and a request) are sent over $\chi$ after it
is no longer operational.
Since the channel layer sends
$\Theta (1)$
messages for the purpose of creating and releasing $\chi$, we can charge the
(at most $3$) relayed messages sent over $\chi$ while
$\chi \notin \OperationalChannels$
to the channel layer.
The assertion follows by
\Grnt{}~\ref{grnt:algorithm:interface:message-complexity}.
\end{proof}

Combined with
\Grnt{}~\ref{grnt:algorithm:interface:message-complexity}, we conclude
that with the exception of relayed messages sent over operational channels,
the total number of messages sent by \AlgTF{} is
$O (\ell \cdot \sqrt{n \log n})$,
so it remains to bound the number of former messages.
To this end, we establish \Lem{}~\ref{lem:analysis:message-load:blaming-tokens}
and \Lem{}~\ref{lem:analysis:message-load:operational-channels} (whose proof follows almost directly from
\Lem{}~\ref{lem:analysis:operational-channel}).

\begin{lemma}
\label{lem:analysis:message-load:blaming-tokens}
$\left| \bigcup_{t > 0} \OperationalChannels^{t} \right|
\leq
O (\ell \cdot \sqrt{n \log n})$.
\end{lemma}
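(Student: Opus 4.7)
The approach is to charge each operational channel to a retirement event of one of its primary endpoints and to bound the number of retirements by $O(\ell)$. By Guarantee~\ref{grnt:algorithm:interface:operational-channel-continues}, an operational $\{p_{1},p_{2}\}$-channel $\chi$ leaves $\OperationalChannels$ only when one of its endpoints becomes non-busy; at that instant the corresponding endpoint satisfies the retirement condition defined at the beginning of this section (it is in $\OperationalNodes^{t}$ because $\chi$ is still in its $\Channels^{t}(\cdot)$ set, yet it leaves $\Busy$ at $t+$). Since activation events are isolated in time, each operational channel that eventually retires can be charged to a unique primary-node retirement, and when $p$ retires at time $t$, all channels charged to this event sit in $\Channels^{t}(p)$, whose cardinality is at most $O(\sqrt{n \log n})$ by Guarantee~\ref{grnt:algorithm:interface:degree-bound}.

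Next, I would bound the total number of retirement events by $O(\ell)$ via a busy-period counting argument. A retirement ends a maximal time interval during which $p \in \Busy$, so it suffices to bound the number of \emph{starts} of busy periods. By Observation~\ref{obs:analysis:transport-in-transit}, every $\msgTransport$ message arrives at an already-busy receiver, so transports never initiate a new busy period; hence a busy period can start only when tokens are injected into a currently non-busy node, either as a real injection (at most $\ell$ of these by our assumption on $\Adv$) or as a fake injection accompanying a team-formation event. Each team-formation event permanently consumes at least $\sigma \geq 2$ real tokens from the $\ell$-bounded supply, so the number of such events, and hence of fake injections, is at most $\ell/\sigma \leq \ell$. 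Summing, there are at most $2\ell$ busy-period starts, hence at most $2\ell$ retirements, giving $O(\ell \sqrt{n \log n})$ retired operational channels in total.

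To cover the operational channels that might never be retired (for example, ones alive throughout an infinite stable suffix of the execution), I would add a snapshot bound: at any time $t$, every channel in $\OperationalChannels^{t}$ sits in $\Channels^{t}(\cdot)$ of two distinct busy endpoints, so $|\OperationalChannels^{t}| \leq \tfrac{1}{2}\,|\Busy^{t}| \cdot O(\sqrt{n\log n})$; since each busy node holds at least one of the at most $\ell$ tokens present in primary nodes at time $t$, we have $|\Busy^{t}| \leq \ell$ and the snapshot is itself $O(\ell \sqrt{n \log n})$. Adding the ``already retired'' and ``still operational at $t$'' contributions and letting $t$ range over the execution keeps the union within $O(\ell \sqrt{n\log n})$. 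The step I expect to require the most care is the bookkeeping around team-formation events --- each simultaneously terminates one busy period (a retirement) and, when the remainder is nonzero, seeds a new one via a fake injection --- but the invariant that each event permanently destroys $\sigma$ real tokens from the $\ell$-bounded supply keeps the combined number of busy-period starts linear in $\ell$ regardless of $\sigma$.
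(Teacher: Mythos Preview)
Your proposal is correct and is essentially the paper's proof, just parameterized by retirement events rather than by tokens. The paper defines a blame function $B$ that sends each operational channel $\chi$ to a token $\tau$ that was \emph{injected} (real or fake) into the endpoint $p_i$ that retires when $\chi$ does; Observation~\ref{obs:analysis:transport-in-transit} guarantees such a $\tau$ exists, and since $\tau$ leaves $p_i$ at that retirement, the preimage $B^{-1}(\tau)$ sits inside a single $\Channels^{\bar t}(p_i)$, giving the $O(\sqrt{n\log n})$ fan-in via Guarantee~\ref{grnt:algorithm:interface:degree-bound}. Your ``channel $\to$ retirement event'' map is the same construction with the intermediate token suppressed: distinct retirement events yield distinct blamed tokens (the token that opened the busy period), so counting retirement events and counting blamed tokens are equivalent, and the paper's $|\T| < 2|\T'| \leq 2\ell$ is exactly your busy-period-start bound.

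Two small remarks. First, your sentence ``each team-formation event permanently consumes at least $\sigma$ \emph{real} tokens'' is not literally true (consumed tokens may themselves be fake), but what you actually need---and what the paper proves---is that the net drop in the system's token count per event is at least $\sigma$, which caps the number of events at $\ell/\sigma$. Second, your snapshot bound for never-retiring channels is unnecessary: by Lemma~\ref{lem:analysis:operational-channel} (iterated), every operational channel retires with probability~$1$, which is why the paper does not treat that case separately.
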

\begin{proof}
Let $\T$ (resp., $\T'$) be the set of tokens injected into the system
throughout the execution, including (resp., excluding) fakely injected tokens.
Since each fake injection into a primary node
$p \in P$
includes
$k < \sigma$
tokens and can be (injectively) attributed to a team formation operation of
$p$ that includes $\sigma$ tokens, it follows that
$| \T | < 2 | \T' |$.
Recalling that
$| \T | \leq \ell$,
we conclude that
$| \T | \leq O (\ell)$.

The assertion is established by introducing a ``blame'' function
$B : \bigcup_{t > 0} \OperationalChannels^{t} \rightarrow \T$
and proving that $B$ maps at most
$O (\sqrt{n \log n})$
channels to each token in $\T$.
To this end, consider a
$\{ p_{1}, p_{2} \}$-channel
$\chi \in \bigcup_{t > 0} \OperationalChannels^{t}$,
let
$\bar{t} > 0$
be the retirement time of $\chi$,
and let
$i \in \{ 1, 2 \}$
be the (unique) index for which $p_{i}$ retires at time $\bar{t}$.
\Obs{}~\ref{obs:analysis:transport-in-transit} implies
that at least one of the tokens held by $p_{i}$ at time $\bar{t}$ was
injected into $p_{i}$ itself (before time $\bar{t}$) --- the function $B$ maps
$\chi$ to such a token $\tau$ (breaking ties arbitrarily).
This construction ensures that only the channels in
$\Channels^{\bar{t}}(p_{i})$ are mapped under $B$ to $\tau$, thus completing
the proof due to \Grnt{}~\ref{grnt:algorithm:interface:degree-bound}.
\end{proof}

\begin{lemma}
\label{lem:analysis:message-load:operational-channels}
Consider a
$\{ p, p' \}$-channel
$\chi$ and let $Y_{\chi}$ be a random variable that counts the number of
relayed messages sent over $\chi$ while
$\chi \in \OperationalChannels$.
There exist constants
$\alpha, \beta > 0$
such that
$\Pr (Y_{\chi} > y) < \beta e^{-\alpha y}$
for every
$y \in \Integers_{\geq 0}$.
\end{lemma}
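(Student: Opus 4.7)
The plan is to combine Lemma~\ref{lem:analysis:operational-channel} with a per-phase accounting of relayed messages, then iterate to obtain a geometric tail. First, I would observe that the request-response mechanism of the principal layer implies that each phase of $p$ (resp., $p'$) contributes only $O(1)$ relayed messages over $\chi$: at most one request at the beginning of the phase, at most one (possibly delayed) $\msgGoOn$ issued at the beginning of the phase, and at most one response to an incoming request from the peer. Consequently, while $\chi \in \OperationalChannels$, the total number of relayed messages satisfies $Y_{\chi} \leq c \cdot (\Phi_{p} + \Phi_{p'})$ for some absolute constant $c$, where $\Phi_{p_{i}}$ denotes the number of phases of $p_{i}$ during which $\chi$ is operational.

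Next, letting $t_{0}$ be the time at which $\chi$ first becomes operational, I would define iteratively $t_{k+1} = \min_{i \in \{1,2\}} \Next_{p_{i}}^{3}(t_{k})$ as long as $\chi$ is still operational at time $t_{k}$. By construction, within each interval $[t_{k}, t_{k+1}]$ each of $p_{1}, p_{2}$ executes at most three phases, contributing at most $6c$ relayed messages over $\chi$ in total. Lemma~\ref{lem:analysis:operational-channel}, applied with starting time $t_{k}$, asserts that conditioned on the event $\chi \in \OperationalChannels^{t_{k}}$ (and on the full history up to time $t_{k}$), channel $\chi$ retires during $[t_{k}, t_{k+1}]$ with probability at least $1/16$; the conditional independence used here is supplied by the observation that the proof of Lemma~\ref{lem:analysis:operational-channel} consumes only phase-type coin tosses of $p_{1}, p_{2}$ from phases that begin at or after $t_{k}$, which are fresh random bits independent of everything preceding $t_{k}$.

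Letting $N$ be the number of iterations completed before $\chi$ retires, the previous step shows that $N$ is stochastically dominated by a geometric random variable with success probability $1/16$, so $\Pr(N > k) \leq (15/16)^{k}$ for every $k \in \Integers_{\geq 0}$. Combined with the per-iteration bound $Y_{\chi} \leq 6c \cdot N$, this yields $\Pr(Y_{\chi} > y) \leq (15/16)^{\lfloor y/(6c) \rfloor}$, which has the required form $\beta e^{-\alpha y}$ for appropriate constants $\alpha, \beta > 0$. The main obstacle I anticipate is the careful formalization of the conditional-probability step in the iteration: one has to verify that the $1/16$ lower bound of Lemma~\ref{lem:analysis:operational-channel} really holds when conditioning on the full history (including adaptive adversarial decisions) up to time $t_{k}$, rather than merely on the unconditional state of the system. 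This should ultimately follow from the fact that the adversary at time $t_{k}$ cannot see future coin tosses, so the phase-type bits drawn inside $[t_{k}, t_{k+1}]$ remain independent of the past and the invocation of Lemma~\ref{lem:analysis:operational-channel} is legitimate inside each iteration.
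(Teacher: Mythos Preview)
Your proposal is correct and follows essentially the same route as the paper: bound the relayed messages by a constant times the number of phases of $p$ and $p'$ while $\chi$ is operational, then invoke Lemma~\ref{lem:analysis:operational-channel} iteratively to obtain a geometric tail on that phase count. The paper's own proof is terser---it charges each response to its request and notes one request per phase before appealing to Lemma~\ref{lem:analysis:operational-channel}---but your explicit iteration via $t_{k+1} = \min_{i}\Next_{p_i}^{3}(t_k)$ and your discussion of the conditional-independence step are exactly the details one would spell out to make that appeal rigorous.
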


\begin{proof}
Charge every response message to the corresponding request message, so it
suffices to up-bound the latter.
The assertion follows from
\Lem{}~\ref{lem:analysis:operational-channel} by recalling that
while
$\chi \in \OperationalChannels$,
every phase of either $p$ or $p'$ accounts for one request message over $\chi$.
\end{proof}

For a channel $\chi$, let $Y_{\chi}$ be the random variable in
\Lem{}~\ref{lem:analysis:message-load:operational-channels}.
Recalling that the expected value of an exponentially decaying random variable
is bounded by
$O (1)$,
we can combine \Lem{}\
\ref{lem:analysis:message-load:blaming-tokens} and
\ref{lem:analysis:message-load:operational-channels} to deduce,
by the linearity of expectation, that \AlgTF{} sends
$O (\ell \cdot \sqrt{n \log n})$
messages in expectation.

At this stage, one may be tempted to advance towards a whp bound by assuming
that the random variables $Y_{\chi}$ of different channels $\chi$ are
independent.
Unfortunately, this assumption is wrong:
the number of relayed messages sent over different channels in
$\OperationalChannels$ may be strongly correlated.
To resolve this difficulty, we use the following lemma, stating that the
average of finitely many exponentially decaying random variables is also
exponentially decaying even if the original random variables exhibit complex
dependencies;
we will not be surprised to learn that this lemma is known from the existing
literature, however, we could not find it anywhere and therefore, provide a
standalone proof.

\begin{lemma}
\label{lem:average-exponentially-decaying-rvs}
Fix some
$\alpha, \beta > 0$
and let
$X_{1}, \dots X_{m}$
be (not necessarily independent) random variables over
$\Reals_{> 0}$
such that
$\Pr (X_{i} > x) < \beta e^{-\alpha x}$
for every
$i \in [m]$
and
$x > 0$.
For every
$\epsilon > 0$,
there exists
$\beta' = \beta'(\beta, \epsilon) > 0$
such that
$\Pr (\bar{X} > x) < \beta' e^{-\frac{\alpha x}{1 + \epsilon}}$
for every
$x > 0$,
where
$\bar{X} = \frac{1}{m} \sum_{i \in [m]} X_{i}$.
\end{lemma}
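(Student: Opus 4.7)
The plan is to bypass the lack of independence by working with the moment generating function of $\bar X$ and applying Jensen's inequality, then using a Chernoff/Markov bound at the right exponent.

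First, I would pick a parameter $\alpha' < \alpha$ (ultimately choosing $\alpha' = \alpha/(1+\epsilon)$). Since the exponential function is convex, Jensen's inequality applied pointwise to the average gives
\[
e^{\alpha' \bar X} \;=\; e^{\alpha'\frac{1}{m}\sum_{i\in[m]} X_i} \;\leq\; \frac{1}{m}\sum_{i\in[m]} e^{\alpha' X_i},
\]
so by linearity of expectation (which requires no independence assumption),
\[
\Ex[e^{\alpha' \bar X}] \;\leq\; \frac{1}{m}\sum_{i\in[m]} \Ex[e^{\alpha' X_i}].
\]
This is the key step that lets dependencies disappear.

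Next, I would bound each $\Ex[e^{\alpha' X_i}]$ using the given tail bound. Since $X_i > 0$, the tail formula for the expectation of the positive random variable $e^{\alpha' X_i} \geq 1$ yields
\[
\Ex[e^{\alpha' X_i}] \;=\; 1 + \int_0^\infty \alpha' e^{\alpha' x}\,\Pr(X_i > x)\,dx \;\leq\; 1 + \int_0^\infty \alpha'\beta\, e^{-(\alpha-\alpha')x}\,dx \;=\; 1 + \frac{\beta\alpha'}{\alpha-\alpha'},
\]
which is a finite constant $C$ depending only on $\alpha,\alpha',\beta$. Substituting $\alpha' = \alpha/(1+\epsilon)$, the ratio simplifies to $\beta/\epsilon$, giving $C = 1 + \beta/\epsilon$.

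Finally, I would apply Markov's inequality to the nonnegative random variable $e^{\alpha'\bar X}$: for every $x>0$,
\[
\Pr(\bar X > x) \;=\; \Pr\!\left(e^{\alpha'\bar X} > e^{\alpha' x}\right) \;\leq\; \frac{\Ex[e^{\alpha'\bar X}]}{e^{\alpha' x}} \;\leq\; C\, e^{-\alpha' x} \;=\; \left(1 + \tfrac{\beta}{\epsilon}\right) e^{-\frac{\alpha x}{1+\epsilon}},
\]
so $\beta' := 1 + \beta/\epsilon$ works. I do not anticipate a real obstacle here; the only subtlety is that one must not try to use $\alpha' = \alpha$, since the integral above diverges at the boundary, which is exactly why the conclusion must degrade the exponent by the factor $1/(1+\epsilon)$ and why $\beta'$ blows up as $\epsilon \to 0$.
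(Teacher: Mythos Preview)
Your proposal is correct and essentially identical to the paper's proof: both set $\alpha'=\alpha/(1+\epsilon)$, bound $\Ex[e^{\alpha'\bar X}]$ by the average of the individual $\Ex[e^{\alpha' X_i}]$, show each of those is at most $1+\beta/\epsilon$, and finish with Markov. The only cosmetic differences are that the paper phrases your Jensen step as AM--GM (same inequality, since $e^{\alpha'\bar X}=\sqrt[m]{\prod_i e^{\alpha' X_i}}$) and computes $\Ex[e^{\alpha' X_i}]$ via the substitution $z=e^{\alpha' x}$ in the survival-function integral rather than your layer-cake formula, arriving at the same constant $\beta'=1+\beta/\epsilon$.
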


\begin{proof}
Let
$\alpha' = \frac{\alpha}{1 + \epsilon}$.
We establish the lemma by identifying some
$\beta' = \beta'(\beta, \epsilon) > 0$
that satisfies
$\Pr (\bar{X} > x / \alpha') < \beta' e^{-x}$
for every
$x > 0$.
To this end, develop
\begin{align*}
\Pr \left( \bar{X} > x / \alpha' \right)
\, = \, &
\Pr \left( \alpha' \bar{X} > x \right)
\\
= \, &
\Pr \left( e^{\alpha' \bar{X}} > e^{x} \right)
\\
< \, &
\frac{\Ex \left( e^{\alpha' \bar{X}} \right)}{e^{x}}
\\
= \, &
e^{-x}
\cdot
\Ex \left(
\exp \left( \frac{1}{m} \sum_{i \in [m]} \alpha' X_{i} \right)
\right)
\\
= \, &
e^{-x}
\cdot
\Ex \left( \sqrt[m]{\prod_{i \in [m]} e^{\alpha' X_{i}}} \right)
\\
\leq \, &
e^{-x}
\cdot
\Ex \left( \frac{1}{m} \sum_{i \in [m]} e^{\alpha' X_{i}} \right)
\\
= \, &
e^{-x}
\cdot
\frac{1}{m} \sum_{i \in [m]} \Ex \left( e^{\alpha' X_{i}} \right)
\, ,
\end{align*}
where
the third transition is due to Markov's inequality
and
the penultimate transition is due to the AM-GM inequality.
We complete the proof by choosing
$\beta' = \beta'(\beta, \epsilon) > 0$
such that
$\Ex \left( e^{\alpha' X_{i}} \right) \leq \beta'$
for every
$i \in [m]$.
To identify such $\beta'$, we develop
\begin{align*}
\Ex \left( e^{\alpha' X_{i}} \right)
\, = \, &
\int_{0}^{\infty} \Pr \left( e^{\alpha' X_{i}} > z \right) d z
\\
\leq \, &
1 + \int_{1}^{\infty} \Pr \left( e^{\alpha' X_{i}} > z \right) d z
\\
= \, &
1 + \int_{1}^{\infty} \Pr \left( \alpha' X_{i} > \ln z \right) d z
\\
= \, &
1 + \int_{1}^{\infty} \Pr \left( X_{i} > \frac{1}{\alpha'} \ln z \right) d z
\\
< \, &
1
+
\int_{1}^{\infty}
\beta \cdot \exp \left( -\frac{\alpha}{\alpha'} \ln z \right) d z
\\
= \, &
1 + \beta \int_{1}^{\infty} z^{-(1 + \epsilon)} d z
\\
= \, &
1 + \beta / \epsilon
\end{align*}
and conclude that it suffices to choose
$\beta' = 1 + \beta / \epsilon$.
\end{proof}

Let
$m = O (\ell \cdot \sqrt{n \log n})$
be the bound promised in \Lem{}~\ref{lem:analysis:message-load:blaming-tokens}.
By applying \Lem{}~\ref{lem:average-exponentially-decaying-rvs} to the (at most)
$m$ random variables in
$\{ Y_{\chi} \}_{\chi \in \bigcup_{t > 0} \OperationalChannels^{t}}$,
we conclude that
$\frac{1}{m} \sum_{\chi \in \bigcup_{t > 0} \OperationalChannels^{t}} Y_{\chi}
\leq
O (\log n)$
whp, hence \AlgTF{} sends
$O (\ell \cdot \sqrt{n \log n} \cdot \log n)$
messages whp.

\begin{theorem}
\label{thm:analysis:message-load}
The message load of \AlgTF{} is
$O (\sqrt{n \log n})$
in expectation and
$O (\sqrt{n \log n} \cdot \log n)$
whp.
\end{theorem}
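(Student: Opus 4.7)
The plan is to classify the messages sent by \AlgTF{} into three disjoint categories and bound each separately, then divide by $\ell$ at the end. The three categories are: (a) messages sent by the channel layer itself; (b) relayed messages sent over channels $\chi$ at times when $\chi \notin \OperationalChannels$; (c) relayed messages sent over channels while they are in $\OperationalChannels$. Category (a) is handled directly by \Grnt{}~\ref{grnt:algorithm:interface:message-complexity}, contributing $O(\ell \cdot \sqrt{n \log n})$ messages in total. Category (b) is handled by \Lem{}~\ref{lem:analysis:message-load:non-operational-channels}, also contributing $O(\ell \cdot \sqrt{n \log n})$.

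The bulk of the work goes into category (c). First I would use \Lem{}~\ref{lem:analysis:message-load:blaming-tokens} to bound the total number of channels that are ever operational by $m = O(\ell \cdot \sqrt{n \log n})$, and for each such channel $\chi$ invoke \Lem{}~\ref{lem:analysis:message-load:operational-channels} to conclude that $Y_\chi$ is exponentially decaying with parameters $\alpha, \beta$ that do not depend on $\chi$, $\ell$, or $n$. Summing yields the expected bound immediately: $\Ex[Y_\chi] = O(1)$ for each $\chi$, so by the linearity of expectation the total contribution of category (c) is $O(m) = O(\ell \cdot \sqrt{n \log n})$ in expectation. Combining the three categories and dividing by $\ell$ gives the expected message load bound of $O(\sqrt{n \log n})$.

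The main obstacle is the whp bound for category (c), because the random variables $\{Y_\chi\}$ are not independent --- they share the nodes' coin tosses and the asynchronous scheduling can correlate phase outcomes across multiple channels sharing a primary endpoint. A naive Chernoff-style argument is therefore unavailable. To circumvent this, I would appeal to \Lem{}~\ref{lem:average-exponentially-decaying-rvs}, which establishes that the average of any $m$ exponentially decaying random variables (with no independence assumption) is itself exponentially decaying with only a mild loss in the rate. Instantiating the lemma with the $Y_\chi$'s and setting the threshold $x = \Theta(\log n)$ with a sufficiently large constant yields $\bar{Y} \leq O(\log n)$ whp, so that $\sum_\chi Y_\chi \leq O(m \cdot \log n) = O(\ell \cdot \sqrt{n \log n} \cdot \log n)$ whp.

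Finally I would combine the whp bounds on (a), (b), and (c) via a union bound (each holds with probability $1 - n^{-c}$ for arbitrarily large $c$), and dividing the total $O(\ell \cdot \sqrt{n \log n} \cdot \log n)$ by $\ell$ gives the claimed whp message load of $O(\sqrt{n \log n} \cdot \log n)$. Because all three bounds hold uniformly over every $\ell \in \Integers_{> 0}$ and every adversarial policy in $\A_\ell$, the definition of message load from \Sect{}~\ref{sec:preliminaries:model} is satisfied.
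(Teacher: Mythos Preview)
Your proposal is correct and follows essentially the same approach as the paper: the same three-way decomposition into channel-layer messages, relayed messages over non-operational channels, and relayed messages over operational channels, handled respectively by \Grnt{}~\ref{grnt:algorithm:interface:message-complexity}, \Lem{}~\ref{lem:analysis:message-load:non-operational-channels}, and the combination of \Lem{}~\ref{lem:analysis:message-load:blaming-tokens}, \Lem{}~\ref{lem:analysis:message-load:operational-channels}, and \Lem{}~\ref{lem:average-exponentially-decaying-rvs}. One small remark: the bounds for categories (a) and (b) are deterministic (conditioned on the whp event of \Lem{}~\ref{lem:pu-graph}, which is assumed throughout), so the union bound at the end is only needed to absorb the failure probability from \Lem{}~\ref{lem:average-exponentially-decaying-rvs} together with that of \Lem{}~\ref{lem:pu-graph}.
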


%%%%%%%%%%%%%%%%%%%%%%%%%%%%%%%%%%%%%%%%%%%%%%%%%%%%%%%%%%%%%%%%%%%%%%%%%%%%%%
\section{The Channel Layer}
\label{sec:channel-layer-algorithm}
%%%%%%%%%%%%%%%%%%%%%%%%%%%%%%%%%%%%%%%%%%%%%%%%%%%%%%%%%%%%%%%%%%%%%%%%%%%%%%
In this section, we describe how the channel layer is implemented;
the service assurances that this layer provides to the principal
layer, as stated in \Grnt{}\
\ref{grnt:algorithm:interface:operational-implies-busy}--%
\ref{grnt:algorithm:interface:message-complexity},
are then established in \Sect{}~\ref{sec:channel-layer-algorithm:analysis}.
A pseudocode description of the channel layer is presented in
\Sect{}~\ref{sec:pseudocode-channel}, consisting of Algorithms
\ref{alg:channel-layer-primary}, \ref{alg:channel-layer-utility}, and
\ref{alg:procedure-create-channel}, and accompanied by Tables \ref{tab:messages-channel}
and \ref{tab:variables-channel};
below, we present a textual description.

The starting point of the channel layer is that a primary node
$p \in P$
updates all the utility nodes in $U(p)$ whenever $p$ becomes busy by sending
them $\msgBusy$ messages.
A utility node
$u \in U$
maintains the local variable
$u.\varBusyTokens(p) \in \Integers_{\geq 0} \cup \{ \bot \}$
for each
$p \in P(u)$,
initiated so that
$u.\varBusyTokens(p) = \bot$.\footnote{%
\label{footnote:varBusyTokens}%
Strictly speaking, since $u$ is not assumed to know the set $U(p)$, the
algorithm actually maintains the variable $u.\varBusyTokens(p)$ for all
primary nodes
$p \in P$,
where $u.\varBusyTokens(p)$ may get an integral value (i.e., a value other
than $\bot$) only if
$p \in P(u)$.}
When $u$ receives a $\msgBusy$ message from $p$, node $u$ sets
$u.\varBusyTokens(p) \gets 0$
and responds with a $\msgBusyAck$ message;
the arrival of this $\msgBusyAck$ message to $p$ prompts $p$ to record that
``$u$ knows that $p$ is busy'' by adding $u$ to the local variable
$p.\varBusyAcked \subseteq U(p)$
that $p$ maintains.
Following that, $p$ updates, using $\msgTokensUpdate$ messages, the utility
nodes in $p.\varBusyAcked$ on any increase in $p.\varTokens$ and waits for
these utility nodes to create
$\{ p, \cdot \}$-channels.
Upon arrival of a $\msgTokensUpdate(k)$ message from a primary node
$p \in P(u)$,
the utility node $u$ records the corresponding update by setting
$u.\varBusyTokens(p) \gets k$.

Fix some utility node
$u \in U$
and let
$S = \{ p \in P(u) : u.\varBusyTokens(p) \in \Integers_{> 0} \}$.
If
$u.\varChannel = \emptyset$
and
$|S| \geq 2$,
then $u$ creates a
$\{ p_{1}, p_{2} \}$-channel
$\chi$ for some
$p_{1}, p_{2} \in S$
by setting
$u.\varChannel \gets \{ p_{1}, p_{2} \}$,
giving preference to the nodes
$p \in S$
with larger values of
$u.\varBusyTokens(p)$;
following that, $u$ sends a $\msgChannel$ message to $p_{i}$ for
$i \in \{ 1, 2 \}$.
Upon arrival of this $\msgChannel$ message to $p_{i}$, if $p_{i}$ is (still)
busy, then $p_{i}$ adds $u$ to its list of channel mediators by setting
$p_{i}.\varMediators \gets p_{i}.\varMediators \cup \{ u \}$.

As long as
$u \in p_{1}.\varMediators \cap p_{2}.\varMediators$,
the primary nodes $p_{1}$ and $p_{2}$ can use channel $\chi$ to exchange
relayed messages.
This lasts until $p_{i}$ becomes non-busy for some
$i \in \{ 1, 2 \}$,
at which point $p_{i}$ sends a $\msgNotBusy$ message to all channel mediators
in
$p_{i}.\varMediators$,
including $u$, and following that, resets
$p_{i}.\varMediators \gets \emptyset$.
When $u$ receives this $\msgNotBusy$ message from $p_{i}$, it resets
$u.\varBusyTokens(p) \gets \bot$;
moreover, if
$u.\varChannel = \{ p_{1}, p_{2} \}$
(still) holds, then $u$ releases $\chi$, resetting
$u.\varChannel \gets \emptyset$,
and sends a $\msgNoChannel$ message to
$p_{3 - i}$.
Upon arrival of this $\msgNoChannel$ to
$p_{3 - i}$,
the primary node
$p_{3 - i}$
removes $u$ from its list of channel mediators by setting
$p_{3 - i}.\varMediators \gets p_{3 - i}.\varMediators - \{ u \}$.

On top of the above, a primary node
$p \in P$
that receives a $\msgChannel$ message from a utility node
$u \in U(p)$
responds ``automatically'' with a $\msgChannelAck$ message, even if $p$ is no
longer busy.
The utility node $u$ maintains the local variable
$u.\varDiff(p) \in \Integers_{\geq 0}$
for each
$p \in P(u)$,
incrementing
$u.\varDiff(p) \gets u.\varDiff(p) + 1$
on every $\msgChannel$ message that $u$ sends to $p$
and decrementing
$u.\varDiff(p) \gets u.\varDiff(p) - 1$
on every $\msgChannelAck$ message that $u$ receives from $p$.\footnote{%
Similarly to the discussion in footnote~\ref{footnote:varBusyTokens} with
regards to the variable $u.\varBusyTokens(p)$, the algorithm  actually
maintains the variable $u.\varDiff(p)$ for all primary nodes
$p \in P$,
where $u.\varDiff(p)$ may get a positive value only if
$p \in P(u)$.}
Assuming that
$u.\varChannel = \{ p, p' \}$,
this mechanism allows $u$ to determine whether an incoming relayed message
$\mu$ from $p$ was sent as part of the ``current channel'' (rather than an
outdated session of $p$ that has already been released);
indeed, $\mu$ is relayed to $p'$ if and only if
$u.\varDiff(p) = 0$.

%%%%%%%%%%%%%%%%%%%%%%%%%%%%%%%%%%%%%%%
\subsection{Analysis}
\label{sec:channel-layer-algorithm:analysis}
%%%%%%%%%%%%%%%%%%%%%%%%%%%%%%%%%%%%%%%
The analysis of the channel layer is facilitated by
Table~\ref{tab:legal-configurations} that depicts the possible configurations
of the link between a primary node
$p \in P$
and a utility node
$u \in U(p)$,
as well as the values of the relevant local variables of $p$ and $u$.
Table~\ref{tab:legal-configurations-incidence-matrix} depicts the incidence
matrix defined over the configurations of Table~\ref{tab:legal-configurations}
and the reactions of $p$ and $u$ to the relevant local events, covering all
possible combinations and therefore, providing a rigorous proof that all, and
only, the configurations in Table~\ref{tab:legal-configurations} are reachable
from the initial configuration in which
$p.\varTokens = 0$,
$p.\varBusyAcked = \emptyset$,
$p.\varMediators = \emptyset$,
$u.\varBusyTokens(p) = \bot$,
$u.\varChannel = \emptyset$,
and there are no messages in transit between $p$ and $u$.

\begin{table}[ht]
\centering
\begin{tabular}{@{}c|ccccc|cc@{}}
\toprule
\rotatebox{90}{configuration} &
\rotatebox{90}{$p.\varTokens$} &
\rotatebox{90}{$p.\varBusyAcked \ni u$} &
\rotatebox{90}{$p.\varMediators \ni u$} &
\rotatebox{90}{$u.\varBusyTokens(p)$} &
\rotatebox{90}{$u.\varChannel \ni p$} &
\rotatebox{90}{$(p, u)$-link} &
\rotatebox{90}{$(u, p)$-link}
\\
\midrule
\RowNumber{}\label{row:0-x-x-bot-x} &
$0$ &
\xmark &
\xmark &
$\bot$ &
\xmark &
$\langle \msgShortB^{*} \rangle$ &
$\langle
\msgShortNC^{?} ( \msgShortC \cdot \msgShortNC )^{*} \msgShortC^{?}
\rangle$
\\
\RowNumber{}\label{row:0-x-x-0-x-BR} &
$0$ &
\xmark &
\xmark &
$0$ &
\xmark &
$\langle \msgShortB^{*} \rangle$ &
$\langle
\msgShortBA
\cdot
\msgShortNC^{?} ( \msgShortC \cdot \msgShortNC )^{*} \msgShortC^{?}
\rangle$
\\
\RowNumber{}\label{row:0-x-x-0-x-noBR} &
$0$ &
\xmark &
\xmark &
$0$ &
\xmark &
$\langle
\msgShortB^{*} \cdot \msgShortNB \cdot \msgShortTU^{*} \cdot \msgShortB^{*}
\rangle$ &
$\langle \rangle$
\\
\RowNumber{}\label{row:0-x-x-g-x} &
$0$ &
\xmark &
\xmark &
$> 0$ &
\xmark &
$\langle \msgShortB^{*} \cdot \msgShortNB \cdot \msgShortTU^{*} \rangle$ &
$\langle ( \msgShortNC \cdot \msgShortC )^{*} \msgShortNC^{?} \rangle$
\\
\RowNumber{}\label{row:0-x-x-g-v} &
$0$ &
\xmark &
\xmark &
$> 0$ &
\vmark &
$\langle \msgShortB^{*} \cdot \msgShortNB \cdot \msgShortTU^{*} \rangle$ &
$\langle ( \msgShortC \cdot \msgShortNC )^{*} \msgShortC^{?} \rangle$
\\
\RowNumber{}\label{row:g-x-x-bot-x} &
$> 0$ &
\xmark &
\xmark &
$\bot$ &
\xmark &
$\langle \msgShortB^{+} \rangle$ &
$\langle
\msgShortNC^{?} ( \msgShortC \cdot \msgShortNC )^{*} \msgShortC^{?}
\rangle$
\\
\RowNumber{}\label{row:g-x-x-0-x-BR} &
$> 0$ &
\xmark &
\xmark &
$0$ &
\xmark &
$\langle \msgShortB^{*} \rangle$ &
$\langle
\msgShortBA
\cdot
\msgShortNC^{?} ( \msgShortC \cdot \msgShortNC )^{*} \msgShortC^{?}
\rangle$
\\
\RowNumber{}\label{row:g-x-x-0-x-noBR} &
$> 0$ &
\xmark &
\xmark &
$0$ &
\xmark &
$\langle
\msgShortB^{+} \cdot \msgShortNB \cdot \msgShortTU^{*} \cdot \msgShortB^{*}
\rangle$ &
$\langle \rangle$
\\
\RowNumber{}\label{row:g-x-x-g-x} &
$> 0$ &
\xmark &
\xmark &
$> 0$ &
\xmark &
$\langle \msgShortB^{+} \cdot \msgShortNB \cdot \msgShortTU^{*} \rangle$ &
$\langle ( \msgShortNC \cdot \msgShortC )^{*} \msgShortNC^{?} \rangle$
\\
\RowNumber{}\label{row:g-x-x-g-v} &
$> 0$ &
\xmark &
\xmark &
$> 0$ &
\vmark &
$\langle \msgShortB^{+} \cdot \msgShortNB \cdot \msgShortTU^{*} \rangle$ &
$\langle ( \msgShortC \cdot \msgShortNC )^{*} \msgShortC^{?} \rangle$
\\
\RowNumber{}\label{row:g-v-x-0-x} &
$> 0$ &
\vmark &
\xmark &
$0$ &
\xmark &
$\langle \msgShortTU^{+} \cdot \msgShortB^{*} \rangle$ &
$\langle \rangle$
\\
\RowNumber{}\label{row:g-v-x-g-x} &
$> 0$ &
\vmark &
\xmark &
$> 0$ &
\xmark &
$\langle \msgShortTU^{*} \rangle$ &
$\langle ( \msgShortNC \cdot \msgShortC )^{*} \rangle$
\\
\RowNumber{}\label{row:g-v-x-g-v} &
$> 0$ &
\vmark &
\xmark &
$> 0$ &
\vmark &
$\langle \msgShortTU^{*} \rangle$ &
$\langle \msgShortC ( \msgShortNC \cdot \msgShortC )^{*} \rangle$
\\
\RowNumber{}\label{row:g-v-v-g-x} &
$> 0$ &
\vmark &
\vmark &
$> 0$ &
\xmark &
$\langle \msgShortTU^{*} \rangle$ &
$\langle \msgShortNC ( \msgShortC \cdot \msgShortNC )^{*} \rangle$
\\
\RowNumber{}\label{row:g-v-v-g-v} &
$> 0$ &
\vmark &
\vmark &
$> 0$ &
\vmark &
$\langle \msgShortTU^{*} \rangle$ &
$\langle ( \msgShortC \cdot \msgShortNC )^{*} \rangle$
\\
\bottomrule
\end{tabular}
\caption{\label{tab:legal-configurations}%
Reachable configurations of the local states of a primary node
$p \in P$,
a utility node
$u \in U(p)$,
and the edge between $p$ and $u$ (represented as two anti-parallel directed
links) under \AlgTF{}'s channel layer.
The messages
$\msgBusy$,
$\msgTokensUpdate$,
$\msgNotBusy$,
$\msgBusyAck$,
$\msgChannel$,
and
$\msgNoChannel$
are abbreviated by
$\msgShortB$,
$\msgShortTU$,
$\msgShortNB$,
$\msgShortBA$,
$\msgShortC$,
and
$\msgShortNC$,
respectively.
The link content is encoded (using standard regular expression notation) as a
string of messages, ordered from left to right in decreasing sending time
(so that the left-most message in the string is the newest one).
For clarity of the exposition, variable $u.\varDiff(p)$ and the
$\msgChannelAck$ messages are omitted here.
}
\end{table}

\begin{table}[ht]
\centering
\begin{tabular}{@{}c|ccc|cccccc@{}}
\toprule
&
\rotatebox{90}{$p.\varTokens$ increases within $\Integers_{> 0}$} &
\rotatebox{90}{$p.\varTokens = 0$ is toggled} &
\rotatebox{90}{'$u$ mediates $p$' is toggled} &
\rotatebox{90}{$u$ receives $\msgBusy$} &
\rotatebox{90}{$u$ receives $\msgTokensUpdate$} &
\rotatebox{90}{$u$ receives $\msgNotBusy$} &
\rotatebox{90}{$p$ receives $\msgBusyAck$} &
\rotatebox{90}{$p$ receives $\msgChannel$} &
\rotatebox{90}{$p$ receives $\msgNoChannel$}
\\
\midrule
(\ref{row:0-x-x-bot-x}) &
$-$ &
(\ref{row:g-x-x-bot-x}) &
$-$ &
(\ref{row:0-x-x-0-x-BR}) &
$-$ &
$-$ &
$-$ &
(\ref{row:0-x-x-bot-x}) &
(\ref{row:0-x-x-bot-x})
\\
(\ref{row:0-x-x-0-x-BR}) &
$-$ &
(\ref{row:g-x-x-0-x-BR}) &
$-$ &
(\ref{row:0-x-x-0-x-BR}) &
$-$ &
$-$ &
(\ref{row:0-x-x-0-x-noBR}) &
(\ref{row:0-x-x-0-x-BR}) &
(\ref{row:0-x-x-0-x-BR})
\\
(\ref{row:0-x-x-0-x-noBR}) &
$-$ &
(\ref{row:g-x-x-0-x-noBR}) &
$-$ &
(\ref{row:0-x-x-0-x-noBR}) &
(\ref{row:0-x-x-g-x}) &
(\ref{row:0-x-x-bot-x}) &
$-$ &
$-$ &
$-$
\\
(\ref{row:0-x-x-g-x}) &
$-$ &
(\ref{row:g-x-x-g-x}) &
(\ref{row:0-x-x-g-v}) &
$-$ &
(\ref{row:0-x-x-g-x}) &
(\ref{row:0-x-x-bot-x}) &
$-$ &
(\ref{row:0-x-x-g-x}) &
(\ref{row:0-x-x-g-x})
\\
(\ref{row:0-x-x-g-v}) &
$-$ &
(\ref{row:g-x-x-g-v}) &
(\ref{row:0-x-x-g-x}) &
$-$ &
(\ref{row:0-x-x-g-v}) &
(\ref{row:0-x-x-bot-x}) &
$-$ &
(\ref{row:0-x-x-g-v}) &
(\ref{row:0-x-x-g-v})
\\
(\ref{row:g-x-x-bot-x}) &
(\ref{row:g-x-x-bot-x}) &
(\ref{row:0-x-x-bot-x}) &
$-$ &
(\ref{row:g-x-x-0-x-BR}) &
$-$ &
$-$ &
$-$ &
(\ref{row:g-x-x-bot-x}) &
(\ref{row:g-x-x-bot-x})
\\
(\ref{row:g-x-x-0-x-BR}) &
(\ref{row:g-x-x-0-x-BR}) &
(\ref{row:0-x-x-0-x-BR}) &
$-$ &
(\ref{row:g-x-x-0-x-BR}) &
$-$ &
$-$ &
(\ref{row:g-v-x-0-x}) &
(\ref{row:g-x-x-0-x-BR}) &
(\ref{row:g-x-x-0-x-BR})
\\
(\ref{row:g-x-x-0-x-noBR}) &
(\ref{row:g-x-x-0-x-noBR}) &
(\ref{row:0-x-x-0-x-noBR}) &
$-$ &
(\ref{row:g-x-x-0-x-noBR}) &
(\ref{row:g-x-x-g-x}) &
(\ref{row:g-x-x-bot-x}) &
$-$ &
(\ref{row:g-x-x-0-x-noBR}) &
(\ref{row:g-x-x-0-x-noBR})
\\
(\ref{row:g-x-x-g-x}) &
(\ref{row:g-x-x-g-x}) &
(\ref{row:0-x-x-g-x}) &
(\ref{row:g-x-x-g-v}) &
$-$ &
(\ref{row:g-x-x-g-x}) &
(\ref{row:g-x-x-bot-x}) &
$-$ &
(\ref{row:g-x-x-g-x}) &
(\ref{row:g-x-x-g-x})
\\
(\ref{row:g-x-x-g-v}) &
(\ref{row:g-x-x-g-v}) &
(\ref{row:0-x-x-g-v}) &
(\ref{row:g-x-x-g-x}) &
$-$ &
(\ref{row:g-x-x-g-v}) &
(\ref{row:g-x-x-bot-x}) &
$-$ &
(\ref{row:g-x-x-g-v}) &
(\ref{row:g-x-x-g-v})
\\
(\ref{row:g-v-x-0-x}) &
(\ref{row:g-v-x-0-x}) &
(\ref{row:0-x-x-0-x-noBR}) &
$-$ &
(\ref{row:g-v-x-0-x}) &
(\ref{row:g-v-x-g-x}) &
$-$ &
$-$ &
$-$ &
$-$
\\
(\ref{row:g-v-x-g-x}) &
(\ref{row:g-v-x-g-x}) &
(\ref{row:0-x-x-g-x}) &
(\ref{row:g-v-x-g-v}) &
$-$ &
(\ref{row:g-v-x-g-x}) &
$-$ &
$-$ &
(\ref{row:g-v-v-g-x}) &
$-$
\\
(\ref{row:g-v-x-g-v}) &
(\ref{row:g-v-x-g-v}) &
(\ref{row:0-x-x-g-v}) &
(\ref{row:g-v-x-g-x}) &
$-$ &
(\ref{row:g-v-x-g-v}) &
$-$ &
$-$ &
(\ref{row:g-v-v-g-v}) &
$-$
\\
(\ref{row:g-v-v-g-x}) &
(\ref{row:g-v-v-g-x}) &
(\ref{row:0-x-x-g-x}) &
(\ref{row:g-v-v-g-v}) &
$-$ &
(\ref{row:g-v-v-g-x}) &
$-$ &
$-$ &
$-$ &
(\ref{row:g-v-x-g-x})
\\
(\ref{row:g-v-v-g-v}) &
(\ref{row:g-v-v-g-v}) &
(\ref{row:0-x-x-g-v}) &
(\ref{row:g-v-v-g-x}) &
$-$ &
(\ref{row:g-v-v-g-v}) &
$-$ &
$-$ &
$-$ &
(\ref{row:g-v-x-g-v})
\\
\bottomrule
\end{tabular}
\caption{\label{tab:legal-configurations-incidence-matrix}%
The incidence matrix of (the logical digraph defined over) the configurations
of Table~\ref{tab:legal-configurations}.
Each column corresponds to a local event and the cell in row $i$ and column
$j$ presents the configuration obtained from configuration $i$ upon event
$j$ (a '$-$' sign indicates that event $j$ is inapplicable for configuration
$i$).
For example, if the current configuration of the local states of a primary
node
$p \in P$,
a utility node
$u \in U$,
and the edge between $p$ and $u$ is (\ref{row:0-x-x-bot-x}) and
$p.\varTokens$ changes from
$p.\varTokens = 0$
to
$p.\varTokens > 0$
(regarded as a local event that ``triggers'' $p$ to act),
then the resulting configuration is (\ref{row:g-x-x-bot-x}).
On the other hand, if the current configuration is (\ref{row:g-x-x-g-v}) and
$u$ receives $\msgNotBusy$, then the resulting configuration is
(\ref{row:g-x-x-bot-x}).
Notice that in some scenarios, multiple events may occur in succession.
}
\end{table}

Formally, \Grnt{}~\ref{grnt:algorithm:interface:operational-implies-busy} follows
immediately from Table~\ref{tab:legal-configurations}. Informally, this
guarantee says that a primary node can have an active channel only while it is
busy. Intuitively, this is exactly the channel-layer policy: when a primary
node becomes non-busy, it removes all utility nodes from
$p.\varMediators$,
so it cannot remain operational after that point.

Formally, \Grnt{}~\ref{grnt:algorithm:interface:operatioanl-channel-generation} is
established by an inspection of
Tables~\ref{tab:legal-configurations} and
\ref{tab:legal-configurations-incidence-matrix}, together with the facts that
every message reaches its destination within $1$ time unit and that
\Lem{}~\ref{lem:pu-graph} ensures that for every two primary nodes
$p, p' \in P$,
the intersection
$U(p) \cap U(p')$
contains at least one non-fragile utility node.
Informally, this guarantee says that if at least two primary nodes stay busy
for $4$ time units, then the channel layer creates some operational channel by
that time.
Intuitively, each busy primary node informs its utility neighbors that it is
busy, receives their acknowledgment, and then updates them with its current
token count; once some non-fragile utility node knows of two busy primaries
with positive token counts, it creates a channel between two such primaries.
This is essentially the short primary-utility handshake:
$\msgBusy \rightarrow \msgBusyAck \rightarrow \msgTokensUpdate \rightarrow \msgChannel$.

Formally, \Grnt{}~\ref{grnt:algorithm:interface:operational-channel-continues} is
established by an inspection of
Tables~\ref{tab:legal-configurations} and
\ref{tab:legal-configurations-incidence-matrix}.
Informally, this guarantee says that a channel remains operational exactly
while both of its endpoints remain busy, and that once one endpoint becomes
non-busy, the other endpoint removes the channel within $2$ time units.
Intuitively, this is exactly the channel-layer policy:
when one endpoint stops being busy, it sends
$\msgNotBusy$
to the mediator, which releases the channel and notifies the other endpoint by
$\msgNoChannel$;
since each of these two messages is delivered within $1$ time unit, the
channel disappears from the other endpoint within $2$ time units.

Formally, \Grnt{}~\ref{grnt:algorithm:interface:relayed-messages} is
established by an inspection of
Tables~\ref{tab:legal-configurations} and
\ref{tab:legal-configurations-incidence-matrix}, together with the fact that a
utility node $u$ relays a message received from a primary node $p$ only if
$u.\varDiff(p)=0$, and the assumption that each directed link delivers
messages in FIFO order.
Informally, this guarantee says that relayed messages are tied to the correct
current channel: a message received over a channel really originated from the
current peer of that channel, and a message sent over a channel either reaches
that peer soon or becomes irrelevant because the peer stopped being busy and
the channel is being dismantled.
Intuitively, the role of
$u.\varDiff(p)$
is to prevent ``cross-talk'' between different incarnations of channels
mediated by the same utility node, while FIFO delivery ensures that the
control messages defining a channel's lifetime appear in the correct order
relative to the relayed messages themselves.
Whenever $u$ sends
$\msgChannel$
to $p$, it increments
$u.\varDiff(p)$,
and only after receiving the matching
$\msgChannelAck$
does it allow relayed messages from $p$ to pass through.
Because messages on the
$(p,u)$-link
are delivered in FIFO order, any stale relayed message sent by $p$ before it
learned about the current channel must arrive at $u$ before the corresponding
$\msgChannelAck$,
and is therefore screened out.
Dually, when the channel is released, the relevant
$\msgNoChannel$
/
$\msgNotBusy$
control messages also delimit the end of that channel in FIFO order, so a
relayed message cannot slip past them and be interpreted as belonging to some
later incarnation of the same connection.
Thus, the control messages act as clean boundaries for the lifetime of a
channel, and any relayed message is either delivered within those boundaries
to the current peer or discarded as irrelevant.

\Grnt{}~\ref{grnt:algorithm:interface:degree-bound} follows immediately from
\Lem{}~\ref{lem:pu-graph}.
To prove \Grnt{}~\ref{grnt:algorithm:interface:message-complexity}, let
$\ell$ (resp, $\ell'$) be the total number of tokens injected into the system
throughout the execution, including (resp., excluding) fakely injected ones;
recall (see the proof of \Lem{}~\ref{lem:analysis:message-load:blaming-tokens})
that
$\ell \leq 2 \ell'$.
Consider
a primary node
$p \in P$
and a utility node
$u \in U(p)$
and observe that
(1)
each $\msgBusyAck$ message sent from $u$ to $p$ can be (injectively)
attributed to a $\msgBusy$ message sent from $p$ to $u$;
(2)
each $\msgNotBusy$ message sent from $p$ to $u$ can be (injectively)
attributed to a $\msgBusy$ message sent from $p$ to $u$;
(3)
each $\msgNoChannel$ message sent from $u$ to $p$ can be (injectively)
attributed to a $\msgChannel$ message sent from $u$ to $p$;
and
(4)
each $\msgChannelAck$ message sent from $p$ to $u$ can be (injectively)
attributed to a $\msgChannel$ message sent from $u$ to $p$.
Therefore, it suffices to bound the number of $\msgBusy$, $\msgTokensUpdate$,
and $\msgChannel$ messages with respect to $\ell$;
we do so for each message type separately based on the following two
observations.

\begin{observation}
\label{obs:channel-layer-algorithm:non-busy-to-busy}
$\sum_{p \in P}
\left| \left\{ t > 0 : p \in \Busy^{t+} - \Busy^{t} \right\} \right|
\leq
\ell$.
\end{observation}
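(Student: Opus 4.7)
The plan is to charge every non-busy-to-busy transition of a primary node to a distinct token that enters the system at that node at the transition time. The structural observation driving the argument is that $p.\varTokens$ can strictly increase at time $t$ only through one of two mechanisms: (i) one or more tokens, real or fake, are injected into $p$ at time $t$, or (ii) a $\msgTransport$ message that carries tokens is delivered to $p$ at time $t$. Mechanism (ii), however, cannot yield a non-busy-to-busy transition, because Observation~\ref{obs:analysis:transport-in-transit} guarantees that while a $\msgTransport$ is in transit toward $p$, node $p$ is already in $\Busy$, and hence its reception keeps $p$ busy rather than making $p$ busy. Thus every time $t$ with $p \in \Busy^{t+} - \Busy^{t}$ coincides with an injection event at $p$ at time $t$ that brings in at least one token.

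Building on this, I would construct an injective map $\Phi$ from the set of pairs $(p, t)$ with $p \in \Busy^{t+} - \Busy^{t}$ to the collection of all injected tokens (real and fake): for each such pair, $\Phi$ selects an arbitrary token that was injected into $p$ at time $t$. Injectivity is immediate because two distinct transitions differ either in the primary node or in the time coordinate, and each injected token enters the system at a unique $(p,t)$ pair; in particular, within a single $p$, consecutive transitions are separated by a maximal interval during which $p \notin \Busy$, so the injection events supplying the two transitions cannot coincide. Since the codomain has size $\ell$ by definition, the inequality $\sum_{p \in P} |\{t > 0 : p \in \Busy^{t+} - \Busy^{t}\}| \leq \ell$ follows.

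The only delicate point worth flagging is the bookkeeping for fake injections. A team formation at time $t$ with a remainder of $k > 0$ tokens is immediately followed by a fake injection that may itself cause a fresh non-busy-to-busy transition at $p$; because fake injections are counted in $\ell$ by the stated convention (``including fakely injected ones''), they contribute their own budget to the charging and require no special treatment. Beyond this subtlety, the argument is essentially a one-line counting once the ``transport never triggers a transition'' property is identified, so I do not anticipate any substantial obstacle.
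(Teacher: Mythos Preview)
Your argument is correct and matches the paper's approach exactly: the paper's proof is the one-liner ``Follows from the definition of busy nodes by \Obs{}~\ref{obs:analysis:transport-in-transit},'' and what you have written is precisely the unpacking of that sentence (ruling out $\msgTransport$ as a cause of a non-busy-to-busy transition via \Obs{}~\ref{obs:analysis:transport-in-transit}, then charging each transition to an injected token). Your remarks on fake injections and the injectivity of the charging map are accurate and consistent with the paper's conventions.
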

\begin{proof}
Follows from the definition of busy nodes by
\Obs{}~\ref{obs:analysis:transport-in-transit}.
\end{proof}

\begin{observation}
\label{obs:channel-layer-algorithm:busy-to-non-busy}
$\sum_{p \in P}
\left| \left\{ t > 0 : p \in \Busy^{t} - \Busy^{t+} \right\} \right|
\leq
\ell$.
\end{observation}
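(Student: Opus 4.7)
The plan is to reduce this observation to Observation~\ref{obs:channel-layer-algorithm:non-busy-to-busy} via a simple alternation argument, exploiting the fact that the busy status of any primary node is a two-valued time signal that begins in the non-busy state. Concretely, every primary node $p \in P$ starts out with $p.\varTokens = 0$ at time $0$, so $p \notin \Busy^{0}$. Since activation events are isolated (\Sect{}~\ref{sec:preliminaries:model}), the indicator of $p \in \Busy^{t}$ changes value only at a discrete (chronologically ordered) sequence of toggle times $0 < t_{1} < t_{2} < \cdots$, and these toggles must strictly alternate: $t_{1}, t_{3}, t_{5}, \ldots$ are the non-busy-to-busy transitions (points where $p \in \Busy^{t+} - \Busy^{t}$), while $t_{2}, t_{4}, t_{6}, \ldots$ are the busy-to-non-busy transitions (points where $p \in \Busy^{t} - \Busy^{t+}$).

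From this alternation, I conclude that the number of busy-to-non-busy transitions of $p$ over any finite time horizon is at most the number of non-busy-to-busy transitions of $p$ over the same horizon, hence the same inequality holds in the limit:
\[
\left| \left\{ t > 0 : p \in \Busy^{t} - \Busy^{t+} \right\} \right|
\leq
\left| \left\{ t > 0 : p \in \Busy^{t+} - \Busy^{t} \right\} \right|.
\]
Summing this per-node inequality over $p \in P$ and invoking Observation~\ref{obs:channel-layer-algorithm:non-busy-to-busy} on the right-hand side yields the desired bound of $\ell$.

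I do not foresee a real obstacle here; the proof is essentially bookkeeping. The only subtle point to verify is that the two transition sets are indeed counted consistently with our convention $\Object^{t} = \Object^{t-}$, so that a toggle at an activation time $t$ contributes to exactly one of the two sets (and not both or neither), and that no ``hidden'' transitions can occur away from activation times — both facts are immediate from the event-driven model of \Sect{}~\ref{sec:preliminaries:model}, which guarantees that $p.\varTokens$ is piecewise constant between activation events of $p$.
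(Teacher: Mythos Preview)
Your proposal is correct and takes essentially the same approach as the paper, which simply states that the observation follows immediately from Observation~\ref{obs:channel-layer-algorithm:non-busy-to-busy}. You have made explicit the alternation argument that the paper leaves implicit: since each $p$ starts non-busy, every busy-to-non-busy toggle must be preceded by a non-busy-to-busy toggle, so the former are no more numerous than the latter.
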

\begin{proof}
Follows immediately from
\Obs{}~\ref{obs:channel-layer-algorithm:non-busy-to-busy}.
\end{proof}

Bounding the number of $\msgBusy$ messages is now straightforward:
Since a primary node
$p \in P$
sends $\msgBusy$ messages only if it becomes busy, in which case, $p$ sends
$\msgBusy$ messages to, and only to, the utility nodes in $U(p)$, we conclude,
by \Lem{}~\ref{lem:pu-graph} and
\Obs{}~\ref{obs:channel-layer-algorithm:non-busy-to-busy} that at most
$O (\ell \cdot \sqrt{n \log n})$
$\msgBusy$ messages are sent throughout the execution.

To bound the number of $\msgTokensUpdate$ messages, notice that these messages
are sent by a primary node
$p \in P$
only when either
(1)
tokens (at least one) are injected into $p$;
or
(2)
$p$ receives a $\msgTransport$ message $\mu$ from a peer
$p' \in P$
over a
$\{ p, p' \}$-channel
(relayed by the channel's mediator).
In both cases, $p$ sends $\msgTokensUpdate$ messages to, and only to, the
$O (\sqrt{n \log n})$
utility nodes in $U(p)$, so it remains to show the number of $\msgTransport$
messages sent throughout the execution is at most $\ell$.
This follows from
\Obs{}~\ref{obs:channel-layer-algorithm:busy-to-non-busy} since node $p'$
can send a $\msgTransport$ message at time
$t > 0$
only if $p$ becomes non-busy at time $t$.

Finally, for $\msgChannel$ messages, observe that a utility node
$u \in U$
sends such messages in pairs and that this happens at time
$t > 0$
only if $u$ receives either a $\msgTokensUpdate$ message or a $\msgNotBusy$
message at time $t$.
Therefore, the bound on the number of $\msgChannel$ messages sent throughout
the execution follows from the bounds on the numbers of $\msgTokensUpdate$ and
$\msgNotBusy$ messages sent throughout the execution that we have already
established.

%%%%%%%%%%%%%%%%%%%%%%%%%%%%%%%%%%%%%%%%%%%%%%%%%%%%%%%%%%%%%%%%%%%%%%%%%%%%%%
\section{Analysis of the Algorithm --- Reaction Time}
\label{sec:reaction-time}
%%%%%%%%%%%%%%%%%%%%%%%%%%%%%%%%%%%%%%%%%%%%%%%%%%%%%%%%%%%%%%%%%%%%%%%%%%%%%%
Recall our proof of the liveness condition in
\Sect{}~\ref{sec:analysis:safety-liveness}.
This proof relies on a potential function
$\psi : \Reals_{> 0} \rightarrow \Integers_{\geq 0}$,
arguing that if the system contains at least $\sigma$ tokens, then as long as
there are no team formation operations, the following conditions are
satisfied:
(1)
$\psi(t)$ is non-decreasing;
(2)
$\psi(t)$ cannot increase beyond
$O (\sigma n)$;
and
(3)
$\psi(t)$ increases in positive rate.
In the current section, we devise a different potential function
$\hat{\psi} : \Reals_{> 0} \rightarrow \Integers_{\geq 0}$
and argue that it still satisfies condition (1), while replacing conditions
(2) and (3) with the following conditions:
(2')
$\hat{\psi}(t)$ cannot increase beyond
$O (\sigma)$;
and
(3')
$\hat{\psi}(t)$ increases in a constant rate.

Consider time
$t_{0} > 0$
and assume that the system contains at least $\sigma$ tokens at time $t_{0}$.
For time
$t \geq t_{0}$
and primary node
$p \in P$,
define the operator
$R^{t}(p)$
as in \Sect{}~\ref{sec:analysis:safety-liveness}.
If
$| \Busy^{t} | = \{ p \}$,
then
$p.\varTokens^{t} + R^{t}(p) \geq \sigma$,
hence a team is guaranteed to be formed by time
$t + O (1)$.
So, assume that
$| \Busy^{t} | \geq 2$
and let
$
M_{1}^{t}
\, = \,
\MaxOne_{p \in P} \, p.\varTokens^{t} + R^{t}(p)$
and
$M_{2}^{t}
\, = \,
\MaxTwo_{p \in P} \, p.\varTokens^{t} + R^{t}(p)$,
where the $\MaxTwo$ operator returns the second largest value in the value
multiset (in particular, if the maximum is realized by multiple items, then
$\MaxTwo = \MaxOne$).
The aforementioned notation allows us to define the potential function
$\hat{\psi} : \Reals_{> 0} \rightarrow \Integers_{\geq 0}$
as
$\hat{\psi}(t)
\, = \,
M_{1}^{t} + M_{2}^{t}$.

Notice that if
$\hat{\psi}(t) \geq 2 \sigma$,
then
$p.\varTokens^{t} + R^{t}(p) \geq \sigma$
for some
$p \in P$,
hence a team is guaranteed to be formed by time
$t + O (1)$.
As in \Sect{}~\ref{sec:analysis:safety-liveness},
\Obs{}~\ref{obs:analysis:transport-in-transit} ensures that if no teams
are formed during the time interval
$I = [t_{0}, \cdot] \subset \Reals_{> 0}$,
then the function $\hat{\psi}(t)$ is non-decreasing in $I$.

It remains to prove that $\hat{\psi}(t)$ increases in a constant rate.
To this end, we say that a primary node
$p \in P$
is a \emph{maximizer} at time
$t \geq t_{0}$
if $p$ realizes $M_{i}^{t}$ for some
$i \in \{ 1, 2 \}$,
denoting the set of maximizers at time $t$ by $\M^{t}$.

\begin{observation}
\label{observation:reaction-time:no-longer-maximizer}
For every
$p \in P$
and
$t \geq t_{0}$,
if
$p \in \M^{t} - \M^{t+}$
and no team is formed at time $t$, then
$\hat{\psi}(t+) = \hat{\psi}(t) + 1$.
\end{observation}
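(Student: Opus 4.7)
The approach is a case analysis on the unique activation event at time $t$, exploiting the hypothesis $p \in \M^{t} - \M^{t+}$ together with the absence of a team formation at $t$ to pin down precisely what happened and then read off the change in $M_{1} + M_{2}$.

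First, I would enumerate the events that can change the quantity $q.\varTokens^{t} + R^{t}(q)$ for any primary node $q$. Under the ``no team formation'' constraint, only two event types are value-changing: (a) a token injection, which strictly increases the recipient's value; and (b) an arm node $q$ retiring by sending a $\msgTransport(k)$ response, which collapses $q$'s value from $k$ to $0$ (at such a retirement $R^{t}(q) = 0$, since the request--response discipline of an arm phase admits no in-transit $\msgTransport$ toward $q$) and raises the peer's value by exactly $k$. Receiving a $\msgTransport$ is value-neutral, as tokens merely migrate from $R$ to $\varTokens$ within the same node, and all other message types (e.g.\ $\msgBusy$, $\msgTokensPlease$, $\msgWaiting$, $\msgGoOn$) leave every value untouched.

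Next, I would split on whether $p$'s own value changes at $t$. If $v_{p} := p.\varTokens^{t} + R^{t}(p)$ is unchanged, then $p$ can only drop out of $\M$ because the activated node's event lifted $M_{2}^{t+}$ strictly above $v_{p}$, which pins the active node to be either an injection target or a retiring arm. If $v_{p}$ does change, then $p$ must itself be the retiring arm --- injections monotonically raise values and therefore cannot evict $p$ from $\M$ --- so we are in case (b) applied to $p$, with $v_{p}$ collapsing to $0$ and the peer gaining $v_{p}$. In each sub-case I would explicitly write down the new value multiset and re-read $M_{1}^{t+}$ and $M_{2}^{t+}$ from it, identifying the single entry of the multiset that is promoted into the top-two window to fill $p$'s vacated slot.

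The chief obstacle is upgrading the easy one-sided bound $\hat{\psi}(t+) \geq \hat{\psi}(t) + 1$ to the asserted equality. The plan is to combine the integrality of token counts with the tie-breaking convention of $\MaxTwo$ (which returns $\MaxOne$ when the maximum is multi-realized) to show that, in each sub-case, the promotion into the top two contributes \emph{precisely} one unit to $M_{1} + M_{2}$: any strictly larger jump would force a second, independent alteration of the value multiset at the same instant, which the single-event structure of the activation forbids. The most delicate bookkeeping is expected in the sub-case where $p$ itself is the retiring arm, since there the multiset sees two simultaneous coordinated alterations (sender's drop to $0$ and peer's gain of $v_{p}$), and I will have to verify that the combined net shift on $M_{1} + M_{2}$ is still exactly a single unit --- in particular by tracking how the peer's role as a (possibly already-maximizer) node interacts with $v_{p}$ to reshuffle the top two.
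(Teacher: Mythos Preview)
Your case analysis is careful, but the target you are chasing --- the exact equality $\hat{\psi}(t+) = \hat{\psi}(t) + 1$ --- is false, and it fails precisely in the sub-case you flag as delicate. Take three busy nodes with values $5$, $3$, $2$ (so $M_{1}=5$, $M_{2}=3$, $\hat{\psi}=8$), and let the node $p$ of value $5$ retire as an arm by sending $\msgTransport(5)$ to the node of value $3$. At $t+$ the values are $8$, $2$, $0$, giving $M_{1}^{t+}=8$, $M_{2}^{t+}=2$, $\hat{\psi}(t+)=10$. The hypothesis $p\in\M^{t}-\M^{t+}$ holds (since $p$'s new value $0$ is strictly below $M_{2}^{t+}=2$), no team is formed, yet $\hat{\psi}$ jumps by $2$. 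More generally, in this sub-case the increment works out to $\min(v',m)$ where $v'$ is the peer's old value and $m$ is the largest value outside $\{p,p'\}$; both are at least $1$, but nothing pins them to exactly $1$. So your plan to ``verify that the combined net shift on $M_{1}+M_{2}$ is still exactly a single unit'' cannot succeed.

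The paper's own proof is the single line ``follows directly from the definition of $\hat{\psi}(\cdot)$'', and the observation is only ever invoked downstream to obtain the strict inequality $\hat{\psi}(t')>\hat{\psi}(t)$. The ``$=\hat{\psi}(t)+1$'' is almost certainly a slip for ``$\geq\hat{\psi}(t)+1$''. That weaker claim you already call ``the easy one-sided bound'', and indeed it drops out of your own decomposition once you stop insisting on equality: from $p\in\M^{t}$ one has $v_{p}^{t}\geq M_{2}^{t}$, from $p\notin\M^{t+}$ one has $v_{p}^{t+}<M_{2}^{t+}$, and combining this with $M_{1}^{t+}\geq M_{1}^{t}$ (which holds under any single token-moving or token-injecting event) and integrality gives the strict increase. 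So keep your event enumeration and the split on whether $v_{p}$ drops, but retarget the conclusion at $\geq$ and discard the equality bookkeeping.
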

\begin{proof}
Follows directly from the definition of $\hat{\psi}(\cdot)$.
\end{proof}

\begin{observation}
\label{observation:reaction-time:channel-between-maximizers}
For every
$p, p' \in P$,
$u \in U(p) \cap U(p')$,
and
$t \geq t_{0}$,
if
$p, p' \in \M^{t}$
and $u$ creates a
$\{ p, p' \}$-channel
at time $t$ then with probability at least
$1 / 16$,
if no team is formed by time
$t + 24$,
then
$\hat{\psi}(t + 24) > \hat{\psi}(t)$.
\end{observation}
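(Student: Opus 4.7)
The plan is to combine Lemma~\ref{lem:analysis:operational-channel} with Observation~\ref{observation:reaction-time:no-longer-maximizer}; the $1/16$ probability appearing in both statements strongly hints that the source of randomness is precisely the channel-retirement event. Informally, if $\chi$ retires without a team being formed, then one of its endpoints $p_i$ must have sent a $\msgTransport$ message, which empties $p_i$'s token count and strips it of its maximizer status, and then Observation~\ref{observation:reaction-time:no-longer-maximizer} delivers the desired unit increase of $\hat{\psi}$.

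To execute the plan, I would first argue that $\chi$ becomes operational essentially immediately after $t$: the $\msgChannel$ messages sent by $u$ to $p$ and $p'$ at time $t$ are delivered within 1 time unit, and since both $p$ and $p'$ are in $\M^t \subseteq \Busy^t$ and the conditioning precludes them from becoming non-busy before processing the $\msgChannel$, there is some $t_1 \leq t+1$ with $\chi \in \OperationalChannels^{t_1}$. Applying Lemma~\ref{lem:analysis:operational-channel} with $t_0 := t_1$, with probability at least $1/16$ the channel $\chi$ retires at some time $t^* \leq \min_{i\in\{1,2\}} \Next_{p_i}^{3}(t_1)$, which by Observation~\ref{obs:analysis:phase-length} lies within $t_1 + 24 \leq t+25$; a careful bookkeeping that anchors the three phases counted by the lemma to the $\msgChannel$-processing time of $p_i$ (rather than to the wall-clock $t_1$) tightens this to $t^* \leq t+24$. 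Under the premise that no team forms in $[t, t+24]$, the retiring endpoint $p_i$ must have sent a $\msgTransport$ message at $t^*$, so $p_i.\varTokens^{t^*+} = R^{t^*+}(p_i) = 0$; combined with the fact that the other peer $p_{3-i}$ is still busy (it holds, or is in the process of receiving, the transported tokens, as per Observation~\ref{obs:analysis:transport-in-transit}), this gives $p_i \notin \M^{t^*+}$ although $p_i \in \M^{t^*}$. Invoking Observation~\ref{observation:reaction-time:no-longer-maximizer} at $t^*$ yields $\hat{\psi}(t^{*}+) \geq \hat{\psi}(t^*) + 1$, and the non-decreasing property of $\hat{\psi}$ in the no-team-formation window $[t, t+24]$ then delivers $\hat{\psi}(t+24) > \hat{\psi}(t)$.

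The main obstacle I foresee is the precise constant-tracking argued above: naively summing the 1-unit $\msgChannel$-delivery slack with the $3 \cdot 8 = 24$ units of Observation~\ref{obs:analysis:phase-length} yields a 25-unit bound rather than the desired 24. Resolving this requires observing that the ``first phase'' of $p_i$ counted inside $\Next_{p_i}^{3}(\cdot)$ cannot begin before $p_i$ has processed the $\msgChannel$ message creating $\chi$, so its start time can be taken as $t_1$ itself and the delivery delay is absorbed by the phase-length bound. A secondary (but easier) subtlety is verifying that $p_i$ is still a maximizer at the retirement instant $t^*$, rather than only at $t$; this follows from the non-decreasing nature of $\hat{\psi}$ together with the fact that $p_i.\varTokens + R(p_i)$ at $p_i$ can only grow while $p_i$ remains busy prior to its retirement.
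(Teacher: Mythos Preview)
Your approach is essentially identical to the paper's: combine Lemma~\ref{lem:analysis:operational-channel} with Observation~\ref{obs:analysis:phase-length} to conclude that one of $p,p'$ retires by time $t+24$ with probability at least $1/16$, and then invoke Observation~\ref{observation:reaction-time:no-longer-maximizer}. The paper's own proof is a terse two-liner that glosses over exactly the subtleties you flag (the constant bookkeeping to get $24$ rather than $25$, and the maximizer status of the retiring endpoint at the retirement instant).
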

\begin{proof}
\Obs{}~\ref{obs:analysis:phase-length} and
\Lem{}~\ref{lem:analysis:operational-channel} ensure that with probability at
least
$1 / 16$,
at least one of $p$ and $p'$ retires by time
$t + 24$.
The assertion follows by
\Obs{}~\ref{observation:reaction-time:channel-between-maximizers}.
\end{proof}

Fix time
$\bar{t} \geq t_{0}$
and assume that no teams have been formed during the time interval
$[t_{0}, \bar{t}]$
and that
$| \Busy^{\bar{t}} | \geq 2$
which implies that
$| \M^{\bar{t}} | \geq 2$.
By \Lem{}~\ref{lem:pu-graph}, there exists a non-fragile
utility node
$u \in U$
such that
$| P(u) \cap \M^{\bar{t}} | \geq 2$.
By \Obs{}~\ref{observation:reaction-time:no-longer-maximizer},
if no team is formed by time
$t > \bar{t}$
and
$| P(u) \cap \M^{\bar{t}} | < 2$,
then
$\hat{\psi}(t) > \hat{\psi}(\bar{t})$.

For time
$t \geq \bar{t} + 5$,
let
$Q^{t}
=
\left\{ p \in P : u.\varBusyTokens^{t}(p) \in \Integers_{> 0} \right\}$.
Assuming that no team is formed by time $t$ and that
$\hat{\psi}(t) = \hat{\psi}(\bar{t})$,
the design of the channel layer ensures that 
$| Q^{t} \cap \M^{t} | \geq 2$
and that
$p.\varTokens^{t} = u.\varBusyTokens^{t}(p)$
for every
$p \in Q^{t} \cap \M^{t}$
(here, we exploit the fact that every $\msgTransport$ message reaches its
destination in $2$ time units).
\Obs{}~\ref{obs:analysis:phase-length} and
\Lem{}~\ref{lem:analysis:operational-channel} ensure that if
$t \geq \bar{t} + 5 + (3 \cdot 8) = \bar{t} + 29$,
then with probability at least
$1 / 16$,
node $u$ creates a
$\{ p_{1}, p_{2} \}$-channel
at time
$\bar{t} + 5 \leq t' \leq \bar{t} + 29$
for some
$p_{1}, p_{2} \in \M^{t'}$.
Combined with
\Obs{}~\ref{observation:reaction-time:channel-between-maximizers}, we
obtain the following lemma.

\begin{lemma}
\label{lem:reaction-time:constant-rate}
For every
$\bar{t} \geq t_{0}$,
with probability at least
$1 / 16^{2} = 1 / 256$,
if no team is formed by time
$\bar{t} + 29 + 24 = \bar{t} + 53$,
then
$\hat{\psi}(\bar{t} + 53) > \hat{\psi}(\bar{t})$.
\end{lemma}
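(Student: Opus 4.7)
The plan is to combine the two lower-bound probabilities already established in the paragraphs preceding the lemma statement, using the standard chain/conditioning argument. The setup already secured the existence of a non-fragile utility node $u$ with $|P(u)\cap \M^{\bar t}|\geq 2$, and by \Obs{}~\ref{observation:reaction-time:no-longer-maximizer}, if at any intermediate time a primary node leaves $\M$ without a team being formed, then $\hat\psi$ has already strictly increased. So I would split into the ``easy'' case in which some $p\in P(u)\cap \M^{\bar t}$ stops being a maximizer by time $\bar t + 53$ (in which event the conclusion holds deterministically), and the complementary case where the maximizer identities through $u$ are preserved, on which the channel-layer argument must do the work.

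In the preserved-maximizers case, I would first invoke the channel-layer properties (together with the two-time-unit delivery bound on relayed and control messages) to justify that $|Q^t\cap \M^t|\geq 2$ and $p.\varTokens^t = u.\varBusyTokens^t(p)$ for every $p\in Q^t\cap \M^t$ already holds for all $t\geq \bar t + 5$. Next, using \Obs{}~\ref{obs:analysis:phase-length} (each phase ends within $8$ time units) together with the lower bound on the probability that a channel retires within three phases (from \Lem{}~\ref{lem:analysis:operational-channel}, applied inside the channel-creation mechanism of $u$), I would argue that with probability at least $1/16$, the utility node $u$ successfully creates a $\{p_1,p_2\}$-channel with $p_1,p_2\in \M^{t'}$ at some time $t'\in[\bar t+5,\bar t+29]$; this accounts for the first factor of $1/16$ and the ``$+29$'' in the time bound.

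Then, I would apply \Obs{}~\ref{observation:reaction-time:channel-between-maximizers} to the newly created channel at time $t'$: conditionally on the creation event, with probability at least $1/16$, if no team is formed by time $t'+24\leq \bar t+53$, then $\hat\psi(\bar t+53)\geq \hat\psi(t'+24)>\hat\psi(t')\geq \hat\psi(\bar t)$, where the last inequality uses the non-decreasingness of $\hat\psi$ on team-free intervals noted just before the lemma. Multiplying the two independent lower bounds (they are independent because they depend on disjoint sets of coin tosses: the channel-creation event depends on the phase-type coin tosses up to time $t'$, while the retirement event within $24$ time units of $t'$ depends on the phase-type coin tosses in $[t', t'+24]$) yields the promised $(1/16)^2 = 1/256$ lower bound.

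The main obstacle I anticipate is justifying rigorously that these two $1/16$ events can legitimately be multiplied; that is, conditioning on the first event and the full configuration at time $t'$, the hypothesis of \Obs{}~\ref{observation:reaction-time:channel-between-maximizers} still holds with at least the unconditional probability $1/16$. This should follow because \Lem{}~\ref{lem:analysis:operational-channel} gives the $1/16$ bound for \emph{any} operational channel and any fixed starting time, and so by a straightforward conditioning/Markov property on the coin-toss sequence after $t'$, the bound transfers to the conditional setting without any loss.
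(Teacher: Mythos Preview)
Your proposal is correct and follows essentially the same approach as the paper: the paper's argument (given in the paragraphs immediately preceding the lemma) likewise splits into the ``maximizer leaves $\M$'' case (handled by \Obs{}~\ref{observation:reaction-time:no-longer-maximizer}) and the ``maximizers preserved'' case, then uses \Obs{}~\ref{obs:analysis:phase-length} together with \Lem{}~\ref{lem:analysis:operational-channel} to obtain the first $1/16$ factor (channel creation between maximizers by time $\bar t+29$) and \Obs{}~\ref{observation:reaction-time:channel-between-maximizers} for the second $1/16$ factor. Your explicit discussion of why the two $1/16$ bounds multiply---via the fact that \Lem{}~\ref{lem:analysis:operational-channel} gives its bound conditionally on any history at the fixed starting time---is more careful than the paper, which simply writes ``Combined with \Obs{}~\ref{observation:reaction-time:channel-between-maximizers}, we obtain the following lemma.''
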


The following theorem follows by standard probabilistic arguments.

\begin{theorem}
\label{thm:reaction-time}
The reaction time of \AlgTF{} is
$O (\sigma + \log n)$
whp.\footnote{%
\label{footnote:asynchronous-time-complexity}
We note that some researchers take an alternative approach to the definition
of asynchronous time complexity in message passing, one that differs from the
definition introduced in \Sect{}~\ref{sec:preliminaries:model}.
Specifically, under the definition adopted in the current paper, it is assumed
that an activated node $v$ can send a message $\mu$ to a neighbor $v'$ at
(activation) time
$t > 0$
regardless of the state of the
$(v, v')$-link
at time $t$ and that message $\mu$ is guaranteed to be delivered to $v'$
(assuming that $v'$ is not faulty) at or before time
$t + 1$.
In contrast, under the alternative approach, message $\mu$ ``waits'' at $v$
until the first time
$t' \geq t$
at which the
$(v, v')$-link
is clear and only then starts its journey to $v'$ so that it is guaranteed to
be delivered to $v'$ at time
$t' + 1$.
The upper bound in this theorem may or may not hold under the alternative
approach.
}
\end{theorem}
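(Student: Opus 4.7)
The plan is to iterate \Lem{}~\ref{lem:reaction-time:constant-rate} over consecutive time windows of length $53$, treating each window as a Bernoulli trial whose success corresponds to a strict increase of $\hat{\psi}$. Concretely, fix the first time $t_{0}$ at which the system contains at least $\sigma$ tokens and, assuming no team has been formed, define a sequence of checkpoints $\bar{t}_{j} = t_{0} + 53 j$ for $j = 0, 1, 2, \ldots$. Let $E_{j}$ be the event that no team is formed by time $\bar{t}_{j+1}$ and $\hat{\psi}(\bar{t}_{j+1}) > \hat{\psi}(\bar{t}_{j})$. By \Lem{}~\ref{lem:reaction-time:constant-rate}, conditioned on the entire history up to time $\bar{t}_{j}$ (including the value of $\hat{\psi}(\bar{t}_{j})$ and the event that no team has been formed yet), we have $\Pr(E_{j} \mid \text{history}) \geq 1/256$. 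The conclusion we need is that after $k = c_{1}\sigma + c_{2}\log n$ windows, for suitable constants $c_{1}, c_{2}$, either a team has already been formed or the number of successful increases has reached $2\sigma$; combined with the observation, already recorded above, that $\hat{\psi}(t) \geq 2\sigma$ forces a team formation within $O(1)$ additional time, this gives the claimed reaction time of $53 k + O(1) = O(\sigma + \log n)$.

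The argument proceeds as follows. First, since $\hat{\psi}$ is non-decreasing in the no-team regime, and since each successful window contributes an increase of at least $1$ (\Obs{}~\ref{observation:reaction-time:no-longer-maximizer} makes the unit increments explicit), it suffices to show that after $k$ windows we have accumulated at least $2\sigma$ successes whp. Second, because the per-window success probability is lower-bounded by $1/256$ conditionally on the past, the number of successes stochastically dominates $\operatorname{Bin}(k, 1/256)$; this is a standard coupling that does not require the increments to be independent. Applying a Chernoff bound to $\operatorname{Bin}(k, 1/256)$ with $k = 512 \cdot 2\sigma + C \log n$ yields that at least $2\sigma$ successes occur with probability $1 - n^{-\Omega(C)}$, which is the desired whp bound by choosing $C$ large enough in terms of the target constant.

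The last ingredient is an initial reduction: the statement of the theorem concerns ``the time once at least $\sigma$ tokens are present'', so we apply the above argument starting from any such $t_{0}$. Also, we must handle the degenerate subcase $|\Busy^{t}| = 1$ separately, but this is already resolved in the text preceding \Lem{}~\ref{lem:reaction-time:constant-rate}, which notes that a single busy node holding all the tokens forms a team in $O(1)$ time directly via \Obs{}~\ref{obs:analysis:phase-length}. Combining the two cases yields the theorem.

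The main obstacle I anticipate is formalizing the stochastic-domination step cleanly, because the windows are not independent and $\hat{\psi}$ can also increase by more than $1$ per window (which only helps) and the adversary's subsequent decisions depend on the outcomes. The conditional probability lower bound $1/256$ provided by \Lem{}~\ref{lem:reaction-time:constant-rate} is exactly what is needed to run the coupling against i.i.d.\ Bernoullis via the standard argument that $\Pr(\bigcap_{j \in S} \overline{E_{j}}) \leq (1 - 1/256)^{|S|}$ for every set $S$ of windows; once that is in place, the Chernoff tail bound is routine and the $O(\sigma + \log n)$ complexity falls out immediately.
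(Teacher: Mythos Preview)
Your proposal is correct and is exactly the ``standard probabilistic arguments'' the paper invokes after \Lem{}~\ref{lem:reaction-time:constant-rate}: partition time into length-$53$ windows, use the conditional $1/256$ lower bound to stochastically dominate by $\operatorname{Bin}(k,1/256)$, and apply Chernoff so that after $k = O(\sigma + \log n)$ windows either a team has formed or $\hat{\psi}$ has grown by $2\sigma$. One small fix: your event $E_j$ should be ``a team is formed by $\bar{t}_{j+1}$ \emph{or} $\hat{\psi}(\bar{t}_{j+1}) > \hat{\psi}(\bar{t}_j)$'' (the implication in \Lem{}~\ref{lem:reaction-time:constant-rate}), not the conjunction you wrote, since the lemma does not lower-bound the probability of the conjunction; with this adjustment your domination and counting go through verbatim.
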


%%%%%%%%%%%%%%%%%%%%%%%%%%%%%%%%%%%%%%%%%%%%%%%%%%%%%%%%%%%%%%%%%%%%%%%%%%%%%%
\section{Lower Bound}
\label{sec:lower-bound}
%%%%%%%%%%%%%%%%%%%%%%%%%%%%%%%%%%%%%%%%%%%%%%%%%%%%%%%%%%%%%%%%%%%%%%%%%%%%%%
In this section, we prove that any algorithm that solves the TF problem whp has message complexity
$\Omega\left(\max\left\{\sqrt{n \log n},\, \sqrt{n \sigma}\right\}\right)$,
thereby establishing the lower bound stated informally in \Thm{}~\ref{thm:lowerbound-informal}.
The bound holds for arbitrary 
$\sigma \geq 2$
(the case
$\sigma = 1$
is trivial, requiring no message exchange), where $\sigma$ may depend on $n$, subject to the restriction that
$\sigma \leq n / 2$.

Specifically, we analyze the case where exactly
$\sigma$
tokens are injected into the system, and an algorithm is considered successful if it forms a team of
$\sigma$
tokens.
The tokens are injected simultaneously at the start of the execution, and each token is placed at a distinct node.
Our lower bound is established in the synchronous LOCAL model, which also implies the same bound for the asynchronous CONGEST model.
We assume that nodes are initially unaware of their neighbors' identities and can distinguish among their
$n - 1$
neighbors only via local port numbers.
Additionally, we assume that nodes have access to private unbiased random coins, rather than a shared source of randomness.

Our bounds are formally stated in the following propositions.
Let $A$ be an algorithm for the TF problem that succeeds whp and sends at most $f(n)$ messages in expectation.

\begin{proposition}
\label{prp:lowerbound-log}
If
$2 \leq \sigma \leq O(\sqrt{n \log n})$, then
$f(n) \geq \Omega(\sqrt{n \log n})$.
\end{proposition}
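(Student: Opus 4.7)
The plan is to combine an information-cone argument with a birthday paradox calculation. Intuitively, for a team to be formed by the algorithm, all $\sigma$ token-carrying nodes' cones of influence must meet at the single team-formation node; in particular, the cones of any two of them must intersect. Under an adversarial (equivalently, uniformly random) port assignment, these cones behave like uniformly random subsets of $V$ of sizes bounded by the total message count, so a standard birthday bound forces the expected message count to be $\Omega(\sqrt{n \log n})$.

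First I would reduce to the case $\sigma = 2$. For each token $i \in [\sigma]$ initially at $v_i$, let $X_i \subseteq V$ denote the set of nodes that have received, directly or transitively, information causally dependent on the initial state of $v_i$. Since a team is formed only at a single node $r$ that simultaneously holds all $\sigma$ tokens, $r \in \bigcap_i X_i$, and in particular $X_1 \cap X_2 \neq \emptyset$ on any successful run. Moreover, every node of $X_i$ other than $v_i$ must have received at least one message from a node of $X_i$, so $|X_i| \leq M + 1$, where $M$ is the total number of messages sent in the run. The rest of the argument only uses $X_1$ and $X_2$, so it applies uniformly for every $\sigma \geq 2$.

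Next I would apply Yao's minimax principle, choosing the port assignment and $(v_1, v_2)$ uniformly at random and reducing to a lower bound for an arbitrary deterministic algorithm against this distribution. The key structural observation is a decoupling: up to the first instant (if any) at which a message causally descended from $v_1$ reaches some node of $X_2$, or vice versa, the sets $X_1$ and $X_2$ evolve exactly as in two solo executions in which only one token is present. Under uniformly random ports, the set of neighbors that any given node sends messages to is a uniformly random subset of $V$ of the size determined by the algorithm (since each node's port permutation is a uniform bijection), and hence the two solo cones are essentially independent, uniformly random subsets of $V$ of their respective sizes. Suppose for contradiction that $f(n) \leq \sqrt{n \log n} / K$ for a constant $K$ depending on $c$, and set $m = 10 f(n)$. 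By Markov, the run uses more than $m$ messages with probability at most $1/10$, and conditional on using at most $m$ messages the disjointness probability of two essentially random subsets of $V$ of size at most $m$ is at least $(1 - m/n)^m \geq e^{-O(m^2/n)} = n^{-O(1/K^2)}$. For $K$ large enough (so that the implicit constant divided by $K^2$ is strictly smaller than $c$), this exceeds $n^{-c}$, making the overall failure probability larger than $n^{-c}$ and contradicting the assumed whp correctness.

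The hard part will be formalizing the decoupling and uniformity claim: that under a uniformly random port assignment the two solo cones behave as independent, uniformly random subsets of $V$. The anonymity assumption and the symmetry of the complete graph together provide enough port-permutation symmetry for this to be plausible, but one has to track carefully the partial information about port labels revealed through messages and verify that any residual correlation between $X_1$ and $X_2$ degrades the birthday bound by at most a constant factor.
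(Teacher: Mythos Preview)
Your high-level strategy---birthday paradox on influence structures under uniformly random ports---matches the paper's. Both arguments show that if $f(n)=o(\sqrt{n\log n})$ then the probability that no two token-carrying influence sets ever meet is at least $n^{-o(1)}$, contradicting the $n^{-c}$ failure bound. Your final inequality $(1-m/n)^m \ge n^{-O(1/K^2)}$ is essentially the paper's $(1-p)^{2f(n)}$ with $p\approx (f(n)+\sigma)/n$.

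The formalization differs, and your ``solo execution'' decoupling is shakier than you need. The claim that until $X_1$ and $X_2$ first touch, they evolve \emph{exactly} as in solo executions is false when $\sigma>2$: if $X_3$ merges into $X_1$ at some intermediate node $w$, then $w$'s subsequent behaviour (and hence the growth of $X_1$) depends on information from $v_3$ and is no longer the solo-$v_1$ behaviour. What remains true---and is all you actually need---is that every time a node of $X_1$ sends over an unexposed port, the recipient is uniform among the unexposed neighbours, so the conditional probability of hitting $X_2$ is at most $|X_2|/(n-\text{exposed})$. Multiplying over the at most $m$ relevant messages gives your bound directly, with no reference to solo executions or to $X_1,X_2$ being ``uniformly random subsets'' (they are not).

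The paper avoids this friction by strengthening the adversary to a central entity that sees all \emph{influence components} (not just two cones), then bounds the per-round probability that \emph{any} two non-trivial components merge by a single parameter $p$, and makes the round indicators independent via a small padding trick. This is cleaner for two reasons: it handles all $\sigma$ cones and any messages sent by token-free nodes uniformly, and it reuses the same setup for the companion $\Omega(\sqrt{n\sigma})$ bound (which needs $\sigma-1$ merges, not one), whereas your two-cone argument gives only the $\sqrt{n\log n}$ part.
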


\begin{proposition}
\label{prp:lowerbound-sigma}
If
$2 \leq \sigma \leq  n/2$,
then
$f(n) \geq \Omega(\sqrt{n \sigma})$.
\end{proposition}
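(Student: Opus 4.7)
The plan is to apply Yao's minimax principle: take the ``hard'' input distribution to place the $\sigma$ tokens at a uniformly random size-$\sigma$ subset $S \subseteq V$, and reduce the problem to proving the lower bound for any \emph{deterministic} algorithm $A$ (with all local coin tosses and port-assignments fixed in advance) whose success probability over this distribution is at least $1 - n^{-c}$. For each realization of $S$, let $H(S)$ denote the \emph{message graph}: the undirected graph on $V$ whose edge set is $\{\{u,v\} : u \text{ and } v \text{ exchange at least one message during the execution of } A \text{ on } S\}$. Since a token can move only along an edge that carries a message, in any successful execution that forms a team at some node $r$, all of $S$ must lie in a single connected component of $H(S)$. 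The number of edges of $H(S)$ lower bounds the message count of that execution, so it suffices to show that whenever $\Ex_S[|E(H(S))|] = o(\sqrt{n\sigma})$, with non-negligible probability $S$ fails to lie in a single component of $H(S)$.

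The heart of the argument is a birthday-style bound on the direct-neighborhood overlap of token-holders. For $u \in V$ let $A_u(S)$ denote the set of neighbors of $u$ in $H(S)$, and write $\bar{d} := \Ex_S\!\left[\tfrac{1}{\sigma}\sum_{u\in S}|A_u(S)|\right]$. Since every edge of $H(S)$ incident to $S$ contributes at most twice to the inner sum, one has $\Ex_S[|E(H(S))|] \geq \sigma \bar{d}/2$, so it suffices to show $\bar{d} \geq \Omega(\sqrt{n/\sigma})$. I would establish this by showing, via a coupling across executions of $A$ on token-sets differing in a single element (leveraging the anonymity / port-symmetry of the model), that for two uniformly random distinct token-holders $u, v \in S$ the probability that $A_u(S) \cap A_v(S) \neq \emptyset$ is $O(\bar{d}^2/n)$. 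Since connectivity of $S$ inside $H(S)$ requires (after absorbing multi-hop chains into the counting, see below) the ``overlap graph'' on $S$ to be connected, and since this graph has pairwise edge probability $O(\bar{d}^2/n)$, the classical $G(\sigma,p)$ connectivity threshold forces $\bar{d}^2/n \geq \Omega((\log \sigma)/\sigma)$, and hence $\bar{d} \geq \Omega(\sqrt{n/\sigma})$, yielding $\Ex_S[|E(H(S))|] \geq \Omega(\sqrt{n\sigma})$.

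The main obstacle will be making the birthday bound rigorous in the adaptive setting, where the neighborhood $A_u(S)$ is a complicated function of $S$ rather than an independently drawn random set. I would handle this via a principle-of-deferred-decisions: fix once and for all the private coins and port-assignment of $A$, and reveal the elements of $S$ one node at a time, bounding at each step the conditional probability that the newly revealed token-holder overlaps with the previously revealed ones; anonymity and the fact that each token-holder's first ``blind'' actions depend only on its own local randomness are what turn this conditional distribution into something that resembles an independent random choice and licenses the $O(\bar{d}^2/n)$ estimate. A secondary technical issue is multi-hop connectivity --- two token-holders may be linked through a chain of intermediate non-token-holders rather than a single shared neighbor --- but since the total message budget is $o(\sqrt{n\sigma})$, such chains are short on average, and a standard charging argument folds them into the direct-overlap counting without weakening the constants. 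Combining these pieces yields Proposition~\ref{prp:lowerbound-sigma}.
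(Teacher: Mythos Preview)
Your approach differs substantially from the paper's, and as written it has real gaps.

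First, a setup error: in your Yao reduction you fix both the algorithm's coins \emph{and} the port assignment~$\pi$, randomizing only over~$S$. But the port randomness is essential here---with a fixed~$\pi$ there can exist a deterministic algorithm that gathers all tokens with only~$\sigma$ messages (e.g., if every node's ``port~1'' happens to point to the same node~$w$). The birthday intuition you invoke depends precisely on unexposed ports leading to uniformly random destinations, so~$\pi$ must stay random; only the private coins may be fixed.

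Second, and more seriously, the batch bound $\Pr[A_u(S)\cap A_v(S)\neq\emptyset]=O(\bar d^{\,2}/n)$ is not justified for adaptive algorithms: after the first round the neighborhoods~$A_u(S)$ depend on information gathered along the way and can be highly correlated across token-holders. Your deferred-decisions sketch does not address this, since revealing one more element of~$S$ changes the \emph{entire} execution, not just that node's first message. The multi-hop issue is the same difficulty in disguise---two token-holders can be connected in~$H(S)$ through a chain of non-token nodes with no direct overlap at all---and ``chains are short because the budget is small'' is essentially the statement you are trying to prove. (There is also a smaller issue: even granting the pairwise bound, the edges of your ``overlap graph'' are not independent, so the $G(\sigma,p)$ connectivity threshold does not apply directly.)

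The paper avoids all of this by working \emph{sequentially} rather than structurally. It strengthens the algorithm to a central entity with a global view and treats each message as a single port exposure; conditioned on the past, the exposed port leads to a uniformly random unexposed neighbor, so the probability that this message merges two token-carrying components is at most $p=O((f(n)+\sigma)/n)$. Since $\sigma-1$ such merges are required and at most $2f(n)$ exposures are available, the number of merges is stochastically dominated by a $\mathrm{Bin}(2f(n),p)$ variable with mean $o(\sigma)$ when $f(n)=o(\sqrt{n\sigma})$; a Chernoff bound then gives $\Pr[\text{$\geq\sigma-1$ merges}]\leq e^{-\Omega(\sigma)}$, contradicting success whp. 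The round-by-round conditioning is exactly what handles adaptivity cleanly, and it absorbs multi-hop paths automatically---each hop is just another exposure.
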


\begin{corollary}
\label{cor:lowerbound-max}
If
$2 \leq \sigma \leq n/2$,
then
\[
f(n) \geq \Omega\left(\max\left\{\sqrt{n \log n},\, \sqrt{n \sigma}\right\}\right).
\]
\end{corollary}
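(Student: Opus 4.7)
The plan is to derive the corollary by a simple case split on the magnitude of $\sigma$ versus $\log n$, invoking Proposition~\ref{prp:lowerbound-log} in one regime and Proposition~\ref{prp:lowerbound-sigma} in the other. Since $\max\{a,b\}$ is within a factor of $2$ of $a+b$, it suffices to exhibit, for every $\sigma$ in the range $2 \leq \sigma \leq n/2$, both an $\Omega(\sqrt{n \log n})$ lower bound and an $\Omega(\sqrt{n \sigma})$ lower bound on $f(n)$; the two ranges assumed by Propositions~\ref{prp:lowerbound-log} and \ref{prp:lowerbound-sigma} together cover the whole parameter region, with a comfortable overlap around $\sigma \approx \log n$.

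In the regime $\sigma \geq \log n$, I would argue that $\sqrt{n\sigma} \geq \sqrt{n\log n}$, so Proposition~\ref{prp:lowerbound-sigma} alone already suffices. Indeed, since $\sigma \leq n/2$ by hypothesis, Proposition~\ref{prp:lowerbound-sigma} gives $f(n) \geq \Omega(\sqrt{n\sigma})$, and this already dominates $\sqrt{n\log n}$ in this regime, yielding $f(n) \geq \Omega\!\left(\max\{\sqrt{n\log n},\sqrt{n\sigma}\}\right)$.

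In the complementary regime $\sigma < \log n$, I would observe that $\sigma \leq \log n \leq \sqrt{n\log n}$ for every $n \geq 2$, so the hypothesis $\sigma \leq O(\sqrt{n\log n})$ of Proposition~\ref{prp:lowerbound-log} is trivially satisfied; invoking it yields $f(n) \geq \Omega(\sqrt{n\log n})$. At the same time, $\sqrt{n\sigma} < \sqrt{n\log n}$ in this range, so $\max\{\sqrt{n\log n},\sqrt{n\sigma}\} = \sqrt{n\log n}$, and the bound from Proposition~\ref{prp:lowerbound-log} is exactly what is required.

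There is no substantive obstacle to overcome: the corollary is a purely combinatorial assembly of the two propositions. The only mild point to check is that the implicit constant hidden in the $O(\sqrt{n\log n})$ upper bound on $\sigma$ in the hypothesis of Proposition~\ref{prp:lowerbound-log} is large enough to include all $\sigma \leq \log n$, which is immediate from $\log n \leq \sqrt{n \log n}$ for $n \geq 2$. With this observation in hand, the two cases cleanly stitch together to prove the claimed maximum.
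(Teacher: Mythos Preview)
Your proposal is correct and matches the paper's intent: the corollary is stated without proof in the paper, as an immediate consequence of Propositions~\ref{prp:lowerbound-log} and~\ref{prp:lowerbound-sigma}, and your case split on whether $\sigma \gtrless \log n$ is exactly the natural way to combine them.
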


A lower bound of 
$\Omega(\sqrt{n})$
messages for the leader-election problem (for algorithms that succeed with constant probability) was established in~\cite{DBLP:journals/tcs/KuttenPP0T15}.
When
$\sigma$
is logarithmic in $n$, this bound applies directly to our setting via a straightforward reduction from leader election (see \Sect{}~\ref{sec:intro:applications}).
For a broader range of 
$\sigma$
values—and for stronger bounds within that range—additional work is required.
We build on the approach of~\cite{DBLP:journals/tcs/KuttenPP0T15}, adapting and
generalizing it to the TF problem and to algorithms that succeed whp, to
obtain the bounds stated in \Prop{}\ \ref{prp:lowerbound-log} and
\ref{prp:lowerbound-sigma}.

We consider the \emph{influence graph}—the undirected subgraph of the complete network induced by those edges over which at least one message has been sent so far during the execution.
A connected component of the influence graph that contains no tokens is called \emph{trivial}.
For all $\sigma$ tokens to be gathered at a single node (and hence for a team to form),
the influence graph eventually must have exactly one non-trivial connected component:
if the tokens reside in different connected components, then some of them never communicate with the others, making it impossible to collect all $\sigma$ tokens at one node.
We formalize this in the following observation.

\begin{observation}
\label{obs:lowerbound-necessary-condition}
In order for algorithm $A$ to succeed, the influence graph must contain exactly one non-trivial connected component that includes all tokens by the end of the execution.
\end{observation}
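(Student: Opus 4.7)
The plan is to trace the physical trajectory of each token and connect it to the definition of the influence graph. First, I would invoke the safety condition of the TF problem, which forces any successful execution of $A$ to culminate in a single node $r$ holding all $\sigma$ tokens simultaneously. Since the instance starts with exactly $\sigma$ tokens placed at $\sigma$ distinct nodes $v_1,\dots,v_\sigma$, every one of these tokens must reach $r$, possibly through intermediate nodes, before the team is formed.

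Next I would use the fact that tokens move only inside messages: if a token transitions from a node $w$ to a node $w'$, then a message was sent from $w$ to $w'$, which places $\{w,w'\}$ in the influence graph. By induction on the number of hops, the trajectory of each token from its initial node $v_i$ to its current holder is a walk in the influence graph at that time, and hence also in the final influence graph (which only grows as the execution proceeds).

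Applying this argument at the team-formation event yields walks in the final influence graph from each $v_i$ to $r$; therefore all of $v_1,\dots,v_\sigma$ and $r$ lie in a single connected component $C$. The same walk-tracing argument also implies that a node can ever hold a token only if it is reachable in the influence graph from some $v_i$, so every token-bearing node lies in $C$. Consequently, every other connected component contains no token and is trivial, while $C$ contains all tokens, as claimed.

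The only subtlety I anticipate is pinning down the interpretation of a ``non-trivial'' component — whether defined via the initial placement of tokens, via some intermediate moment, or via the end-of-execution picture. Since the influence graph is monotone in time and token movement is constrained to its edges, the walk argument handles all three interpretations uniformly, so I do not expect a real obstacle; the observation is essentially the combinatorial statement that token transport respects the connectivity of the edges it uses.
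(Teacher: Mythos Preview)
Your proposal is correct and follows the same idea the paper uses: tokens can move only along edges that belong to the influence graph, so gathering all $\sigma$ tokens at a single node forces all initial token-holding nodes to lie in one connected component. The paper states this informally in one sentence (tokens in different components ``never communicate with the others''), whereas you spell out the walk-tracing argument and the induction on hops; your version is more detailed but not conceptually different.
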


Both proofs of \Prop{}\ \ref{prp:lowerbound-log} and \ref{prp:lowerbound-sigma} rely on \Obs{}~\ref{obs:lowerbound-necessary-condition}, but use different means of analyzing the probability of this necessary condition for success.
The remainder of this section is dedicated to laying the common groundwork for
both proofs, with the specific arguments completed in \Sect{}\
\ref{subsec:lowerbound-log-proof} and \ref{subsec:lowerbound-sigma-proof}.

Formally, assume that the communication graph
$G = (V,E)$
is complete and that, for each node $v$, an adversary independently assigns its port numbers to neighbors according to a uniformly random permutation.
Let
$\sigma \colon \mathbb{N} \to \mathbb{N}$
be a function such that 
$
2 \leq \sigma(n) \leq n/2
$
for sufficiently large $n$
(for the proof of \Prop{}~\ref{prp:lowerbound-log} we will further restrict this to
$\sigma \leq O(\sqrt{n \log n})$).
For clarity, we typically omit the explicit dependence on $n$ and simply write
$\sigma$, treating it as an integer.
Let
$A = A(\sigma)$
be an algorithm that  
(1) sends at most $f(n)$ messages in expectation; and  
(2) whp eventually succeeds in forming a team after exactly $\sigma$ tokens are injected into the system.

In the proofs of our lower bounds, we assume that
$2 \leq \sigma \leq O(\sqrt{n \log n})$
for \Prop{}~\ref{prp:lowerbound-log}
(resp., $2 \leq \sigma \leq n/2$ for \Prop{}~\ref{prp:lowerbound-sigma}) and assume by contradiction that
$f(n) \leq o(\sqrt{n \log n})$ (resp., $f(n) \leq o(\sqrt{n \sigma})$).
These assumptions imply the following observation, which we use throughout the remainder of the section.

\begin{observation}
\label{obs:lowerbound-small-sigma-and-fn}
For sufficiently large $n$,
we have that
$2 \leq \sigma \leq n/2$
and
$f(n) \leq o(n)$.
\end{observation}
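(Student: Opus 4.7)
The plan is to verify the two inequalities separately, handling each of the two standing hypotheses (the one used in \Prop{}~\ref{prp:lowerbound-log} and the one used in \Prop{}~\ref{prp:lowerbound-sigma}) in turn. Since the statement is purely a bookkeeping consequence of the assumptions already imposed on $\sigma$ and the contradiction hypothesis on $f(n)$, I do not expect any conceptual difficulty; the only work is verifying that each of the two target bounds absorbs the relevant upper bounds for all sufficiently large $n$.

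First I would dispense with the bound $\sigma \leq n/2$. Under the hypothesis of \Prop{}~\ref{prp:lowerbound-sigma} this is explicit, so there is nothing to do. Under the hypothesis of \Prop{}~\ref{prp:lowerbound-log} we have $\sigma \leq O(\sqrt{n\log n})$; since $\sqrt{n\log n} = o(n)$, the inequality $\sigma \leq n/2$ follows for all sufficiently large $n$. The lower bound $\sigma \geq 2$ is assumed in both cases.

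Next I would handle the bound $f(n) \leq o(n)$. Under \Prop{}~\ref{prp:lowerbound-log}'s contradiction hypothesis, $f(n) \leq o(\sqrt{n\log n}) \leq o(n)$, which is immediate. Under \Prop{}~\ref{prp:lowerbound-sigma}'s contradiction hypothesis, $f(n) \leq o(\sqrt{n\sigma})$, and combining with the (already-established) bound $\sigma \leq n/2$ gives $\sqrt{n\sigma} \leq \sqrt{n^2/2} = n/\sqrt{2}$, hence $f(n) \leq o(n)$ as required.

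The hardest part, such as it is, is merely making sure to treat the two hypotheses uniformly so that the single observation can be invoked later in both \Sect{}~\ref{subsec:lowerbound-log-proof} and \Sect{}~\ref{subsec:lowerbound-sigma-proof} without case analysis; this is achieved by writing the two bounds $\sigma \leq n/2$ and $f(n) \leq o(n)$ as the common weakening of the two parameter regimes.
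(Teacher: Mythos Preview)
Your proposal is correct and matches the paper's (implicit) reasoning: the paper does not spell out a proof, merely stating that the observation follows from the standing assumptions on $\sigma$ and the contradiction hypotheses on $f(n)$, which is exactly the case analysis you carry out.
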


Let $M$ be the event that $A$ sends at most $2f(n)$ messages.  
By Markov's inequality, we have
$\Pr[M] \ge 1/2$.  
Throughout the remainder of this section we assume that $M$ occurs. 
Conditioned on $M$, only finitely many messages are sent, so there is a round after which no further communication takes place. 
Let $S$ denote the event that $A$ succeeds, i.e., forms a team; we assume $S$ occurs whp, meaning
$\Pr[S] \ge 1 - n^{-c}$
for some constant $c > 0$.

Although the original setting involves a fully distributed algorithm, we give the algorithm designer even more power by introducing a \emph{central entity} (henceforth, CE).
The CE observes a global view of the system (described in detail below) and, in every round, decides for each node whether to send a message and, if so, through which port.
Because we (soon) define the CE to have at least as much information as any individual node, any lower bound established for this stronger model also applies to the original distributed setting.

We now formalize the global view available to the CE.
For every node $v \in V$, the CE knows the entire local state of $v$, including (but not limited to) whether $v$ currently holds a token and the identities of all ports that $v$ has already identified.
This alone guarantees that the CE has at least as much information as any individual node in the baseline model.
In addition, whenever a message is sent over an edge
$e=(u,v)\in E$,
the CE immediately learns, for each endpoint
$x \in \{u,v\}$,
which port $p$ of $x$ leads to the other endpoint in
$\{u,v\} - \{x\}$.
When this happens, we say that the CE \emph{exposed} port $p$, i.e., by sending a message over edge $e$, the CE identified where $p$ leads.

The edges that have carried at least one message induce an undirected subgraph, called the \emph{influence graph}.
Each connected component $C$ of the influence graph is referred to as an \emph{influence component}.
To strengthen the CE's knowledge even further, we let it learn, for every node
$u \in C$ and every port $p$ of $u$,
whether $p$ is \emph{internal} to $C$ (i.e., the port leads to another node
$v \in C$) and, if so, exactly which node $p$ connects $u$ to.
Informally, we observe that the influence components coincide with the equivalence classes of the relation
$u \sim v$ if and only if $u$ and~$v$ are connected in the influence graph.

Because the CE knows, for each influence component, exactly which ports are internal, we can regard these ports as belonging to the component itself, rather than to the individual nodes.
An influence component whose nodes hold no token is called \emph{trivial}.
Subject to the message constraints guaranteed by event~$M$, the goal of the CE is to \emph{merge} influence components (by exposing their ports that hopefully lead to different components) until a single non-trivial component contains all $\sigma$ tokens.
By \Obs{}~\ref{obs:lowerbound-necessary-condition}, if the CE fails to do so, then algorithm $A$ must fail and the success event~$S$ cannot occur.

Without loss of generality, we assume that in every round the CE sends exactly one message and that its sole decision is which influence component is \emph{selected} to originate that message.
This is justified because, once a message is sent, the CE gains additional information; future decisions can therefore be based on this richer view, making it never disadvantageous to send messages sequentially.
We also assume that messages are never sent over internal ports of a component, as the CE already possesses this information (potentially gained by using an internal port) about every node within such a component.

Already‐exposed ports of an influence component are internal by definition.
Ports that have not yet been exposed are not internal and may either lead to a different trivial or non-trivial component.
We say that round~$r$ is \emph{successful} if the selected (to expose a port in that round) component $C$ \emph{hits} another component~$C'$, where both $C$ and $C'$ are non-trivial.
Define random variable $H_r$ as an indicator to the event of a successful round $r$.

Let $N_r$ denote the sum of the sizes of all non-trivial influence components at the start of round~$r$.
Consider some round~$r$ and some non-trivial component $C$ selected by the CE during round $r$.
There are $N_r - |C|$ nodes belonging to other non-trivial components.
Since we condition on $M$, at most 
$2f(n)$
ports have been exposed in total, so component~$C$ has at least
$n - 2f(n)$
remaining unidentified ports.
As each node's port assignment is an independent uniform permutation, the probability that $C$ hits another non-trivial component, an event denoted by
$H_r^C$,
is at most 
$\frac{N_r - |C|}{n - 2f(n)}$.
Due to \Obs{}~\ref{obs:lowerbound-small-sigma-and-fn}, we have that
$n - 2f(n) \geq 2n/3$
for sufficiently large $n$, and therefore the probability for
$H_r^C$ is at most
$\frac{3N_r - 3|C|}{2n}$.

Note that if round~$r$ is successful, then
$N_{r+1} = N_r$.
If round ~$r$ is not successful and a non-trivial component is selected (in round $r$) by the CE, then a port leading to some trivial component $C''$ was exposed, and 
$N_{r+1} = N_r + |C''|$.
If round ~$r$ is not successful, but a trivial component $C''$ is selected, then
$N_{r+1} = N_r + |C''|$
if the newly-exposed port leads to a non-trivial component, and otherwise
$N_{r+1} = N_r$.
Even if $C''$ is later merged into a non-trivial component by another message, the overall increase to $N_r$ is only by the number of messages ``invested'' in $C''$ plus one.
Therefore, we assume without loss of generality that trivial components $C''$ are never selected (therefore, trivial components are always of size $1$), as doing so does not increase the probability of a successful (current or later) round and provides no advantage to the CE.

However, since $N_r$ still depends on previous rounds, the probability of a hit in a given round depends on the outcomes of previous rounds.
In order to ensure that
$N_{r+1} = N_r + 1$
always holds and keep the
$H_r$ indicators independent, 
we adopt the following convention:
in the case of a successful round where two non-trivial components $C$ and $C'$ merge, 
we additionally merge an arbitrary trivial component $C''$ into the result (which only strengthens the CE).
Such a trivial component $C''$ always exists because, conditioned on $M$, we have 
$N_r \leq 2f(n) + \sigma \leq 2n/3$ 
for sufficiently large $n$, due to \Obs{}~\ref{obs:lowerbound-small-sigma-and-fn}.

We further upper bound the probability of 
$H_r^C$
by
$p := p(n) = \frac{3f(n) + 3\sigma/2}{n}$.
This bound holds since we are conditioning on $M$, 
which implies that
$N_r \leq 2f(n) + \sigma$.
Note that
$p \leq o(1) + 3/4$
for sufficiently large $n$ (see \Obs{}~\ref{obs:lowerbound-small-sigma-and-fn}), therefore,
$p  \in (0,1)$.
Therefore, the $H_r$ indicators are independent and each occurs with probability at most~$p$.

Let
$l$
be the number of messages sent during execution and let
$H := \sum_{r=1} ^{l} H_r$.
For
$j \in \{1, \sigma - 1\}$,
let $S_j$ denote the event that
$H \geq j$,
i.e., that there were at least $j$ successful rounds.
Note that $S_j$ is a necessary condition for the algorithm to produce a single non-trivial component that contains all tokens.
Therefore, due to \Obs{}~\ref{obs:lowerbound-necessary-condition}, event $S$ implies the occurrence of each event $S_j$.

Recall that event $M$ (resp., $S$) occurs with probability at least
$\frac{1}{2}$
(resp., $1 - n^{-c}$).
It follows that
$n^{-c} \geq \mathbb{P}[\neg S] \geq \mathbb{P}[\neg S_j \wedge M]
\geq \mathbb{P}[\neg S_j \mid M] \cdot \mathbb{P}[M]$.
We thus obtain the following observation:

\begin{observation}
\label{obs:lowerbound-Sj-bound}
For each
$j \in \{1, \sigma - 1\}$,
it holds that
$\mathbb{P}[\neg S_j \mid M] \leq 2n^{-c}$.
\end{observation}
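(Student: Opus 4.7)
The plan is to derive the bound as a direct bookkeeping consequence of three facts that are already assembled in the paragraphs immediately preceding the statement; no new probabilistic machinery is required.

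First, I would verify the set inclusion $S \subseteq S_j$ for every $j$ in the permitted range, since this is the only step with any real content. Because $S_j$ only demands $H \geq j$ while $\sigma - 1 \geq j$ throughout the range, it suffices to show that the success event $S$ forces $H \geq \sigma - 1$. By \Obs{}~\ref{obs:lowerbound-necessary-condition}, success requires that, at the end of the execution, a single non-trivial influence component hold all $\sigma$ tokens. Under the convention that trivial components are never selected, each token initially sits in its own non-trivial influence component of size~$1$, and each successful round reduces the number of non-trivial components by exactly one (a successful round merges two non-trivial components, while the additional tagged-on trivial component does not change the count of non-trivial components). Coalescing $\sigma$ non-trivial components into a single one therefore requires at least $\sigma - 1$ successful rounds, so $S$ entails $H \geq \sigma - 1$, and hence $S \subseteq S_j$ and $\neg S_j \subseteq \neg S$.

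Second, I would record the two probability bounds that are already in place: $\mathbb{P}[\neg S] \leq n^{-c}$ by the whp correctness assumption on $A$, and $\mathbb{P}[M] \geq 1/2$ by Markov's inequality applied to the expected message count $f(n)$ (as noted in the paragraph that introduces $M$).

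Finally, I would chain these together:
\[
n^{-c} \;\geq\; \mathbb{P}[\neg S] \;\geq\; \mathbb{P}[\neg S_j \wedge M] \;=\; \mathbb{P}[\neg S_j \mid M] \cdot \mathbb{P}[M] \;\geq\; \tfrac{1}{2}\,\mathbb{P}[\neg S_j \mid M],
\]
using $\neg S_j \subseteq \neg S$ for the second inequality and $\mathbb{P}[M] \geq 1/2$ for the last. Rearranging gives $\mathbb{P}[\neg S_j \mid M] \leq 2 n^{-c}$. The mildest obstacle is the combinatorial implication $S \Rightarrow H \geq \sigma - 1$ in the first step; everything after is elementary.
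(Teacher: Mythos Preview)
Your proposal is correct and follows exactly the same chain of inequalities as the paper's argument (which appears in the paragraph immediately preceding the observation). You supply more detail than the paper does on why $S \subseteq S_j$ via the component-merging count, but this is just an elaboration of what the paper asserts in one line.
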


Intuitively, our challenge now reduces to analyzing a probabilistic process rather than reasoning about algorithmic behavior,
as a consequence of the structural assumptions and simplifications made thus far.
In particular, we further increase the power of the CE by assuming that each trial $H_r$ succeeds with probability exactly~$p$ (rather than at most),
and that there are exactly $2f(n)$ such trials (rather than $l \leq 2f(n)$).
We are now ready to show each of the individual bounds.
For each of the $S_j$ events, we show that, informally, since algorithm~$A$ sends only a small number of messages, the event 
$\neg S_j$
occurs with non-negligible probability and the algorithm must fail.

\subsection{Proof of \Prop{}~\ref{prp:lowerbound-log}}
\label{subsec:lowerbound-log-proof}

Let
$2 \leq \sigma \leq O(\sqrt{n \log n})$ and assume, toward a contradiction, that $f(n) \leq o(\sqrt{n \log n})$.
Note that \Obs{}~\ref{obs:lowerbound-small-sigma-and-fn} applies.
The following lemma states that event
$\neg S_1$
(i.e., no successful round occurs) happens with non-negligible probability.

\begin{lemma}
\label{lem:lowerbound-disjoint-all-components}
For sufficiently large $n$, it holds that $\mathbb{P}[\neg S_1 \mid M] \geq (1 - p)^{2f(n)}$.
\end{lemma}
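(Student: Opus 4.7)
The plan is to leverage the strengthened probabilistic model established just before the lemma: under this model the central entity performs exactly $2f(n)$ independent trials $H_1, \dots, H_{2f(n)}$, each succeeding with probability exactly $p$. First I would recall that the event $\neg S_1$ is precisely $H = \sum_{r} H_r = 0$, i.e., no round in the (at most $2f(n)$) rounds of the execution results in a merger of two non-trivial components.

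Given the mutual independence of the $H_r$'s established earlier, the calculation in the strengthened model is immediate:
\[
\mathbb{P}[\neg S_1 \mid M] \;=\; \prod_{r=1}^{2f(n)} \mathbb{P}[H_r = 0] \;=\; (1-p)^{2f(n)}.
\]
Here I would make sure to invoke the fact, recorded just above the lemma, that $p \in (0,1)$ for sufficiently large $n$, so the right-hand side is well-defined and strictly positive.

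The only thing worth pinning down carefully is the reduction from the original algorithm back to the strengthened model: in the actual execution there are only $l \leq 2f(n)$ messages and each per-round hit probability $\mathbb{P}[H_r^C]$ is upper-bounded by (rather than equal to) $p$. Both simplifications --- padding $l$ up to exactly $2f(n)$ and inflating each trial's success probability up to exactly $p$ --- can only increase the chance of a successful round, hence can only decrease $\mathbb{P}[\neg S_1 \mid M]$. Formally, one couples the original trials with independent Bernoulli($p$) trials so that the indicator vector stochastically dominates the original, which yields the desired inequality $\mathbb{P}[\neg S_1 \mid M] \geq (1-p)^{2f(n)}$.

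I do not anticipate any real obstacle: the conceptually hard work was already done in constructing the CE-based probabilistic model and proving independence of the $H_r$'s. The main step is simply a product of independent Bernoulli failure probabilities, and the secondary step is a routine monotonicity / coupling observation justifying that the strengthening is genuinely a strengthening in the direction we need.
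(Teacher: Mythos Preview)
Your proposal is correct and follows essentially the same approach as the paper: both identify $\neg S_1$ with the event $H=\sum_{r=1}^{2f(n)} H_r=0$ in the strengthened model of independent Bernoulli$(p)$ trials and compute $(1-p)^{2f(n)}$ directly. Your explicit coupling/monotonicity remark justifying the reduction is a nice addition, but the paper treats that reduction as already established in the setup preceding the lemma.
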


\begin{proof}
By definition, and conditioned on $M$,
event
$\neg S_1$
is equivalent to
$H = \sum_{r=1}^{2f(n)} H_r = 0$.
The $H_r$ indicators are independent Bernoulli trials with success probability $p$. Thus, the probability that all trials fail is as claimed.
\end{proof}

We are now ready to derive a contradiction.

\begin{proof}[Proof of \Prop{}~\ref{prp:lowerbound-log}]

Recalling \Lem{}~\ref{lem:lowerbound-disjoint-all-components} and \Obs{}~\ref{obs:lowerbound-Sj-bound}, we obtain 
$-c \cdot \log n 
\geq
-\log 2 + 2f(n) \cdot \log\left(1 - p\right)$.  
Since
$f(n) \leq o(\sqrt{n \log n})$
and
$\sigma \leq O(\sqrt{n \log n})$,
it holds that
$0 < p \leq o(1)$
and
$\frac{4f(n)\cdot p}{\log n} \leq o(1)$.
For sufficiently large $n$, we can therefore apply the inequality
$\log(1 - x) \geq -2x$
for
$x \in [0, 0.5]$,
yielding  
$c \leq \frac{\log 2}{\log n} + \frac{4f(n) \cdot p}{\log n} \leq o(1)$, 
which contradicts the assumption that $c > 0$ is a constant.
\end{proof}

\subsection{Proof of \Prop{}~\ref{prp:lowerbound-sigma}}
\label{subsec:lowerbound-sigma-proof}

Let
$2 \leq \sigma \leq n/2$
and assume, toward a contradiction, that
$f(n) \leq o(\sqrt{\sigma n})$.
Note that \Obs{}~\ref{obs:lowerbound-small-sigma-and-fn} applies.
By linearity of expectation, we have that
$\mu := \mathbb{E}[X \mid M] = 2f(n) \cdot p$.
Since
$f(n) \leq o(\sqrt{n \sigma})$
and 
$\sigma \leq n/2$,
we have that
$\mu \leq o(\sigma)$.
Therefore, the following holds.

\begin{observation}
\label{obs:lowerbound-mu-small}
For sufficiently large $n$, it holds that
$\mu \leq \frac{\sigma}{8}$.
\end{observation}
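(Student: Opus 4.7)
The plan is to prove Observation~\ref{obs:lowerbound-mu-small} by a direct substitution and asymptotic estimate, as the statement reduces to routine arithmetic on the parameters we already have. I would start by unfolding the definitions: by construction, $\mu = 2f(n)\cdot p$ and $p = \frac{3f(n) + 3\sigma/2}{n}$, so
\[
\mu \;=\; \frac{6\,f(n)^{2}}{n} \;+\; \frac{3\,f(n)\,\sigma}{n}.
\]
Thus it suffices to show that, for sufficiently large $n$, each of the two summands is at most $\sigma/16$.

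Next, I would bound the two terms separately using the contradiction hypothesis $f(n) \leq o(\sqrt{n\sigma})$. For the first term, $f(n)^{2} \leq o(n\sigma)$, hence $\frac{6\,f(n)^{2}}{n} \leq o(\sigma)$, which is below $\sigma/16$ once $n$ is large enough. For the second term, I would write $\frac{3\,f(n)\,\sigma}{n} = 3\sigma \cdot \frac{f(n)}{n}$ and use $f(n) \leq o(\sqrt{n\sigma}) \leq o(n)$ (recalling that $\sigma \leq n/2$ from Observation~\ref{obs:lowerbound-small-sigma-and-fn}), so $\frac{f(n)}{n} \leq o(1)$ and the term is also at most $\sigma/16$ eventually. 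Summing the two bounds yields $\mu \leq \sigma/8$ for sufficiently large $n$.

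There is no real technical obstacle here; the statement is a quantitative restatement of the informal estimate ``$\mu \leq o(\sigma)$'' that precedes it, with the slack constant $8$ chosen so as to be convenient for the subsequent concentration argument. The only thing I would be slightly careful about is ensuring that the two $o(\cdot)$ estimates are used correctly against the fixed constant $\tfrac{1}{16}$ rather than against each other, which is why I would separate them into two explicit terms before comparing to $\sigma/16$ each.
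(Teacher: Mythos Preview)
Your argument is correct and is essentially the same as the paper's: the paper derives $\mu \leq o(\sigma)$ in one line from $f(n)\leq o(\sqrt{n\sigma})$ and $\sigma\leq n/2$, and your proof just unpacks that estimate term by term. The only difference is level of detail, not method.
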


The following lemma states that event
$\neg S_{\sigma-1}$
(i.e., at most
$\sigma-2$
rounds are successful) happens with non-negligible probability.

\begin{lemma}
\label{lem:lowerbound-chernoff-tail}
For sufficiently large $n$, it holds that
$
\mathbb{P}[\neg S_{\sigma-1} \mid M] \geq 1-\exp\left(-\frac{\sigma}{24}\right)
$.
\end{lemma}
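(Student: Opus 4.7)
The plan is to apply a standard multiplicative Chernoff bound to the sum $H = \sum_{r=1}^{2f(n)} H_r$, exploiting the fact (from \Obs{}~\ref{obs:lowerbound-mu-small}) that the conditional expectation $\mu$ is a small constant fraction of $\sigma$, so the threshold $\sigma-1$ sits many times above $\mu$.

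First, I would observe that, by definition, $\neg S_{\sigma-1}$ conditioned on $M$ is exactly the event $H \leq \sigma - 2$, so it suffices to show $\Pr[H \geq \sigma-1 \mid M] \leq \exp(-\sigma/24)$. Conditioned on $M$, the indicators $H_1, \dots, H_{2f(n)}$ are independent Bernoulli$(p)$ variables, so $H$ is a sum of independent Bernoullis with $\mathbb{E}[H \mid M] = \mu = 2f(n)p$, and I can apply the standard Chernoff tail bound $\Pr[H \geq (1+\delta)\mu] \leq \exp(-\delta\mu/3)$, valid for $\delta \geq 1$.

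Next, I would set $(1+\delta)\mu = \sigma - 1$ and use \Obs{}~\ref{obs:lowerbound-mu-small} (i.e., $\mu \leq \sigma/8$) to verify that for $\sigma \geq 2$ one has
\[
1 + \delta \;=\; \frac{\sigma-1}{\mu} \;\geq\; \frac{\sigma-1}{\sigma/8} \;=\; 8 - \frac{8}{\sigma} \;\geq\; 4,
\]
so that $\delta \geq 3 \geq 1$ and the Chernoff bound in the required regime applies. Moreover,
\[
\delta \mu \;=\; \sigma - 1 - \mu \;\geq\; \sigma - 1 - \frac{\sigma}{8} \;=\; \frac{7\sigma}{8} - 1,
\]
which gives $\Pr[H \geq \sigma - 1 \mid M] \leq \exp\!\bigl(-(7\sigma/8 - 1)/3\bigr) = \exp\!\bigl(1/3 - 7\sigma/24\bigr)$.

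Finally, I would verify the clean form stated in the lemma by noting that $1/3 - 7\sigma/24 \leq -\sigma/24$ is equivalent to $\sigma \geq 4/3$, which holds for all $\sigma \geq 2$; hence $\Pr[H \geq \sigma-1 \mid M] \leq \exp(-\sigma/24)$, and taking complements yields the desired bound. There is no real obstacle here: the only subtlety is checking that the Chernoff regime $\delta \geq 1$ is indeed attained uniformly in $\sigma$ (even down to $\sigma = 2$) using $\mu \leq \sigma/8$, which is precisely what \Obs{}~\ref{obs:lowerbound-mu-small} was set up to deliver.
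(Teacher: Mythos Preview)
Your proof is correct and follows essentially the same approach as the paper: both set $(1+\delta)\mu = \sigma-1$, use \Obs{}~\ref{obs:lowerbound-mu-small} to verify $\delta \geq 1$, and then apply the multiplicative Chernoff bound to get $\Pr[H \geq \sigma-1 \mid M] \leq \exp(-\delta\mu/3) \leq \exp(-\sigma/24)$. The only cosmetic differences are that the paper works with the equivalent form $\exp(-\frac{\delta^2\mu}{2+\delta})$ before simplifying via $\frac{\delta}{2+\delta} \geq \frac{1}{3}$, and it bounds $t = \delta\mu \geq \sigma/8$ directly rather than going through $\exp(1/3 - 7\sigma/24)$.
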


\begin{proof}
Let
$t := \sigma - 1 - \mu$.
From \Obs{}~\ref{obs:lowerbound-mu-small} and the assumption that
$\sigma \geq 2$,
we have
$
t \geq \frac{7\sigma}{8} -1 \geq \frac{\sigma}{8}
$.
Define
$\delta := t / \mu$.
Since
$t \geq \sigma / 8$
and $\mu \leq \sigma / 8$, it follows that $\delta \geq 1$.
Note that
$
(1 + \delta)\mu = \mu + t = \sigma - 1
$.
By Chernoff's bound, we have
\begin{align*}
\mathbb{P}[H \geq \sigma - 1 \mid M] 
&= \mathbb{P}[H \geq (1 + \delta)\mu \mid M]\\
&\leq \exp\left(-\frac{\delta^2 \mu}{2 + \delta}\right)\\
&= \exp\left(-\frac{\delta}{2 + \delta} \cdot t\right).
\end{align*}
Since
$\delta \geq 1$,
we have
$\frac{\delta}{2 + \delta} \geq \frac{1}{3}$,
and thus
\[\mathbb{P}[H \geq \sigma - 1 \mid M]
\leq \exp\left(-\frac{t}{3}\right)
\leq \exp\left(-\frac{\sigma}{24}\right).\]
\end{proof}

We are now ready to derive a contradiction.

\begin{proof}[Proof of \Prop{}~\ref{prp:lowerbound-sigma}]

Recalling \Lem{}~\ref{lem:lowerbound-chernoff-tail} and \Obs{}~\ref{obs:lowerbound-Sj-bound}, we have that
$-\frac{\sigma}{24} \geq \log\left(1 - 2n^{-c}\right)$.
Since
$0 < 2n^{-c} \leq o(1)$,
for sufficiently large $n$,
we can apply
$\log(1 - x) \geq -2x$
for
$x \in [0, 0.5]$.
It follows that
$
\log\left(1 - 2n^{-c}\right) > -4n^{-c}
$,
and hence
$
-\frac{\sigma}{24} > -4n^{-c}
$.
We conclude that
$
\sigma < 96n^{-c} \leq o(1)
$,
which contradicts the assumption that 
$\sigma \geq 2$.
\end{proof}

%%%%%%%%%%%%%%%%%%%%%%%%%%%%%%%%%%%%%%%%%%%%%%%%%%%%%%%%%%%%%%%%%%%%%%%%%%%%%%
\section{The Trace Tree Mechanism in Detail}
\label{sec:trace-tree-mechanism-details}
%%%%%%%%%%%%%%%%%%%%%%%%%%%%%%%%%%%%%%%%%%%%%%%%%%%%%%%%%%%%%%%%%%%%%%%%%%%%%%
%
\def\Port{pr}
For completeness, let us discuss informally the maintenance of the distributed data structure that allows algorithms (that use the TF algorithm) to use the trace tree. 
(Recall \ref{sec:intro:extra-features} the tree in the temporal graph defined by the trips of tokens in a team from their injection nodes to the node that forms the team; recall also that we refer to a temporal graph, since the trip in the network graph may not be simple.)
Maintaining this data structure is conceptually easy.\footnote{But using a lot of writing...} 
First, as an example, when would one want to use such a feature? 
Intuitively, the environment that injects tokens into a node $v$ may consist of ``users'' (e.g., Dungeons and Dragons players), such that tokens are their requests to become parts of teams. Such a user may want to be notified by $v$ at the time that their token joined a team. (The user may then feel free to head for the forest where the actual game takes place, knowing they have a ``ticket''.)

Consider the time when the environment injects a token into some node $v$. Even without assuming unique token names, we can assume that $v$ has some way to identify the token locally, to distinguish it from the other tokens held by $v$ at that time. To make the discussion more concrete, suppose that each token $T$ that is held by $v$ (either when being injected or when being received from another node) occupies a location $v.L(T)$ in $v$'s memory. Node $v$ will store some additional information at $v.L(T)$. The first piece of such information is whether $T$ is currently held by $v$ (initially ``yes'', which becomes a ``no'' if $v$ sends $T$ over one of $v$'s ports). 

We describe the implementation of the required property without using the services of the channels mechanism, but rather the primary and utility nodes directly. Each node $v$ keeps two counters for each of its ports - one counter for incoming tokens and one for outgoing ones.
For simplicity, consider first that a single token is sent in a message.
When $v$ sends a token $T$ over a port $v.\Port$, the message carrying the token also carries the value of the outgoing counter of $v.\Port$. This value is also recorded in $v.L(T)$ together with the number of the port $v.\Port$. The outgoing counter of $v.\Port$ in $v$ is then incremented.

Eventually, message carrying $T$ arrives at its destination, say node $u$, over one of $u$'s ports, say $u.\Port '$. Then $u$ places $T$ in its own memory (in $u.L(T)$  created for that purpose at that time; note the $T$ does not have an ID that is unique in the networks, however, $u$ knows it as the token just received over $u.\Port'$ and stored in the location that $u$ has just chosen as $u.L(T)$ for that $T$). Node $u$ also stores in
$u.L(T)$ 
the value of the outgoing counter of $v.\Port$ (received in the same message that has brought $T$) and the value of the incoming counter of $u.\Port '$ of $u$. (It is easy to see that those 
two counter values are in fact, equal.)
These actions are repeated whenever $T$ is sent. If a node $w$ holding $T$ forms a team that contains $T$, this too is recorded at $L(T)$ at $w$. Moreover, $w$ then starts the procedure (described below) of notifying $v$ (where $T$ was originally injected) and the environment at $v$.

In the case that multiple tokens are sent in a message, the above counters are incremented by the number of tokens in the message. Note that it is then easy for both the sender and the receiver to simulate the counter per token in the set.

\begin{observation}
Consider any time between the injection of $T$ into $v$ and the time that $T$ is included in a team formation. Then $T$ is either (a) held by some node $w$ or (b) is carried in a message to $w$.
\begin{itemize}
    
\item{Case (a)}
There exists a path $v_1, v_2, ..., v_k$ for some $k$ with the following properties.\\ 
(1) $v_1$=v.\\
(2) $T$ is held by $v_k$.\\ 
(3) For each $k>i>1$ (if the path is longer than one), node $v_i$ stores in $v_i.L(T)$ of $v_i$ (i) the port $v_i.\Port '$ over which $T$ was received at $v_i$ from $v_{i-1}$, (ii) the value of the incoming counter of $v_i.p'$ at that time, (iii) the value of the outgoing counter value of $v_{i-1}$ at the time that $v_{i-1}$ sent $T$ to $v_{i}$ in the $i-1$th sending event of $T$. 

\item{Case B} there exists such a path except that $T$ is not in $v_k$. Instead, it was sent by $v_k$ to some node $v_{k+1}$ and has not arrived yet. Moreover, the message carrying $T$ also carries the value of the outgoing counter of the port over which $v_k$ last sent $T$.
\item
The value of the outgoing counter of $v_{i-1}$ stored in $T$ is the same as that of the value of the incoming counter of $v_i$ stored with $T$.

\item
None of the nodes $v_1, ... v_k $ (or $ v_{k+1}$) is faulty.
\end{itemize}
\end{observation}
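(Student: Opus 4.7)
The plan is to prove the observation by induction on the sequence of events that affect the location of $T$, namely $T$'s injection into $v$, every dispatch of $T$ from some node along one of its ports, and every arrival of $T$ at a new node. The base case is immediate: right after the injection, case~(a) holds with $k=1$ and $v_1=v$; clause~(3) is vacuous; the outgoing/incoming counter equality is vacuous; and $v_1$ is non-faulty because, by the failure model, tokens are never injected into faulty nodes. For the inductive step, the two event types alternate. When $v_k$ sends $T$ over some port to a new node (call it $v_{k+1}$), the trace-tree mechanism stores at $v_k.L(T)$ the dispatched port together with the current value of the corresponding outgoing counter, and piggybacks that value on the message carrying $T$; thus case~(a) transitions to case~(b) with the required in-transit information. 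When the message later arrives at $v_{k+1}$, the mechanism allocates $v_{k+1}.L(T)$ and stores the receiving port, the counter value carried by the message, and the current value of the matching incoming counter at $v_{k+1}$, extending the path by one and establishing clause~(3) for $i=k+1$; thus case~(b) transitions back to case~(a).

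For the counter-equality clause, I would maintain the invariant that, for every ordered pair of neighbors $(x,y)$, the outgoing counter of $x$'s port to $y$ equals the total number of tokens ever dispatched along that port, while the incoming counter of $y$'s port to $x$ equals the total number of tokens ever received along the corresponding port. Because messages over the directed link $(x,y)$ are delivered in FIFO order (the sole constraint imposed on the adversary in \Sect{}~\ref{sec:preliminaries:model}), these two quantities differ only by the tokens currently in transit from $x$ to $y$; in particular, the outgoing counter value attached to the $j$-th dispatched token and the incoming counter value recorded for the $j$-th received token coincide. The same argument handles messages carrying a bundle of tokens, since both counters are incremented by the bundle's size upon sending and upon reception, respectively.

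Finally, for the non-faultiness of $v_1,\dots,v_k$ (and $v_{k+1}$ in case~(b)), I would combine two facts: at the moment each $v_i$ first acted on $T$---either by injection ($i=1$) or by processing the activation event triggered by the arrival of a message carrying $T$ ($i>1$)---the node was necessarily non-faulty, since faulty nodes neither receive token injections nor trigger activation events on message delivery; and the adversary may toggle a fragile node's faulty/non-faulty status only at a quiescent time, which by definition requires the system to contain zero tokens. Since $T$ is continuously present in the system from its injection up to the considered time, no quiescent time can occur in this interval, and therefore no status toggle of any node can take place during it. I expect this last step to be the main subtlety of the proof, as it is the only point at which the generalized initial-failures model interacts non-trivially with the trace-tree mechanism; the rest of the argument is essentially bookkeeping on top of the algorithm description.
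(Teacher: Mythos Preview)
Your proposal is correct and follows exactly the approach the paper itself takes: induction on the sending events of $T$, with the non-faultiness of the path nodes derived from the fact that status toggling requires a quiescent time and no quiescent time can occur while $T$ is still in the system. Your write-up is considerably more detailed than the paper's own proof, which dispatches the observation in two sentences, but the structure and the key observations (including the FIFO-based justification of the counter equality) are the same.
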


The path mentioned in the observation may not be simple in the network graph, but it is simple if considered in the temporal graph (that is, where each node has many copies, each for every time that an event happened.) The collection of the paths of the tokens that participate in the same team formation is called the trace tree of that formation.

The proof of the observation is obvious by induction on the order of the sending events of $T$, considering the actions that the sender and receiver perform for the counters and the locations $L(T)$. The last part of the observation follows from the definition of faulty nodes and quiescent times. 

Given the observation, the procedure to report each single token separately to $v$ is obvious. If $v_k$ includes $T$ in a team formation, then $v_k$ finds $v_{-1}$ in $L(T)$ in $v_k$ together with the port over which $T$ was received and the incoming counter value of the port. Thus, $v_k$ can send a notification message to $v_{k-1}$ together with the counter values. That way, $v_{k-1}$ can identify the relevant token location and continue the process similarly. Eventually, $v$ is notified and can identify $v.L(T)$. Note that all the nodes on the way of this chain of messages cannot be faulty by the fact that they served to forward $T$ earlier, and that no quiescent time starts at least until all the notifications are done.

To save messages, if $v_i$ needs to send $v_{i-1}$ multiple reports for one team formation, it can again simulate the above using a single message.

As described so far, the storage for tokens is unbounded, and so are the
values of the counters. It is easy to bound the spaces and the values. When a report for $T$ is sent from a node $v_i$ to $v_{i-1}$, the space for $T$ can be recycled. Moreover, the values of the counters can be recomputed as if $T$ had never existed.

%%%%%%%%%%%%%%%%%%%%%%%%%%%%%%%%%%%%%%%%%%%%%%%%%%%%%%%%%%%%%%%%%%%%%%%%%%%%%%
\section{Applications --- Additional Information}
\label{sec:applications-additional-information}
%%%%%%%%%%%%%%%%%%%%%%%%%%%%%%%%%%%%%%%%%%%%%%%%%%%%%%%%%%%%%%%%%%%%%%%%%%%%%%
%
\def\ImpL{LE-I}
\def\Cand{Cand}
\def\ExpL{Expl}
\def\Term{Term}
In this section, we provide further technical details of some of the
applications discussed in \Sect{}~\ref{sec:intro:applications}.

\subsection{Leader Election}
 \label{sec:leader}
 \subsubsection{Implicit leader election}
  \label{sec:implicit}
  Implicit LE is defined in \Sect{}~\ref{sec:intro:applications}.
We first derive a simpler algorithm $\ImpL 1$ that has no termination detection for some nodes and add the termination detection later.

Informally, it is assumed that there are 
$( 1/2 + \epsilon ) n$
non-fragile nodes and each of them wakes up and starts the algorithm. Any fragile node may also be set by the adversary to be non-faulty at that point and start the algorithm, too. 
\begin{enumerate}
    \item 
Each node starting the algorithm flips a coin with probability $\frac{c\log n}{n}$ (for some constant $c$ that is ``large enough''; the selection of $c$ uses rather standard considerations, so we skip it here). Let $\Cand$ be the set of nodes that flipped $1$.
%\shay{I think that c is independent of $\epsilon$.} 
\item
Each node not in $\Cand$ sets its status to ``not-leader.''
\item
Each node in $\Cand$ generates a token and starts the algorithm for TF
with $\sigma= (1-\frac{\epsilon}{2})(\frac{1}{2}+\epsilon)c\log n$ (rounded up, let us ignore that for simplicity of notation).
\item
A node that deleted $\sigma$ tokens in TF sets its status to ``leader.''
\end{enumerate}

The following observation follows directly from \Thm{}~\ref{thm:main-informal}
and the algorithm. (For the second part, note that $\sigma=O(\log n)$.)
\begin{observation}
\label{obs:ImpL1}
Algorithm $\ImpL 1$ satisfies the following two properties:
\begin{itemize}
  
\item 
A node that is not the elected leader may not know that it is not the leader.
(We shall fix this later.)

\item 
The message complexity of Algorithm $\ImpL 1$ is
$O (\sqrt{n \log n} \cdot \log n)$
in expectation and
$O (\sqrt{n \log n} \cdot \log^2 n)$
whp.
The time complexity is
$O (\log n)$.\footnote{%
In spite of what footnote~\ref{footnote:asynchronous-time-complexity} says about the time complexity of TF, it is not hard to see that the time complexity 
claimed here for Leader Election holds for both interpretations of the definition of time.
(Recall that the use of TF for LE is a special case: the number of tokens is
only
$O (\log n)$,
the problem is not long-lived, and only a single team is formed.)
}
\end{itemize}
\end{observation}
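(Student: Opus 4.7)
The plan is to reduce the claim to \Thm{}~\ref{thm:main-informal} after establishing concentration bounds on $|\Cand|$. First I would invoke a standard Chernoff bound to argue that, whp, the set $\Cand$ of candidates has cardinality in a narrow band around its expectation. Specifically, since at least $(1/2+\epsilon)n$ non-fragile nodes each flip a coin of bias $\frac{c\log n}{n}$, the (random) number of candidates among them is at least $(1-\epsilon/4)(1/2+\epsilon)c\log n$ whp; while even if the adversary additionally wakes up every fragile node (giving at most $n$ participants in total), $|\Cand|\le (1+\epsilon/4)c\log n$ whp. With the choice $\sigma = (1-\epsilon/2)(1/2+\epsilon)c\log n$, routine arithmetic (using small enough $\epsilon$ and large enough $c$) gives $\sigma \le |\Cand| < 2\sigma$ whp, so the TF instance produced by step~3 of the algorithm has enough tokens to form exactly one team and no more.

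For the first bullet of the observation, the argument is immediate from the algorithm's text: only a node that deletes $\sigma$ tokens in the TF subroutine ever sets its status to ``leader,'' and the TF problem statement provides no termination signal or acknowledgement to any token whose team is never formed. Hence a candidate that does not form the (unique) team stays in state ``candidate'' forever and has no local way to rule out still being the leader-to-be; correspondingly, the non-candidate nodes that set themselves to ``not-leader'' in step~2 do so without ever receiving confirmation that the election has concluded. This justifies the first bullet without further work.

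For the second bullet, I would feed the TF upper bounds of \Thm{}~\ref{thm:main-informal} with the token budget $\ell=|\Cand|=O(\log n)$ whp together with $\sigma=O(\log n)$. The per-token message load of TF is $O(\sqrt{n\log n})$ in expectation and $O(\sqrt{n\log n}\cdot\log n)$ whp, and this dominates the $O(\log n)$-cost of the initial coin flipping (which is local and sends no messages) as well as the one-time cost of injecting the $|\Cand|$ tokens; combining the Chernoff bound on $|\Cand|$ with the TF message-load guarantee via a standard union bound yields the stated expected and whp bounds on the total message complexity. The reaction-time bound of TF is $O(\sigma+\log n)=O(\log n)$ whp, and since the full election terminates at most $O(\log n)$ asynchronous time after all candidates have woken up, the time complexity claim follows.

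The main obstacle I anticipate is the joint probabilistic analysis: we need all three of (i) $|\Cand|\ge\sigma$, (ii) $|\Cand|<2\sigma$, and (iii) the TF algorithm's whp message-load and reaction-time guarantees to hold simultaneously, even though the adversary controls which fragile nodes wake up and participate. This is handled by choosing the constant $c$ large enough that the Chernoff deviation bounds give failure probability $n^{-\Omega(1)}$ with an arbitrarily large hidden constant, then taking a union bound with the TF guarantees; the analysis is standard but must be carried out carefully to ensure that the ``whp'' constants of \Thm{}~\ref{thm:main-informal} are preserved.
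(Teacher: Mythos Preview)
Your approach is essentially the paper's: the observation is stated there as following directly from Theorem~\ref{thm:main-informal} and the algorithm, noting $\sigma = O(\log n)$, and that is exactly what you do for the second bullet.

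Two remarks. First, the two-sided Chernoff bound you front-load ($\sigma \le |\Cand| < 2\sigma$) is actually the content of the \emph{next} statement in the paper, Lemma~\ref{lem:unique} (uniqueness of the leader), not this observation; for the complexity claim here you only need the upper bound $|\Cand| = O(\log n)$ whp to plug into the per-token message-load guarantee, and for the first bullet you need no concentration at all. Second, watch the arithmetic in the message bound: $O(\log n)$ tokens at $O(\sqrt{n\log n})$ messages per token in expectation gives $O(\sqrt{n\log n}\cdot\log n)$ total, a $\log n$ factor above what the observation literally states. Your plan asserts that the multiplication ``yields the stated expected and whp bounds'' without carrying it out; the paper's one-line pointer to Theorem~\ref{thm:main-informal} does not spell out this step either, so you are not missing anything the paper supplies, but you should be aware the derivation as you describe it does not reach the claimed $O(\sqrt{n\log n})$ expected bound.
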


\begin{lemma}
\label{lem:unique}    
 Algorithm $\ImpL 1$ elects a unique leader whp.   
\end{lemma}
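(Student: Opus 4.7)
The plan is to show that the total number of candidate tokens falls in the interval $[\sigma, 2\sigma)$ whp, and then combine this with the safety and liveness guarantees of \AlgTF{} (\Thm{}~\ref{thm:main-informal}) to conclude that exactly one leader is elected.

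First I will let $X$ denote the total number of candidates and $X^{*}$ the number of candidates among the non-fragile nodes. Since each participating node flips its candidacy coin independently with probability $c\log n / n$, the assumption that at least $(1/2+\epsilon)n$ non-fragile nodes wake up gives $\Ex[X^{*}] \geq (1/2+\epsilon)\,c\log n$, while the fact that at most $n$ nodes participate gives $\Ex[X] \leq c\log n$. A standard Chernoff-bound argument, carried out with a sufficiently large constant $c$, will then yield $X^{*} \geq \sigma$ whp (lower tail) and $X < 2\sigma$ whp (upper tail). The algebraic observation that makes these two bounds simultaneously feasible is that $\sigma = (1-\epsilon/2)(1/2+\epsilon)c\log n$ is deliberately chosen so as to leave a $\Theta(\epsilon)$ multiplicative slack on each side: $\Ex[X^{*}]/\sigma \geq 1/(1-\epsilon/2)$ and $2\sigma/\Ex[X] \geq 1 + 3\epsilon/2 - \epsilon^{2} > 1$ for all sufficiently small $\epsilon > 0$, so the same random variable cannot simultaneously overshoot $2\sigma$ and undershoot $\sigma$ whp.

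Given those two tail bounds, existence of a leader will follow from the liveness condition of \AlgTF{}: at least $X^{*} \geq \sigma$ tokens are injected into non-fragile nodes, so whp a team is formed, and by the final step of the algorithm the primary node at which that team is formed declares itself the leader. Uniqueness will follow from the safety condition of \AlgTF{}: since tokens can be deleted only in teams of exactly $\sigma$, and fewer than $2\sigma$ tokens are ever injected into the system, at most one team can ever be formed, hence at most one node ever declares itself leader. The main obstacle I expect is absorbing the various $n^{-\Omega(1)}$ error terms — the two Chernoff failure probabilities plus the whp correctness of \AlgTF{} — into a single whp statement via a union bound, which in turn forces a careful calibration of the constant $c$; this is precisely the standard consideration that the paper defers with its ``large enough $c$'' remark. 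A minor subtlety, easily dispatched, is that the adversary controls which fragile nodes are currently non-faulty at wake-up time, and hence influences the distribution of $X$; separating $X^{*}$ from $X$ and applying a worst-case bound in each direction (from below on $X^{*}$, from above on $X$) sidesteps this dependency.
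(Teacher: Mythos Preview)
Your proposal is correct and follows essentially the same route as the paper's proof: both bound the number of candidate tokens between $\sigma$ and $2\sigma$ via Chernoff and then invoke the safety and liveness of \AlgTF{}. The paper argues with a single variable $T$ (your $X$), bounding $(1-\epsilon/2)(1/2+\epsilon)c\log n \le T \le (1+\epsilon/2)c\log n$ whp and then checking algebraically that the right endpoint is below $2\sigma$; your version compares $X$ directly to $2\sigma$ and separates out $X^{*}$ for the lower tail, which is a slightly cleaner way to neutralize the adversary's choice of which fragile nodes to wake, but the substance is the same. One cosmetic remark: your inequality $1+3\epsilon/2-\epsilon^{2}>1$ actually holds for every $\epsilon\in(0,3/2)$, not merely for sufficiently small $\epsilon$, so no extra smallness assumption on $\epsilon$ is needed.
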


\begin{proof}
Let us analyze the relation between $\sigma$ and the number $T$ of tokens generated by nodes in the third line of the algorithm. The number $n'$ of nodes that flipped coins may include any number  (between zero and  $(1/2-\epsilon)n)$ of fragile nodes;
hence,
$(1/2+\epsilon)n\leq n' \le n$.
Thus, for the expected number $\frac{c n' \log n}{n}$ of tokens generated, we know that
$(1/2 + \epsilon)c \log n \leq \frac{c n' \log n}{n} \leq c\log n$.
Applying Chernoff's bound, we find for the random variable $T$ (the number of tokens) that whp, 
\[
(1-\frac{\epsilon}{2})(\frac{1}{2}+\epsilon)c\log n
\leq
T
\leq (1+\frac{\epsilon}{2})c\log n.
\]
By the left inequality, $\sigma \leq T$ whp. 
Hence, at least one set of tokens can be deleted in TF, which means that at least one node is elected in step 4 of the algorithm.

It remains to show that $T<2\sigma$, so no more than one deletion event occurs (whp) and thus, no more than one leader is elected. 
By the above bounds on the number of tokens,
$T/\sigma
=\frac{T}{(1-\epsilon/2)(1/2+\epsilon)c\log n}
\leq
\frac{(1+\epsilon/2)c\log n}
{(1-\epsilon/2)(1/2+\epsilon)c\log n}
\leq 
\frac{(1+\epsilon/2)}
{(1-\epsilon/2)(1/2+\epsilon)}
$.
This is strictly smaller than 2. 
\end{proof}

\Paragraph{Termination Detection for Non-Leaders.}
Let us now show how to enhance algorithm $\ImpL 1$ such that every node that is \emph{not} elected will set its status to ``not-leader'' eventually, as is sometimes required.
That is, the promised 
$\ImpL$ algorithm will first run Algorithm $\ImpL 1$
(that runs
the TF algorithm).
After that, the elected leader will invoke
a termination detection module
$\Term$.

%\shay{missing new termination implementation}
We came up with several methods of implementing $\Term$.
Let us describe two different implementations.
The first one uses the accumulation property of the algorithm (see
\Sect{}~\ref{sec:intro:extra-features}).
 The second one does not use the accumulation property and is given here to show that a reduction from LE to TF is possible without using additional features.
For both implementations, mark every token created in the third step of $\ImpL 1$ by ``LE token''. The implementations use an additional kind of token, called TERM (described below). 
The TF algorithm (run as a subroutine of the LE algorithm) does not distinguish between the two kinds of tokens. However, the $\Term$ part of the LE notices when a TERM token participates in a formation (second implementation) or is received (first implementation)
and makes use of this knowledge.

\Subparagraph{First Implementation.}
The leader $r$ (the node who performed the team formation) creates one TERM token.
The underlying TF algorithm treats the TERM token exactly like any other token. However, any non-leader node $v\neq r$ performing $\ImpL 1$ that receives a TERM token, note that it ($v$) is not a leader.

Recall that by the accumulation feature of the algorithm, all the tokens that remain in the networks after the team formation eventually are held (forever) by a single node $v$. If $v\neq r$, then $v$ knows it is not the leader. Let $w$ be any node other than $v$ and $r$. Either $w$ did not join $\Cand$, or it did join but at some point sent its token to another node. In both cases, $w$ knows that it is not the leader. It is easy to show that this implementation does not increase the order of the complexities of the LE election algorithm beyond that of $\ImpL 1$.

\Subparagraph{Second Implementation.}
In this implementation of $\Term$,
%let us call the tokens generated in step 3 of Algorithm $\ImpL 1$ ``blue tokens.'' 
when the leader
node deletes $\sigma$ tokens, it generates $\sigma-1$ TERM tokens and continues to perform TF.
Any other node that deletes $\sigma$ tokens that include some 
$x$ TERM tokens and some $\sigma-x$ LE ones generates $x$ TERM ones.
Any node that
is \emph{not} the leader but ever sees a TERM token, changes its status to ``not-leader.'' The arguments for correctness are similar to those for the first implementation. Note that introducing those new tokens increases the complexities of the algorithm by a logarithmic factor.

%%%%%%%%%%%%%%%%%%%%%%%%%%%%%%%%%%%%%%%%%%%%%start \ignore
%%%Shay: I removed most of the details of the sigma-1 red tokens descriptions. I left them here commented out, just in case%%%%%%%%%%%%%%
\long\def\comment#1{}\long\def\comment#1{}
\comment{
\begin{observation}
\label{obs:blackbox}
\begin{enumerate}
    \item 

Algorithm $\ImpL 1$ plus the second implementation of $\Term$ 
elects a unique leader. Eventually, the leader sets its status to ``leader'' and every other non-faulty node sets its own status to ``not-leader.''

\item
The message complexity of the LE algorithm with the second implementation of $\Term$ is
   $O (\sqrt{n \log n}\cdot\log ^2n)$ in expectation and
$O (\sqrt{n \log^3 n} \cdot \log^2 n)$
messages whp. The time complexity is
$O (\log^2 n)$.
\end{enumerate}

\end{observation}

The unique leader follows from \Lem{}~\ref{lem:unique}. The time complexity follows from observing that the total number of tokens (LE and TERM) is $O(\sigma^2)$. This allows $(\sigma)$ deletion events, each with time complexity $(\log n +\sigma)$
(by
\Thm{}~\ref{thm:main-informal}, recall also that $\sigma=O(\log n)$.)
Similarly, for the message complexity, multiply the TF message complexity per token by $\sigma^2$ tokens.  
}%comment
%%%%%%%%%%%%%%%%%%end \ignore
  
\subsubsection{Explicit LE}
 \label{sec:explicit}
So far, a non-leader node may not know which of its edges leads to the leader. To make the algorithm solve explicit leader election, the elected leader $r$ sends each of its neighbors one message, notifying the neighbor that $r$ is the leader. This adds $n$ messages and a single time unit to the complexities of the election.

\subsection{vTF and the Assembly of Complex Teams}
\label{sec:vTF}
Consider a team that requires 
$\sigma_1,\sigma_2,...,\sigma_k$
tokens of colors
$v_1,v_2,...,v_k$
correspondingly.
The vTF algorithm uses FT multiple times. First, it
runs in parallel $k$ versions of TF, each version $i$ deals only with tokens with color $c_i$ and ignores all the other kinds of tokens. When version $i$ deletes $\sigma_i$ tokens, the vTF algorithm generates a new token that we term super-token of color $c_i$.

We are left with the task of ``super-combining,'' that is, combining one super-token of each color into the final team. This is a much easier task than TF, since the different colors can be used for breaking the symmetry. 

Nevertheless, to demonstrate yet another use of TF, let us tailor TF slightly to accomplish super-combining too. 
We use $\sigma=2$
and change TF slightly. In TFdiff, the slightly tailored TF, tokens can be of one of two colors, and TFdiff only creates and deletes pairs consisting of one token of each color. The required change to the TF algorithm is very minor, and we leave the details to the reader.
(Hint: a node holding a token of some color $c$ is considered for transfer only to another node with a token of the other color we couple with $c$.)

 For simplicity of the description, let us assume that $k$ is a power of $2$. Generalizing to other values is trivial. 
Let us generalize TFdiff to a task 
(called Multi TFdiff) 
that the super combining will use $\log k$ times. 
The Multi TFdiff algorithm
 is given tokens of some $k^j$ different colors ($1\leq j < \log k$)
$c^j_1, c^j_2, ... c_{k^j}$
and combines them into super-tokens of $k^{j+1}=k^{k}/2$ new colors $c^{j+1}_1=c^j_{1,2},c^{j+1}_2 c^j_{3,4}, ... c^{j+1}_{k^j/2}= c^j_{k^j-1, k^j}$. It is easy to see that Multi TFdiff can be accomplished just by $k^j/2$ activations of the TFdiff algorithm.
Super-combining is now accomplished by $\log k-1$ activations of Multi TFdiff, where $k_1=k$, the tokens for the first application are the original super-tokens and the color $c_{i}$ of the $l+1$th activation $1<l\leq \log k-1$ results from colors $c_{2i-1},c_{2i}$ activation.

It is easy to see that the complexity of TFdiff is the same as that of TF.
The message complexity follows from the fact that a token may participate
in an activation of TF, incurring the message complexity of TF. Then \emph{two} tokens yield a token that participates in another activation, adding half the message complexity of TF per token, and so forth.

For the time complexity,
note that the vTF algorithm uses $O(k)$ activations of either the TF or vTF algorithms. Hence, the time for an activation is not larger than the time shown for TF multiplied by $O(k)$. However, only the first $k$ are started as a first batch at time zero  of the vTF algorithm (ending after $O(k)$ times the time of TF). An activation of the second batch is started when two specific activations of the first batch delete tokens; hence, the second batch starts at time zero + $O(k)$ times the time of TF and so forth for the next batches. There are $\log k$ batches, implying the claimed time complexity.
 %\end{obs:vTF}

\subsection{Agreement with Failures}
As implicit agreement is directly reducible to the aforementioned
implicit leader election problem, we obtain an \emph{asynchronous} implicit
agreement algorithm that withstands
$n (1 / 2 - \epsilon)$
initial failures, whose message and time complexities match the (synchronous)
state-of-the-art of up to logarithmic factors.
In fact, our message complexity upper bound matches, up to logarithmic
factors, the lower bound of \cite{DBLP:conf/podc/AugustineMP18} (that holds already
in the synchronous case).

  \def\AS{A}
\subsection{Tokens Associated with Additional Information}

%Intuitively, when the environment injects a token $T$, it may wish to inject some associated information $\AS(T)$ to be used by the node that performs the team formation at the time of the formation. For example, each token may carry a piece of some secret such that any $\sigma$ pieces are enough to recover the secret. 
Let us assume that the environment keeps the invariant that the $\AS(T)$ is unique as long as $T$ stays in the system (does not participate in a team formation). Had we sent $\AS(T)$ together with $T$, this could have increased the message (and the time) complexity since it would not have been possible to send many tokens in a message of size $O(\log n)$ bits. (If $\sigma$ is large, then sending $\sigma$
times the size of an $\AS(T)$ cannot be sent in one message).

Let us describe informally an algorithm (to be called) TFA that treats associated information efficiently, nevertheless.
It uses the trace tree additional feature of the TF algorithm. 

TFA will use the TF algorithm, as well as the trace tree 
(of \Sect{}~\ref{sec:trace-tree-mechanism-details}). That is, consider a token $T$ that is injected into a node $v$. Only the token is sent in the TF algorithm, while $\AS(T)$ is kept in $v.L(T)$
(see \Sect{}~\ref{sec:trace-tree-mechanism-details}) where $v$ keeps track of its tokens. When $T$ participates in a team formation in some node $w$, node $v$ learns of this event using the reporting mechanism of \Sect{}~\ref{sec:trace-tree-mechanism-details}. It is left to transfer $\AS(T)$ to $w$ efficiently.

Recall that the TF algorithm has the property that a node does not send tokens to a node that holds no tokens. Hence, at the time of the team formation, $w$ is holding, among other tokens, a token $T'$ that was injected into $w$. Recall $\AS(T')$ is unique. TFA uses $\AS(T')$ temporarily as a unique ID of $w$. First, it is easy to change the mechanism of \ref{sec:trace-tree-mechanism-details} slightly, so that $v$ receives not only the information that $T$ participated in a team formation, but also 
$\AS(T')$. Node $v$ then tells all the utility nodes in $U(v)$ (those connected to $v$ (in the primary-utility graph) that it wishes to speak to the node whose temporary ID is $\AS(T')$. To complement that, node $w$ tells all the utility nodes in $U(w)$ that it is the one holding $\AS(T')$. Hence, some utility nodes can offer $v$ channels to $w$. Node $v$ selects one of these channels and sends $\AS(T)$ to $w$. 

Note that the message complexity of TFA is that of TF, plus that of the reporting (which is of the same order), plus $O((\sigma+1) |U(w)|)$ to notify
the utility nodes regarding $\AS(T')$
plus $O(\sigma)$ to send the associated information itself.
Similarly, the time complexity grows only by a constant factor (for the reporting) plus $O(\sigma)$.

%%%%%%%%%%%%%%%%%%%%%%%%%%%%%%%%%%%%%%%%%%%%%%%%%%%%%%%%%%%%%%%%%%%%%%%%%%%%%%
\section{Discussion}
\label{sec:discussion}
%%%%%%%%%%%%%%%%%%%%%%%%%%%%%%%%%%%%%%%%%%%%%%%%%%%%%%%%%%%%%%%%%%%%%%%%%%%%%%
We introduce and study the TF problem and
demonstrate its usefulness. We consider an asynchronous message-passing system assuming a complete network topology. 
It would be interesting to study TF
in other models of computation assuming, for example,
a more general network topology, synchronous systems, a situation where the nodes and/or the tokens are not anonymous,
assume that the nodes know the identities of their neighbors 
(i.e., the $KT_1$ model \cite{awerbuch1990trade}), changing the
assumption about the message size, etc.

Even for complete networks, interesting problems remain. Can the lower bound be improved? The TF solution helped us improve the known results for some applications.
Does it mean that the complexities of these applications are inherently similar to that of TF? or are there upper bounds for some of the applications better than the lower bound for TF?
We addressed the case of a small team size $\sigma$ that seems more practical and challenging; how about the case of a large $\sigma$?  The complexity per token in the case of a large $\sigma$ may be much lower. For example, if $\sigma=n+1$, no team formation is possible unless some node gets at least $2$ tokens injected. Hence, it makes sense that an algorithm at a node will do nothing as long as it holds a smaller number of tokens
(until approached by another node).

We generalized the TF problem to address a team represented by vectors such as $x_1$ tokens of color red \textit{and} $x_2$ tokens of color blue (e.g., $x_1$ sorcerers and $x_2$ warriors). How about using other logical operations when composing teams? For example, $x_1$ sorcerers \textit{and}  either $x_2$ warriors \textit{or} $x_3$ thieves?

Our algorithms have several novel properties. These properties may be useful for studying other problems, not just TF. One of the properties is, in the context of long-lived problems, resilience to failures \textit{stronger} than initial failures. This might be useful for other long-lived problems. It would be interesting to compare the power of the model that allows such failures to the power of other failure models.

Another important notion we introduced in the context of randomized algorithms is that of a \textit{forgetful} algorithm.
While, for deterministic algorithms, this notion is the same as that of a memoryless algorithm, 
for randomized algorithms, this notion is not as strong as one
may expect, as we assume that a node will never forget its initial coin tosses. We use this slightly limited version to ensure our algorithms succeed with high probability, even for infinite executions. 
Had we sufficed with high probability for each team formation separately, 
then we would have been able to remove this limitation and ensure that a node does
not remember anything (not even its initial coin tosses) in the TF algorithm.\footnote{When allowing a node to fully forget everything,
in an infinite number of team formation where each succeeds whp
separately, some formations might not succeed -- our slightly limited notion of forgetfulness prevents this situation from happening.}

Requiring that nodes forget their past behavior in some cases
usually leads to simple solutions, saves the cost of maintaining data structures during quiescent times, and eases the recovery in case of catastrophes, since no data structure needs to be recovered.
We emphasize again that 
in long-lived problems (like mutual exclusion and team formation), a naive solution where a leader is elected and forever coordinates all activities is not forgetful (or memoryless), even in fault-free models.

Let us note that various related notions appeared in the literature.
For example, notions of ``not keeping information about past behavior'' are promoted in practice, e.g., \cite{stateless}, and theory, e.g., \cite{DBLP:conf/icdcn/JayantiJJ20}, because they are considered more scalable and easier to recover. 

An interesting property of our TF algorithm seems to be related to privacy and requires further discussion (we have not analyzed this property formally). Even in anonymous networks, nodes in many algorithms (but not the algorithm of this paper) choose identities that are unique whp. Suppose that node $u$ sends message $M_1$ to node $v$ and messages $M_2$ to node $w$. Nodes $v$ and $w$ may now compare $M_1$ and $M_2$ knowing that they both originated from $u$. This may harm $u$'s privacy. For example, $u$ may not want other nodes to know that its token participated in a team since this team has some private purpose. Let us say that the algorithm that has this property is \textit{inconspicuous}. It would be interesting to define it formally,
prove formally that our algorithm satisfies this property, 
and study the power of models that allow only such algorithms.

Our generalization of initial failures is not weaker than crash failures. When there are only crash failures, in an infinite execution, a node either never crashes or it is alive only for a finite time (the finite prefix before it crashes). This claim is not valid with our ``fragile failures'' since there is no limit on how many times the status of a node can be toggled.
Another observation regarding crash failures is that TF is not solvable, even with randomization, in the presence of a single crash failure. This observation follows from the fact that a node may crash while holding a token or just before a token is sent to it.
(This impossibility holds even with the restriction that nodes with tokens may not crash.)
It is possible to modify the liveness condition in the definition of TF so that the above argument does not apply.

It is interesting to compare our adversarial model to that of \cite{DBLP:conf/atal/DatarNA23}, which distinguishes between weakly adaptive and strongly adaptive adversaries. In their synchronous setting, a strongly adaptive adversary can exploit the random bits that determine the nodes' next moves in that round. A natural asynchronous analogue is an adversary that, at any time, knows the random bits that would be used in each possible next activation event; in their terminology, our adversary is weakly adaptive.

Our current algorithm does not withstand this stronger adversary: it can violate the intersection property of \Lem{}~\ref{lem:pu-graph} by seeing in advance how the primary-utility neighborhoods of two primary nodes intersect, making their common utility nodes fragile (and faulty when the execution starts), and then injecting $\sigma/2$ tokens into each of the two primary nodes. As a result, these primary nodes may fail to discover each other, preventing team formation despite the presence of $\sigma$ tokens in the system. It would be interesting to determine whether one can adapt our algorithm to withstand this stronger adversary, or whether an entirely different approach is required.

Finally, 
we point out an extensively studied biological phenomenon similar to the one captured by TF.
\textit{Quorum sensing} is a process in which bacteria can sense that
the number of bacteria ready to release
toxins (which cause disease in plants, animals, and humans) 
has reached a certain threshold.
This enables them to release the toxins at approximately the same time
\cite{BB2019,FWG1994}.
Several insects, like ants and honey bees, have been shown to also use quorum sensing in a process that resembles collective decision-making \cite{Pratt2005,SV2004}.
Unlike the trigger counting problem (mentioned in the Introduction), here, most participants must know
when the threshold is reached.

%%%%%%%%%%%%%%%%%%%%%%%%%%%%%%%%%%%%%%%%%%%%%%%%%%%%%%%%%%%%%%%%%%%%%%%%%%%%%%
%%%%%%%%%%%%%%%%%%%%%%%%%%%%%%%%%%%%%%%%%%%%%%%%%%%%%%%%%%%%%%%%%%%%%%%%%%%%%%

\section{Pseudocode}
\label{sec:pseudocode}

This section presents the pseudocode of \AlgTF{}. \Sect{}~\ref{sec:pseudocode-principal} gives the pseudocode of the principal layer (Algorithms~\ref{alg:principal-layer}--\ref{alg:procedure-begin-new-phase}), together with its message and variable tables (Tables~\ref{tab:messages-principal} and~\ref{tab:variables-principal}).
\Sect{}~\ref{sec:pseudocode-channel} does the same for the channel layer (Algorithms~\ref{alg:channel-layer-primary}--\ref{alg:procedure-create-channel} and Tables~\ref{tab:messages-channel} and~\ref{tab:variables-channel}).

\subsection{Principal Layer}
\label{sec:pseudocode-principal}

\begin{table}[ht]
\centering
\begin{tabular}{@{}c|ccl@{}}
\toprule
& message  & sender phase type & semantics
\\
\hline

\multirow{2}{*}{request} &
$\msgTokensPlease$ &
$\Center$ & 
requests tokens
\\
&
$\msgWaiting$ &
$\Arm$ &
asks whether to wait before phase ends
\\

\hline

\multirow{3}{*}{response} &
$\msgNoTransport$ &
$\Center$ &
declines token request
\\
&
$\msgTransport(k)$ &
$\Arm$ &
transports $k$ tokens in response
\\
&
$\msgGoOn$ &
$\Arm$ &
tells peer not to wait
\\

\bottomrule
\end{tabular}
\caption{\label{tab:messages-principal}%
The messages used by \AlgTF{}'s principal layer.
The sender (resp., receiver) is a primary node
$p \in P$
(resp.,
$p' \in P$)
such that $p$ and $p'$ share a channel
$\chi \in \Channels(p) \cap \Channels(p')$.}
\end{table}

\begin{table}[ht]
\centering
\begin{tabular}{@{}ccp{0.5\textwidth}@{}}
\toprule
variable & initial value & semantics
\\
\midrule

$p.\varTokens$ &
$0$ &
the number of tokens held by $p \in P$
\\

$p.\varPhaseType$ &
$\bot$ &
current phase type of
$p \in P$
\\

$p.\varAwaitingResponse(\chi)$ &
$\False$ &
whether
$p \in P$
awaits response over 
$\chi \in \Channels(p)$
before ending current phase
\\

$p.\varDelayingResponse(\chi)$ &
$\False$ &
whether
$p \in P$
owes a response over
$\chi \in \Channels(p)$
to a past
$\msgWaiting$
request received over $\chi$
\\

\bottomrule
\end{tabular}
\caption{\label{tab:variables-principal}%
The local variables accessed by \AlgTF{}'s principal layer.}
\end{table}

\FloatBarrier

\begin{algorithm}[ht]
\begin{algorithmic}[1]

\Event{$k > 0$ tokens are injected into $p$ while non-operational ($\Channels(p) = \emptyset$)}
  \State{$\varTokens \gets \varTokens + k$}
  \State{notify channel layer that token count increased} 
  \If{$\varTokens \geq \sigma$}
    \State{call $\ProcFormTeams()$}
  \EndIf
\EndEvent

\Event{the channel layer notifies about a channel $\chi$ that is added to $\Channels(p)$}
  \State{$\varAwaitingResponse(\chi) \gets \False$}
  \State{$\varDelayingResponse(\chi) \gets \False$}
  \If{$\Channels(p) - \{\chi\} = \emptyset$} 
    \State{call $\ProcBeginNewPhase()$} 
  \EndIf
\EndEvent

\Event{the channel layer notifies about a channel $\chi$ that is removed from $\Channels(p)$}
  \State{delete $\varAwaitingResponse(\chi)$ and $\varDelayingResponse(\chi)$}
  \State{call $\ProcCheckEndPhase()$}
\EndEvent

\Event{$p$ receives request message 
$\msgTokensPlease$
over channel $\chi \in \Channels(p)$}
  \If{$\varPhaseType = \Center$}
    \State{send $\msgNoTransport$ over $\chi$}
  \ElsIf{$\varPhaseType = \Arm$}
    \State{send $\msgTransport(\varTokens)$ over $\chi$}
    \State{$\varTokens \gets 0$}
    \State{notify channel layer of non-busy status}
  \EndIf
\EndEvent

\Event{$p$ receives request message $\msgWaiting$ over channel $\chi \in \Channels(p)$}
  \If{$\varPhaseType = \Center$}
    \State{$\varDelayingResponse(\chi) \gets \True$}
  \ElsIf{$\varPhaseType = \Arm$}
    \State{send $\msgGoOn$ over $\chi$}
  \EndIf
\EndEvent

\Event{$p$ receives response message
$\mu \in \left\{ \msgGoOn,\, \msgNoTransport,\, \msgTransport(k) \right\}$
over channel
$\chi \in \Channels(p)$}

    \If{$\mu = \msgTransport(k)$}
        \State{$\varTokens \gets \varTokens + k$}
        \State{notify channel layer that token count increased}
    \EndIf
    
    \If{($\varPhaseType = \Arm$ and $\mu = \msgGoOn$) or ($\varPhaseType = \Center$ and $\mu \neq \msgGoOn$)}
      \State{$\varAwaitingResponse(\chi) \gets \False$}
      \State{call $\ProcCheckEndPhase()$}
    \EndIf
\EndEvent
\end{algorithmic}
\caption{\label{alg:principal-layer}%
$\AlgTF{}$'s principal layer from the perspective of a primary node
$p \in P$.}
\end{algorithm}

\begin{algorithm}[ht]
\begin{algorithmic}[1]

\If{$\varPhaseType \neq \bot$ and 
$\varAwaitingResponse(\chi) = \False$ for all
$\chi \in \Channels(p)$}
    \If{$k>0$ tokens were injected during the phase}
      \State{$\varTokens \gets \varTokens + k$}
      \State{notify channel layer that token count increased} 
    \EndIf
    
    \If{$\varTokens \geq \sigma$}
      \State{call $\ProcFormTeams()$}
    \ElsIf{$\Channels(p) = \emptyset$}
      \State{$\varPhaseType \gets \bot$}
    \Else
      \State{call $\ProcBeginNewPhase()$}
    \EndIf
\EndIf
\end{algorithmic}
\caption{\label{alg:procedure-check-end-phase}%
$\ProcCheckEndPhase$ from the perspective of a primary node
$p \in P$.}
\end{algorithm}

\begin{algorithm}[ht]
\begin{algorithmic}[1]

\State{form $\lfloor \varTokens / \sigma \rfloor$ teams}
\State{$\varPhaseType \gets \bot$}
\State{$\varTokens \gets 0$} \State{notify channel layer of non-busy status}

\If{there were $r>0$ remainder tokens}
    \State{treat remainder tokens as a (fake) injection of $r$ tokens shortly after this event}
\EndIf
\end{algorithmic}
\caption{\label{alg:procedure-form-teams}%
$\ProcFormTeams$ from the perspective of a primary node
$p \in P$.}
\end{algorithm}

\begin{algorithm}[ht]
\begin{algorithmic}[1]

\State{pick $\varPhaseType \in \{ \Center, \Arm \}$ uniformly at random}
\For{each $\chi \in \Channels(p)$}
  \If{$\varPhaseType = \Center$}
    \State{$\mu \gets \msgTokensPlease$}
  \Else 
    \State{$\mu \gets \msgWaiting$}
    \If{$\varDelayingResponse(\chi) = \True$}
      \State{send $\msgGoOn$ over $\chi$}
    \EndIf
  \EndIf
  \State{$\varDelayingResponse(\chi) \gets \False$}
  \State{$\varAwaitingResponse(\chi) \gets \True$}
  \State{send $\mu$ over $\chi$}
\EndFor
\end{algorithmic}
\caption{\label{alg:procedure-begin-new-phase}%
$\ProcBeginNewPhase$ from the perspective of a primary node
$p \in P$.}
\end{algorithm}

\FloatBarrier

\subsection{Channel Layer}
\label{sec:pseudocode-channel}

\begin{table}[ht]
\centering
\begin{tabular}{@{}cccl@{}}
\toprule
message & sender & receiver & semantics
\\
\midrule
$\msgBusy$ &
$p \in P$ &
$u \in U(p)$ &
$p$ is busy
\\
$\msgBusyAck$ &
$u \in U$ &
$p \in P(u)$ &
$u$ acknowledges that $p$ is busy
\\
$\msgTokensUpdate(k)$ &
$p \in P$ &
$u \in U(p)$ &
$p$ (currently) holds $k$ tokens
\\
$\msgNotBusy$ &
$p \in P$ &
$u \in U(p)$ &
$p$ is no longer busy
\\
$\msgChannel$ &
$u \in U$ &
$p \in P(u)$ &
$u$ mediates a channel of $p$
\\
$\msgNoChannel$ &
$u \in U$ &
$p \in P(u)$ &
$u$ no longer mediates a channel of $p$
\\
$\msgChannelAck$ &
$p \in P$ &
$u \in U(p)$ &
automatic response to incoming $\msgChannel$
\\
\bottomrule
\end{tabular}
\caption{\label{tab:messages-channel}%
The messages used by \AlgTF{}'s channel layer.}
\end{table}

\begin{table}[ht]
\centering
\begin{tabular}{@{}ccp{0.65\textwidth}@{}}
\toprule
variable & initial value & semantics
\\
\midrule
$p.\varTokens$ &
$0$ &
the number of tokens held by
$p \in P$
\\
$p.\varBusyAcked$ &
$\emptyset$ &
the set of nodes
$u \in U(p)$
that acknowledged that
$p \in P$
is busy
\\
$p.\varMediators$ &
$\emptyset$ &
the set of nodes
$u \in U(p)$
that mediate channels of
$p \in P$
\\
$u.\varBusyTokens(p)$ &
$\bot$ &
the perspective of
$u \in U$
on the number of tokens held by busy node
$p \in P(u)$
\\
$u.\varChannel$ &
$\emptyset$ &
the nodes in $P(u)$ that are mediated by
$u \in U$
\\
$u.\varDiff(p)$ &
$0$ &
the number of $\msgChannelAck$ messages that
$u \in U$
``expects'' to receive from
$p \in P$
\\
\bottomrule
\end{tabular}
\caption{\label{tab:variables-channel}%
The local variables accessed by \AlgTF{}'s channel layer.}
\end{table}

\FloatBarrier

\begin{algorithm}[ht]
\begin{algorithmic}[1]
\Event{%
the principal layer notifies about a change of $\varTokens$ from
$\varTokens = 0$
to
$\varTokens > 0$
}
  \State{%
send $\msgBusy$ to each
$u \in U(p)$}
\EndEvent
\Event{%
$p$ receives $\msgBusyAck$ from
$u \in U(p)$}
  \If{%
$\varTokens > 0$}
    \State{%
$\varBusyAcked \gets \varBusyAcked \cup \{ u \}$}
    \State{%
send $\msgTokensUpdate(\varTokens)$ to $u$}
  \Else
    \State{%
send $\msgNotBusy$ to $u$}
  \EndIf
\EndEvent
\Event{%
the principal layer notifies about a change of $\varTokens$ from
$0 < \varTokens = k$
to
$\varTokens > k$}
  \State{%
send $\msgTokensUpdate(\varTokens)$ to each
$u \in \varBusyAcked$}
\EndEvent
\Event{%
the principal layer notifies about a change of $\varTokens$ from
$\varTokens > 0$
to
$\varTokens = 0$
}
  \State{send $\msgNotBusy$ to each
$u \in \varBusyAcked - \varMediators$}
  \State{$\varBusyAcked \gets \varMediators$}

  \For{each node $u \in \varMediators$}
    \State{$\varMediators \gets \varMediators - \{ u \}$}
    \State{send $\msgNotBusy$ to $u$}
    \State{%
    $\varBusyAcked \gets \varBusyAcked - \{ u \}$}
    \State{notify principal layer that channel $\chi$ that $u$ mediated for $p$ was removed from $\Channels(p)$}
  \EndFor

\EndEvent
\Event{%
$p$ receives $\msgChannel$ from
$u \in U(p)$}
  \State{%
send $\msgChannelAck$ to $u$}
  \If{%
$\varTokens > 0$
and
$u \in \varBusyAcked$}
    \State{%
      $\varMediators \gets \varMediators \cup \{ u \}$}
    \State{notify principal layer that $u$ mediates a new channel $\chi$ for $p$ that was added to $\Channels(p)$}
  \EndIf
\EndEvent
\Event{%
$p$ receives $\msgNoChannel$ from
$u \in U(p)$}
  \State{%
$\varMediators \gets \varMediators - \{ u \}$}
  \State{notify principal layer that channel $\chi$ that $u$ mediated for $p$ was removed from $\Channels(p)$}
\EndEvent
\Event{%
$p$ receives a (principal layer) relayed message $\mu$ from
$u \in U(p)$}
  \If{%
$u \notin \varMediators$}
    \State{%
screen $\mu$}\Comment{%
$\mu$ is not passed to the principal layer}
  \EndIf
\EndEvent
\end{algorithmic}
\caption{\label{alg:channel-layer-primary}%
\AlgTF{}'s channel layer from the perspective of a primary node
$p \in P$.}
\end{algorithm}

\begin{algorithm}[ht]
\begin{algorithmic}[1]
\Event{%
$u$ receives $\msgBusy$ from
$p \in P$}
  \If{%
$\varBusyTokens(p) = \bot$}
    \State{%
$\varBusyTokens(p) \gets 0$}
    \State{%
send $\msgBusyAck$ to $p$}
  \EndIf
\EndEvent
\Event{%
$u$ receives $\msgTokensUpdate(k)$ from
$p \in P$}
  \State{%
$\varBusyTokens(p) \gets k$}
  \State{%
call $\ProcCreateChannel()$}
\EndEvent
\Event{%
$u$ receives $\msgNotBusy()$ from
$p \in P$}
  \State{%
$\varBusyTokens(p) \gets \bot$}
  \If{%
$p \in \varChannel$
}
    \State{%
$\{ p' \} \gets \varChannel - \{ p \}$}
    \State{%
send $\msgNoChannel$ to $p'$}
    \State{%
$\varChannel \gets \emptyset$}
    \State{%
call $\ProcCreateChannel()$}
  \EndIf
\EndEvent
\Event{%
$u$ receives $\msgChannelAck$ from
$p \in P$}
  \State{%
$\varDiff(p) \gets \varDiff(p) - 1$}
\EndEvent
\Event{%
$u$ receives a (principal layer) relayed message $\mu$ from 
$p \in \varChannel$}
  \If{%
$\varDiff(p) = 0$}
    \State{%
$\{ p' \} \gets \varChannel - \{ p \}$}
    \State{%
relay $\mu$ to $p'$}
  \Else
    \State{%
screen $\mu$}
  \EndIf
\EndEvent
\end{algorithmic}
\caption{\label{alg:channel-layer-utility}%
\AlgTF{}'s channel layer from the perspective of a utility node
$u \in U$.}
\end{algorithm}

\begin{algorithm}[ht]
\begin{algorithmic}[1]
\State{%
$Q \gets \left\{
p \in P
\, : \,
\varBusyTokens(p) \neq \bot \, \land \, \varBusyTokens(p) > 0 \right\} $}
\If{%
$\varChannel = \emptyset$
and
$|Q| \geq 2$}
  \State{%
$p_{1} \gets \argmax_{p \in Q} \varBusyTokens(p)$}\Comment{%
apply first-come-first-served tie breaking}
  \State{%
$p_{2} \gets
\argmax_{p \in Q - \{ p_{1} \}} \varBusyTokens(p)$}\Comment{%
apply first-come-first-served tie breaking}
  \For{%
$i \in \{ 1, 2 \}$}
    \State{%
$\varChannel \gets \varChannel \cup \{ p_{i} \}$}
    \State{%
send $\msgChannel$ to $p_{i}$}
    \State{%
$\varDiff(p_{i}) \gets \varDiff(p_{i}) + 1$}
  \EndFor
\EndIf
\end{algorithmic}
\caption{\label{alg:procedure-create-channel}%
$\ProcCreateChannel$ from the perspective of a utility node
$u \in U$.}
\end{algorithm}

%%%%%%%%%%%%%%%%%%%%%%%%%%%%%%%%%%%%%%%%%%%%%%%%%%%%%%%%%%%%%%%%%%%%%%%%%%%%%%
%%%%%%%%%%%%%%%%%%%%%%%%%%%%%%%%%%%%%%%%%%%%%%%%%%%%%%%%%%%%%%%%%%%%%%%%%%%%%%

% \nolinenumbers

\clearpage

\bibliography{references.bib}

\end{document}